\newcommand{\iflong}[1]{#1}
\newcommand{\ifshort}[1]{}
\newcommand\blfootnote[1]{%
  \begingroup
  \renewcommand\thefootnote{}\footnote{#1}%
  \addtocounter{footnote}{-1}%
  \endgroup
}
\theoremstyle{theorem}
\newtheorem{observation}[theorem]{Observation}
\g@addto@macro\bfseries{\boldmath}
\newtheorem{fact}[theorem]{Fact}
\newcommand{\bigoh}{\ensuremath{{\mathcal O}}}
\newcommand{\cH}{\ensuremath{\mathcal H}}
\newcommand{\cG}{\ensuremath{\mathcal G}}
\newcommand{\cA}{\mathcal{A}}
\newcommand{\cB}{\mathcal{B}}
\newcommand{\cC}{\mathcal{C}}
\newcommand{\calP}{\mathcal{P}}
\newcommand{\tw}{\operatorname{tw}}
\newcommand{\aV}{V_{\text{add}}}
\newcommand{\aE}{E_{\text{add}}}
\newcommand{\aEpar}{\aE^{H}}
\newcommand{\aEfree}{\aE^{\neg H}}
\newcommand{\iV}{V_{\text{inc}}}	
\newcommand{\pH}{\cH^{\times}}	
\newcommand{\pG}{\cG^{\times}}
\newcommand{\III}{\mathcal{I}}
\newcommand{\dr}[2][]{\ensuremath{\mathcal{#2}
\ifthenelse { \equal {#1} {} }  
{}   
{_{#1}}   
}
}
\author{Eduard Eiben}{Department of Computer Science, Royal Holloway, University of London, Egham, United Kingdom}{eduard.eiben@rhul.ac.uk}{https://orcid.org/0000-0003-2628-3435}{}
\author{Robert Ganian}{Algorithms and Complexity Group, TU Wien, Vienna, Austria}{rganian@ac.tuwien.ac.at}{https://orcid.org/0000-0002-7762-8045}{Robert Ganian acknowledges support by the Austrian Science Fund (FWF, project P31336).}
\author{Thekla Hamm}{Algorithms and Complexity Group, TU Wien, Vienna, Austria}{thamm@ac.tuwien.ac.at}{}{Thekla Hamm acknowledges support by the Austrian Science Fund (FWF, projects P31336 and W1255-N23).}
\author{Fabian Klute}{Algorithms and Complexity Group, TU Wien, Vienna, Austria}%
{fklute@ac.tuwien.ac.at}{https://orcid.org/0000-0002-7791-3604}{}
\author{Martin Nöllenburg}{Algorithms and Complexity Group, TU Wien, Vienna, Austria}{noellenburg@ac.tuwien.ac.at}{https://orcid.org/0000-0003-0454-3937}{}
\authorrunning{E.~Eiben, R.~Ganian, T.~Hamm, F.~Klute, M.~Nöllenburg}
\keywords{Extension problems, 1-planarity, parameterized algorithms}
\title{Extending Partial 1-Planar Drawings}
\titlerunning{Extending Partial 1-Planar Drawings}
\begin{document}
\maketitle
\begin{abstract}
Algorithmic extension problems of partial graph representations such as planar graph drawings or geometric intersection representations are of growing interest in topological graph theory and graph drawing. 
In such an extension problem, we are given a tuple $(G,H,\cH)$ consisting of a graph $G$, a connected subgraph $H$ of $G$ and a drawing $\cH$ of $H$, and the task is to extend $\cH$ into a drawing of $G$ while maintaining some desired property of the drawing, such as planarity.

In this paper we study the problem of extending partial 1-planar drawings, which are drawings in the plane that allow each edge to have at most one crossing. 
In addition we consider the subclass of IC-planar drawings, which are 1-planar drawings with independent crossings.
Recognizing 1-planar graphs as well as IC-planar graphs is \NP-complete and the \NP-completeness easily carries over to the extension problem. Therefore, our focus lies on establishing the tractability of such extension problems in a weaker sense than polynomial-time tractability. Here, we show that both problems are fixed-parameter tractable when parameterized by the number of edges missing from $H$, i.e., the edge deletion distance between $H$ and $G$. 
The second part of the paper then turns to a more powerful parameterization which is based on measuring the vertex+edge deletion distance between the partial and complete drawing, i.e., the minimum number of vertices and edges that need to be deleted to obtain $H$ from $G$.
	\iflong{\blfootnote{\noindent \emph{A shortened version of this article has been accepted for presentation and publication at the 47th International Colloquium on Automata, Languages and Programming (ICALP 2020).}}}
\end{abstract}

\newpage
\newcommand{\edg}{embedding graph}
	\section{Introduction}
	In the last decade, algorithmic extension problems of partial \emph{planar} graph drawings have received a lot of attention in the fields of graph algorithms and graph theory as well as in graph drawing and computational geometry. 
	In this problem setting, the input consists of a planar graph $G$, a connected subgraph $H$ of $G$, and a planar drawing $\cH$ of $H$; 
	the question is then whether $\mathcal H$ can be extended to a planar drawing of $G$.
	This extension problem is motivated from applications in network visualization, where important patterns (subgraphs) are required to have a special layout, or where new vertices and edges in a dynamic graph must be inserted into an existing (partial) connected drawing, which must remain stable to preserve its mental map~\cite{mels-lam-95}.	
\iflong{
	A major result on the extension of partial planar drawings is the linear-time algorithm of Angelini et al.~\cite{adfjkp-tppeg-15} which can answer the above question as well as provide the desired planar drawing of $G$ (if it exists)---showing that constrained inputs do not increase the complexity of planarity testing. 	Their result is complemented by a combinatorial characterization via forbidden substructures~\cite{jkr-ktppeg-13}.}%
\ifshort{A major result on the extension of partial planar drawings is the linear-time algorithm of Angelini et al.~\cite{adfjkp-tppeg-15} which can answer the above question as well as provide the desired planar drawing of $G$ (if it exists).
The result of Angelini et al.\ contrasts other extension problems in the context of computational geometry and graph drawing, which are typically \NP-complete~\cite{p-epsd-06,cegl-dgpwpofpa-12,mnr-ecpdg-15,cfglms-dpespg-15,ddf-eupgd-19,br-pclp-17,kkorss-eprpuig-17,kkosv-eprig-17,kkos-eprscg-15,kkkw-eprfgpg-12,cfk-eprcg-13,cdkms-crpgeprh-14,cggkl-pvrep-18}, even in settings where recognition is polynomial-time solvable. 
On the other end of the planarity spectrum, Arroyo et al.~\cite{adp-esd-19,Arroyo2019-ext1edge} studied drawing extension problems, where the number of crossings per edge is not restricted, yet the drawing must be \emph{simple}, i.e., any pair of edges can intersect in at most one point. 
They showed that the simple drawing extension problem is \NP-complete~\cite{adp-esd-19}, even if just one edge is to be added~\cite{Arroyo2019-ext1edge}.
			
	In this paper, we study the algorithmic extension problem of partial drawings of 1-planar graphs, one of the most natural and most studied generalizations of planarity~\cite{klm-ab1-17,dlm-sgdbp-19,r-sk-65}, and of partial drawings of IC-planar graphs, a natural restriction of 1-planarity~\cite{albertson08,LiottaM16,Brandenburg18a,YangW019}.
	A 1-planar graph is a graph that admits a drawing in the plane with at most one crossing per edge; for IC-planarity, we additionally require that no two crossed edges are adjacent.
Unlike planarity testing, recognizing 1-planar graphs is \NP-complete~\cite{gb-agewce-07,km-mo1h1t-13}, even if the graph is a planar graph plus a single edge~\cite{cm-aepgmcn1h-13}. Recognition of IC-planar graphs also remains \NP-complete~\cite{bdeklm-rdig-16}. 


%
%
	}

\iflong{
		The result of Angelini et al.\ is in contrast to other algorithmic extension problems, e.g., on graph coloring of perfect graphs~\cite{DBLP:journals/jgt/KratochvilS97} or 3-edge coloring of cubic bipartite graphs~\cite{f-cepepbg-03}, which are both polynomially tractable but become \NP-complete if partial colorings are specified.
	Again more related to extending partial planar drawings, it is well known by Fáry's Theorem that every planar graph admits a planar straight-line drawing, but testing straight-line extensibility of partial planar straight-line drawings is generally \NP-hard~\cite{p-epsd-06}.
	Polynomial-time algorithms are known for certain special cases, e.g.,
	if the subgraph $H$ is a cycle drawn as a convex polygon, and the straight-line extension must be inside~\cite{cegl-dgpwpofpa-12} or outside~\cite{mnr-ecpdg-15} the polygon.
	Yet, if only the partial drawing is a straight-line drawing and the added edges can be drawn as polylines, Chan et al.~\cite{cfglms-dpespg-15} showed that if a planar extension exists, then there is also one, 
	where all new edges are polylines with at most a linear number of bends. 
	This generalizes a classic result by Pach and Wenger~\cite{DBLP:journals/gc/PachW01} that any $n$-vertex planar graph can be drawn on any set of $n$ points in the plane using polyline edges with $O(n)$ bends.
	Similarly, level-planarity testing takes linear time~\cite{jlm-lptlt-98}, but testing the extensibility of partial level-planar drawings is \NP-complete~\cite{br-pclp-17}.
	Recently, Da Lozzo et al.~\cite{ddf-eupgd-19} studied the extension of partial upward planar drawings for directed graphs, which is  generally \NP-complete, but some special cases admit polynomial-time algorithms.
	Other related work also studied extensibility problems of partial representations for specific graph classes \cite{kkorss-eprpuig-17,kkosv-eprig-17,kkos-eprscg-15,kkkw-eprfgpg-12,cfk-eprcg-13,cdkms-crpgeprh-14,cggkl-pvrep-18}.
	
	In this paper, we study the algorithmic extension problem of partial drawings of 1-planar graphs, one of the most natural and most studied generalizations of planarity~\cite{klm-ab1-17,dlm-sgdbp-19}.
	A 1-planar graph is a graph that admits a drawing in the plane with at most one crossing per edge.
	The definition of 1-planarity dates back to Ringel (1965)~\cite{r-sk-65}  and since then the class of 1-planar graphs has been of considerable interest in graph theory, graph drawing and (geometric) graph algorithms, see the recent annotated bibliography on 1-planarity by Kobourov et al.~\cite{klm-ab1-17} collecting 143 references. 
	More generally speaking, interest in various classes of beyond-planar graphs (not limited to, but including 1-planar graphs) has steadily been on the rise~\cite{dlm-sgdbp-19} in the last decade.


	Unlike planarity testing, recognizing 1-planar graphs is \NP-complete~\cite{gb-agewce-07,km-mo1h1t-13}, even if the graph is a planar graph plus a single edge~\cite{cm-aepgmcn1h-13}. It is known, however, that 1-planarity testing is fixed-parameter tractable (\FPT) for the vertex-cover number, the cyclomatic number, and the tree-depth of the input graph $G$, but it remains \NP-complete for graphs of bounded bandwidth, pathwidth, or treewidth~\cite{bce-pc1-18}.
	Moreover, restrictions of 1-planarity have been studied, such as independent-crossing (IC) planarity, which additionally requires 
	that no two crossed edges are incident~\cite{albertson08,LiottaM16,Brandenburg18a,YangW019}. 
	The recognition problem for IC-planar graphs remains \NP-complete~\cite{bdeklm-rdig-16}. 

	On the other end of the planarity spectrum, Arroyo et al.~\cite{adp-esd-19,Arroyo2019-ext1edge} studied drawing extension problems, where the number of crossings per edge is not restricted, yet the drawing must be \emph{simple}, i.e., any pair of edges can intersect in at most one point. 
	They showed that the simple drawing extension problem is \NP-complete~\cite{adp-esd-19}, even if just one edge is to be added~\cite{Arroyo2019-ext1edge}.
}

\ifshort{\subparagraph{Contributions.}}
\iflong{\subsection*{Contributions}}
	Given a graph $G$, a connected subgraph $H$, and a 1-planar drawing $\cH$ of $H$, the \textsc{1-Planar Extension} problem asks whether $\mathcal H$ can be extended by inserting the remaining vertices $\aV = V(G) \setminus V(H)$ and edges $\aE = E(G) \setminus E(H)$ of $G$ into $\mathcal H$ while maintaining the property of being 1-planar. The \textsc{IC-Planar Drawing Extension} problem is then defined analogously, but for IC-planarity.
	
	The \NP-completeness of these extension problems is a simple consequence of the \NP-completeness of the recognition problem~\cite{gb-agewce-07,km-mo1h1t-13,bdeklm-rdig-16} (see also Section~\ref{sec:hardness}).	
	With this in mind, the aim of this paper is to establish the tractability of the problems when $\cH$ is almost a complete 1-planar drawing of $G$. 
	To capture this setting, we turn to the notion of \emph{fixed-parameter tractability}~\cite{DowneyF13,CyganFKLMPPS15} and consider two natural parameters which capture how complete $\cH$ is:
	\ifshort{the \emph{edge deletion distance} between $H$ and $G$ (denoted by $k$), and the \emph{vertex+edge deletion distance} between $H$ and $G$ (denoted by $\kappa$).	
	}\iflong{\begin{itemize}
	\item $k$ is the \emph{edge deletion distance} between $H$ and $G$, and
	\item $\kappa$ is the \emph{vertex+edge deletion distance} between $H$ and $G$.
	\end{itemize}
	}%
	More precisely, $k$ is equal to $|E(G)\setminus E(H)|$ and
	$\kappa$ is equal to $ |V(G)\setminus V(H)| + |E(G[V(H)])\setminus E(H)| $.
	We refer to Section~\ref{sec:hardness} for formal definitions and a discussion of the parameters.
	
	\medskip
	
	\noindent After introducing necessary notation in Section~\ref{sec:prelims} and 
	introducing the problem formally in Section~\ref{sec:hardness},
	 we consider the edge deletion distance $k$ in \textbf{Section~\ref{sec:1pfpt}}. 
	Our first result is:
	
	\begin{restatable*}{mainthm}{fptkcourcelle}
		\label{thm:1Pfpt}
		\textsc{1-Planar Drawing Extension} is \FPT\ when parameterized by $k$.
	\end{restatable*}
	

	The proof of Theorem~\ref{thm:1Pfpt} involves the use of several ingredients: 
	\begin{enumerate}
	\item Introducing and developing a notion of \emph{patterns}, which are combinatorial objects that capture critical information about the potential interaction of newly added edges with $\cH$;
	\item a pruning procedure that reduces our instance to an equivalent sub-instance where $H$ has treewidth bounded in $k$;
	\item an \edg, which carries information about the drawing $\cH$; and finally
	\item completing the proof by constructing a formula $\Phi$ in Monadic Second Order Logic to check whether a pattern can ``fit'' in the \edg, using Courcelle's Theorem~\cite{Courcelle90}.
\end{enumerate}

	Next, we turn towards the question of whether one can obtain an \emph{efficient} fixed-parameter algorithm for the extension problem. In particular, due to the use of Courcelle's Theorem~\cite{Courcelle90} to model-check $\Phi$, the algorithm obtained in the proof of Theorem~\ref{thm:1Pfpt} will have a prohibitive dependency on the parameter $k$. In this direction, we note that it is not immediately obvious how one can design an efficient and 
``formally clean'' purely combinatorial algorithm for the pattern-fitting task (i.e., the task we relegate to model checking $\Phi$ in the \edg). At the very least, using a direct translation of the model-checking procedure would come at a significant cost in terms of presentation clarity.
	
	That being said, one can observe that the main reason for the use of patterns is that it is not at all obvious where (i.e., in which cell of the drawing) one should place the vertices used to extend $\cH$. 
	Indeed, our second result for parameter $k$ assumes that $\aV = \emptyset$ and avoids using Courcelle's Theorem. 
	
	\begin{restatable*}{mainthm}{fptkdirect}
		\label{thm:1PfptE}
		\textsc{1-Planar Drawing Extension} parameterized by \(k\) can be solved in time \(\bigoh(k^{2k}\cdot n^{\bigoh(1)})\) if $V(G)=V(H)$.
	\end{restatable*}

	This algorithm uses entirely different techniques---notably, it prunes the search space for inserting each individual edge via a combination of geometric and combinatorial arguments, and then applies exhaustive branching. We note that the techniques used to prove Theorem~\ref{thm:1Pfpt} and~\ref{thm:1PfptE} can be directly translated to also obtain analogous results for the IC-planarity setting.
	
	\medskip
	
In \textbf{Section~\ref{sec:kappa}}, we turn our attention to the vertex+edge deletion distance $\kappa$ as a parameter, which represents a more relaxed way of measuring how complete $\cH$ is than $k$---indeed, while $\kappa\leq k$, it is easy to construct instances where $\kappa=1$ but $k$ can be arbitrarily large.
	For our third result, we start with IC-planar drawings.
	
	\begin{restatable*}{mainthm}{fptkappaic}
		\label{thm:ICPfpt}
	\textsc{IC-Planar Drawing Extension} is \FPT\ parameterized by $\kappa$.
	\end{restatable*}
	
	The proof of Theorem~\ref{thm:ICPfpt} requires a significant extension of the toolkit developed for Theorem~\ref{thm:1Pfpt}. The main additional complication lies in the fact that the number of edges that are missing from $H$ is no longer bounded by the parameter. To deal with this, we show that the added vertices can only connect to the boundary of a cell in a bounded number of ``ways'' (formalized via a notion we call \emph{regions}), and we use this fact to develop a more general notion of patterns and embedding graphs than those used for Theorem~\ref{thm:1Pfpt}.
	
	
	
	\medskip
	
	\noindent Finally, in \textbf{Section~\ref{sec:2vtcs}}, we present a first step towards the tractability of \textsc{1-Planar Drawing Extension} parameterized by $\kappa$. We note that the techniques developed for the other parameterizations and problem variants cannot be applied to solve this case---the main difference compared to the setting of Theorem~\ref{thm:ICPfpt} is that the ``missing'' vertices can be incident to many edges with crossings, which prevents the use of our bounded-size patterns to capture the behavior of new edges.
 As our final contribution, we investigate the special case of $\kappa=2$, i.e., when adding two new vertices.
	
%
	\begin{restatable*}{mainthm}{xpkappaonep}
		\label{thm:1Pxp}
	\textsc{1-Planar Drawing Extension} is polynomial-time tractable if $\kappa\leq 2$.
	\end{restatable*}
	
We note that even this, seemingly very restricted, subcase of \textsc{1-Planar Drawing Extension} was non-trivial and required the combination of several algorithmic techniques (this contrasts to the case of $|\aV| = 1$, whose polynomial-time tractability is a simple corollary of one of our lemmas). In particular, the algorithm uses a new two-step ``delimit-and-sweep'' approach: first, we apply branching to find a curve with specific properties that bounds the instance by a left and right ``delimiter''. The second step is then a left-to-right sweep of the instance that iteratively pushes the left delimiter towards the right one while performing dynamic programming combined with branching and network-flow subroutines.

Albeit being a special case, we believe these delimited instances with two added vertices can play a role in a potential XP algorithm parameterized by \(\kappa\)---the existence of which we leave open for future work.

\smallskip
\noindent \textbf{Further Related Work.}\quad
		In addition to the given related work on extension problems, it is also worth noting that identifying a substructure of bounded treewidth and applying Courcelle's Theorem to decide an MSO formula on it has been preciously used for a graph drawing problem by Grohe~\cite{grohe04}, namely to identify graph drawings of bounded crossing number.
		Both the way in which one arrives at bounded treewidth and the nature of the employed MSO formula are substantially different from our approach, which is not surprising as the problem of generating drawings from scratch and the problem of extending partial drawings are in general fundamentally different.
		Specifically in the case of generating drawings, the MSO formula could essentially encode the existence of a drawing with bounded crossing number by inductively planarizing crossings of pairs of edges; here the planarity of the planarization can of course be captured via excluded \(K_{3,3}\) and \(K_5\) minors by MSO.
		This approach is not possible in our setting.
		There are examples of 1-planar graphs which have partial drawings which cannot be extended to a 1-plane drawing. Thus a planarization with respect to the added parts of a solution needs to be compatible with the partial drawing and cannot be encoded by an MSO formula straightforwardly.

	\section{Preliminaries}
	\label{sec:prelims}
\iflong{\noindent \textbf{Graphs and Drawings in the Plane.}	}
	Let $G$ be a simple graph, $ V(G) $ its vertices, and $ E(G) $ its edges.
	We use standard graph terminology~\cite{Diestel}. For $r \in \mathbb{N}$, we write $[r]$ as shorthand for the set $\{1, \ldots, r\}$. \iflong{The length of a path is the number of edges contained in that path.}		
	\ifshort{We also assume a basic understanding of parameterized complexity theory~\cite{DowneyF13,CyganFKLMPPS15}, Monadic Second Order (MSO) Logic and Courcelle's Theorem~\cite{Courcelle90}.}
		
	A \emph{drawing} $\cG$ of $G$ in the plane $\mathbb R^2$ is a function that maps each vertex $v \in V(G)$ to a distinct point $\cG(v) \in \mathbb R^2$ and each edge $e=uv \in E(G)$ to a simple open curve $\cG(e) \subset \mathbb R^2$ with endpoints $\cG(u)$ and $\cG(v)$.
	In a slight abuse of notation we often identify a vertex $v$ and its drawing $\cG(v)$ as well as an edge $e$ and its drawing $\cG(e)$. \ifshort{Throughout the paper we will assume that: (i) no edge passes through a vertex other than its endpoints, (ii) any two edges intersect in at most one point, which is either a common endpoint or a proper \emph{crossing} (i.e., edges cannot touch), and (iii) no three edges cross in a single point.}\iflong{
	We say that a drawing $\cG$ is a \emph{good drawing} (also known as  a \emph{simple topological graph}) if (i) no edge passes through a vertex other than its endpoints, (ii) any two edges intersect in at most one point, which is either a common endpoint or a proper \emph{crossing} (i.e., edges cannot touch), and (iii) no three edges cross in a single point.
	For the rest of this paper we assume that every drawing $\cG$ is good.}
	For a drawing $\cG$ of $G$ and $e\in E(G)$, we use $\cG-e$ to denote the drawing of $G-e$ obtained by removing the drawing of $e$ from $\cG$, and for $J\subseteq E(G)$ we define $\cG-J$ analogously.
	
\iflong{	We say that $\cG$ is \emph{planar} if no two edges $e_1, e_2 \in E(G)$ cross in $\cG$; if the graph $G$ admits a planar drawing, we say that $G$ is planar.
	A planar drawing $\cG$ subdivides the plane into connected regions called \emph{faces}, where exactly one face, the \emph{outer} (or \emph{external}) face is unbounded. 
	}\ifshort{We assume that readers are familiar with the notion of \emph{planarity} and \emph{faces}.} The \emph{boundary} of a face is the set of edges and vertices whose drawing delimits the face.
	Further, $\cG$ induces for each vertex $v \in V(G)$ a cyclic order of its neighbors by using the clockwise order of its incident edges.
	This set of cyclic orders is called a \emph{rotation scheme}.
	Two planar drawings $\cG_1$ and $\cG_2$ of the same graph $G$ are \emph{equivalent} if they have the same rotation scheme and the same outer face; equivalence classes of planar drawings are also called \emph{embeddings}.
	A \emph{plane} graph is a planar graph with a fixed embedding. 

	A drawing $\cG$ is \emph{1-planar} if each edge has at most one crossing and a graph $G$ is \emph{1-planar} if it admits a 1-planar drawing.
Similarly to planar drawings, 1-planar drawings\iflong{ also define a rotation scheme and} subdivide the plane into connected regions, which we call \emph{cells} in order to distinguish them from the faces of a planar drawing.
	The \emph{planarization} $G^\times$ of a 1-planar drawing $\cG$ of $G$ is a graph $G^\times$ with $V(G) \subseteq V(G^\times)$ that introduces for each crossing $c$ of $\cG$ a \emph{dummy vertex} $c \in V(G^\times)$ and that replaces each pair of crossing edges $uv, wx$ in $E(G)$ by the four \emph{half-edges} $uc, vc, wc, xc$ in $E(G^\times)$, where $c$ is the crossing of $uv$ and $wx$. In addition all crossing-free edges of $E(G)$ belong to $E(G^\times)$.
		Obviously, $G^\times$ is planar and the drawing $\pG$ of $G^\times$ corresponds to $\cG$ with the crossings replaced by the dummy vertices.

\iflong{		
			With $ H+e $ we denote the graph $ H $ with the additional edge $ e \in E(G) \setminus E(H) $ added to it. Further, for a 1-planar drawing $ \cH $ of $ H $ we denote with $ \cH + \gamma(e) $ the 1-planar drawing that we get by fixing a specific curve $ \gamma(e) $ for $ e $ and adding it to the drawing $ \cH $. We say edge $ e $ is drawn into $ \cH $ with $ \gamma(e) $. If $ \gamma(e) $ is clear we omit it and only write $ \cH + e $.
	
\medskip 

\noindent \textbf{Monadic Second Order Logic.} 
	We consider \emph{Monadic Second Order} (MSO) logic on (edge-)labeled
	directed graphs in
	terms of their incidence structure, whose universe contains vertices and
	edges; the incidence between vertices and edges is represented by a
	binary relation. We assume an infinite supply of \emph{individual
		variables} $x,x_1,x_2,\dots$ and of \emph{set variables}
	$X,X_1,X_2,\dots$. The \emph{atomic formulas} are 
	$V x$ (``$x$ is a vertex''), $E y$ (``$y$ is an edge''), $I xy$ (``vertex $x$
	is incident with edge $y$''), $x=y$ (equality),
	$P_a x$ (``vertex or edge $x$ has label $a$''), and $X x$ (``vertex or
	edge $x$ is an element of set $X$'').  \emph{MSO formulas} are built up
	from atomic formulas using the usual Boolean connectives
	$(\lnot,\land,\lor,\rightarrow,\leftrightarrow)$, quantification over
	individual variables ($\forall x$, $\exists x$), and quantification over
	set variables ($\forall X$, $\exists X$).

	\emph{Free and bound variables} of a formula are defined in the usual way. 
	To indicate that the set of free individual variables of formula $\Phi$
	is $\{x_1, \dots, x_\ell\}$ and the set of free set variables of formula $\Phi$
	is $\{X_1, \dots, X_q\}$ we write $\Phi(x_1,\ldots, x_\ell, X_1,
	\dots, X_q)$. If $G$ is a graph, $v_1,\ldots, v_\ell\in V(G)\cup E(G)$ and $S_1, \dots, S_q
	\subseteq V(G)\cup E(G)$ we write $G \models \Phi(v_1,\ldots, v_\ell, S_1, \dots, S_q)$ to denote that
	$\Phi$ holds in $G$ if the variables $x_i$ are interpreted by the vertices or edges $v_i$, for $i\in [\ell]$, and the variables $X_i$ are interpreted by the sets
	$S_i$, for $i \in [q]$. 
	
	
	The following result (the well-known Courcelle's Theorem~\cite{Courcelle90}) 
	shows that if $G$ has bounded treewidth~\cite{RobertsonS84} then we
	can find an assignment $\varphi$ to the set of free variables $\mathcal{F}$ with $G \models \Phi(\varphi(\mathcal{F}))$ (if one exists) in linear time. 
	
	\begin{fact}[Courcelle's Theorem~\cite{Courcelle90,ArnborgLS91}]\label{fact:MSO} 
		Let $\Phi(x_1,\dots,x_\ell, X_1,\dots, X_q)$ be a fixed MSO formula with free individual variables $x_1,\dots,x_\ell$ and free set variables $X_1,\dots,X_\ell$, and let $w$ a
		constant. Then there is a linear-time algorithm that, given a labeled
		directed graph $G$ of treewidth at most $w$, 
		either outputs  $v_1,\ldots, v_\ell\in V(G)\cup E(G)$ and $S_1, \dots, S_q	\subseteq V(G)\cup E(G)$ such that $G \models \Phi(v_1,\ldots, v_\ell, S_1, \dots, S_q)$ or correctly identifies that no such vertices $v_1,\ldots, v_\ell$ and sets $S_1, \dots, S_q$ exist.
	\end{fact}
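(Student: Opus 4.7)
The plan is to follow the standard route of translating the MSO formula into a finite tree automaton and running it on a tree decomposition of $G$. First I would invoke Bodlaender's linear-time algorithm to compute a tree decomposition of $G$ of width at most $w$ (since $w$ is a fixed constant, this takes time $\bigoh(|V(G)|+|E(G)|)$), and then transform it, still in linear time, into a \emph{nice tree decomposition} whose nodes come in four types (leaf, introduce vertex, forget vertex, join) with at most $w+1$ vertices per bag. This reduces the problem to a bottom-up computation along a tree whose size is linear in $G$.

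The core step is to construct, from the fixed formula $\Phi(x_1,\dots,x_\ell,X_1,\dots,X_q)$, a deterministic bottom-up tree automaton $\mathcal{A}_\Phi$ whose state at a node $t$ encodes the \emph{MSO-type} (up to quantifier rank equal to that of $\Phi$) of the subgraph induced by the union of bags below $t$, together with a candidate partial interpretation of the free variables restricted to the current bag. The key combinatorial ingredient is that, for fixed quantifier rank and fixed bag size, there are only finitely many such types; this is the Feferman--Vaught-style compositionality theorem for MSO, specialized to tree-decomposable structures. The transition function of $\mathcal{A}_\Phi$ at each node type is then determined by the child types, the bag contents, and the local incidences newly revealed at that node, because introduce, forget, and join operations on structures lift to well-defined operations on MSO-types.

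Once $\mathcal{A}_\Phi$ is precomputed (as a constant-size table depending only on $\Phi$ and $w$), the decision version requires a single bottom-up pass performing $\bigoh(1)$ work per node, giving linear total time. For the witness-finding variant of the statement, I would augment each reachable state by storing, alongside the type, a concrete representative assignment realising that type; at join/introduce/forget transitions these representatives are updated to be consistent with the chosen child transitions. After the bottom-up phase, if the root state is accepting I would perform a top-down pass from the root, at each node following the transition responsible for its state and committing to the corresponding local partial assignment; the union of these partial assignments yields the required $v_1,\dots,v_\ell$ and $S_1,\dots,S_q$ in linear additional time.

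The main obstacle is the construction and correctness of $\mathcal{A}_\Phi$. One has to prove by induction on the structure of $\Phi$ that the MSO-type of a tree-decomposed structure is determined by the types of its children together with the bag data, and that only finitely many types arise at each quantifier rank; this is the heart of the argument and is what forces the constant hidden in the linear runtime to be non-elementary in the quantifier rank of $\Phi$ and in $w$. The remaining bookkeeping---ensuring that per-node work truly depends only on the bag of size $w+1$ and on the fixed automaton, and that the top-down witness extraction respects the chosen transitions---is then routine once the compositionality lemmas are in place.
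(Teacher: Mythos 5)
The paper does not prove this statement at all: it is quoted as a black-box ``Fact'' with citations to Courcelle and to Arnborg--Lagergren--Seese, and the authors only use it as a subroutine (applied to the embedding graph, whose treewidth they bound separately). So there is no in-paper proof to compare against; what you have written is essentially the standard proof from the cited references --- Bodlaender's linear-time algorithm for constant treewidth, conversion to a nice tree decomposition, a finite tree automaton obtained from the bounded-quantifier-rank MSO types via Feferman--Vaught-style compositionality (free variables handled by enriching the labels/states with their trace on the bag), a bottom-up decision pass, and the Arnborg--Lagergren--Seese-style extension that extracts an explicit satisfying assignment. That outline is correct. One small caveat on the witness-finding step: if you literally store a full representative assignment (which may have size $\Theta(n)$ because of the set variables $S_1,\dots,S_q$) in the state at every node, the bottom-up pass is no longer linear. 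The standard fix, which you essentially already describe, is to store only the set of reachable states together with constant-size back-pointers recording which child states and which local choice produced each state, and to commit to the membership of each vertex or edge in the $S_i$'s only in the top-down pass, at the node where that element is introduced or forgotten; the union of these local decisions is the witness and the extra work is $\bigoh(1)$ per node. With that bookkeeping adjustment your argument matches the proof in the literature that the paper relies on.
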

	}
	
	\section{Extending 1-Planar Drawings}\label{sec:hardness}
		Given a graph $ G $ and a subgraph $ H $ of $ G $ with a 1-planar drawing $ \cH $ of $ H $, we say that a drawing $ \cG $ of $ G $ is an \emph{extension} of $ \cH $ if the planarization $H^\times$ of $\cH$ and the planarization $\pG$ of $\cG$ restricted to $\cH^\times$ have the same embedding.
	 We formalize our problem of interest as:
	
	\begin{mdframed}\label{prob:extension}
		\textsc{1-Planar Drawing Extension}\\
		{\itshape Instance:}  A graph \(G\), a connected subgraph \(H\) of \(G\), and a 1-planar drawing \(\cH\) of \(H\).\\
		{\itshape Task:} Find an 1-planar extension of \(\cH\) to \(G\), or correctly identify that there is none.
	\end{mdframed}

	The \textsc{IC-Planar Drawing Extension} problem is then defined analogously. Both problem definitions follow previously considered drawing extension problems, where the connectivity of \(H\) is considered a well-motivated and standard assumption~\cite{mels-lam-95,mnr-ecpdg-15,HongN08}.
	

	Given an instance $(G,H,\cH)$ of \textsc{1-Planar Drawing Extension}, a \emph{solution} is a 1-planar drawing $\cG$ of $G$ that is an extension of $\cH$.
	We refer to \(\aV := V(G) \setminus V(H)\) as the \emph{added vertices} and to \(\aE := E(G) \setminus E(H)\) as the \emph{added edges}.  Let $\iV=\{ v\in V(H) \mid \exists vw\in \aE\}$, i.e., $\iV$ is the set of vertices of $H$ that are incident to at least one added edge.
	We also distinguish added edges whose endpoints are already part of the drawing, and added edges with at least one endpoint yet to be added into the drawing---notably, we let
\ifshort{$\aEpar := \left\lbrace vw \in \aE \mid v, w \in V(H) \right\rbrace
	\text{ and } \aEfree := \aE \setminus \aEpar.$}
\iflong{
	\[\aEpar := \left\lbrace vw \in \aE \mid v, w \in V(H) \right\rbrace
	\text{ and } \aEfree := \aE \setminus \aEpar.\]
	}
	This distinction will become important later, since it opens up two options for how to quantify how ``complete'' the drawing of $\cH$ is.
It is worth noting that, without loss of generality, we may assume each vertex in $\aV$ to be incident to at least one edge in $\aE$ and hence $|\aV\cup \iV|\leq 2|\aE|$. \iflong{Furthermore, it will be useful to assume that $\cH$, $\pH$, $\cG$ and $\pG$ are all drawn atop of each other in the plane, i.e., vertices and edges are drawn in the same coordinates in $\cH$ and $\cG$---this allows us to make statements such as ``a solution $\cG$ draws vertex $v\in \aV$ inside face $f$ of $\pH$''.}
		
Given the \NP-completeness of recognizing 1-planar~\cite{gb-agewce-07,km-mo1h1t-13} and IC-planar~\cite{bdeklm-rdig-16} graphs we get as an immediate consequence that also the corresponding extension problems are \NP-complete. 

\iflong{	\begin{proposition}
	 \label{thm:np-hard-2}
	 
		\textsc{1-Planar Drawing Extension} and \textsc{IC-Planar Drawing Extension} are \NP-complete even if all added edges have at least one endpoint that can be placed freely, i.e., if \(\aEpar = \emptyset\).
	\end{proposition}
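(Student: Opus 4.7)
The plan is to establish \NP-hardness by a direct reduction from the recognition of 1-planar graphs (resp.\ IC-planar graphs), both of which are \NP-hard~\cite{gb-agewce-07,km-mo1h1t-13,bdeklm-rdig-16}. Given an arbitrary graph $G$ as an input to the recognition problem, I would construct the extension instance $(G,H,\cH)$ as follows: fix an arbitrary vertex $v\in V(G)$, set $H=(\{v\},\emptyset)$, and let $\cH$ map $v$ to an arbitrary fixed point in the plane. The subgraph $H$ is (trivially) connected, and since $|V(H)|=1$, no edge of $G$ can have both endpoints in $V(H)$; hence $\aE = \aEfree$ and $\aEpar=\emptyset$, satisfying the required restriction.

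For correctness, if $G$ is 1-planar then any 1-planar drawing of $G$ may be translated in the plane so that $v$ is mapped to $\cH(v)$. Because $\pH$ consists of the single isolated vertex $v$, the embedding-agreement condition in the definition of an extension is vacuous, so this yields a valid extension of $\cH$. Conversely, any extension is by definition a 1-planar drawing of $G$, so $G$ is 1-planar. The identical construction works for the IC-planar case, since translation preserves the property that no two crossed edges share an endpoint. Membership in \NP\ is standard: guess a planarization $\pG$ together with a combinatorial embedding of it, and verify in polynomial time that it is planar, that every pair of half-edges meeting at a dummy vertex forms a valid crossing, that (for IC-planarity) no two crossings share an endpoint, and that the restriction of $\pG$ to $V(H)$ agrees with $\cH^\times$.

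The main point to verify is simply that $|V(H)|=1$ automatically forces $\aEpar=\emptyset$, after which the statement is an immediate consequence of the hardness of recognition. There is no genuine obstacle here; the proof is intentionally short because the restriction $\aEpar=\emptyset$ is compatible with the most degenerate choice of $H$. A strengthening of the proposition that further constrains the partial drawing -- for instance, requiring $H$ to already contain a large non-trivial drawn structure -- would demand an actual gadget construction, but is not needed for the statement as formulated.
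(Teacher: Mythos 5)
Your proposal is correct and takes essentially the same route as the paper: the paper's proof also sets $H$ to be a single arbitrarily chosen vertex drawn at a fixed position and uses the translation argument to conclude that an extension exists if and only if $G$ is 1-planar (resp.\ IC-planar). The only difference is that you additionally spell out \NP-membership via guessing and verifying a planarization, which the paper leaves implicit as part of ``immediate consequence of the \NP-completeness of recognition.''
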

	\begin{proof}
		The claim is an immediate consequence of the \NP-completeness of recognizing 1-planar~\cite{gb-agewce-07,km-mo1h1t-13} and IC-planar graphs~\cite{bdeklm-rdig-16}.
		For the reduction, let the subgraph $H$ consist of a single, arbitrary vertex $v \in V(G)$, which we draw  in $\cH$ at some fixed position in the plane.
		The position of $v$ is no restriction to the existence of a 1-planar or IC-planar drawing, since any such drawing can be translated such that $v$ is mapped to the selected position. So a 1-planar/IC-planar extension of $\cH$ exists if and only if $G$ is 1-planar/IC-planar.
	\end{proof}
}

In view of the \NP-completeness  of the problem, it is natural to ask about its complexity when $H$ is nearly ``complete'', i.e., we only need to extend the drawing $\cH$ by a small part of $G$. In this sense, deletion distance represents the most immediate way of quantifying how far $H$ is from $G$, and the parameterized complexity paradigm~\cite{DowneyF13,CyganFKLMPPS15} offers complexity classes that provide a more refined view on ``tractability'' in this setting.

The most immediate way of capturing the completeness of $H$ in this way is to parameterize the problem via the \emph{edge deletion distance} to $G$---formalized by setting $k=|\aE|$. The aim of Section~\ref{sec:1pfpt} is to establish the fixed-parameter tractability of \textsc{1-Planar Drawing Extension} parameterized by~$k$. A second parameter that we consider is the \emph{vertex+edge deletion distance} to $G$, i.e., the minimum number of vertices and edges that need to be deleted from $G$ to obtain $H$. We call this parameter $\kappa$ and set $\kappa=|\aV|+|\aEpar|$. 
The parameterization by \(\kappa\) is the topic of Section~\ref{sec:kappa} and~\ref{sec:2vtcs}. 
Since we can always assume that each added vertex is incident to at least one added edge, $|\aV|+|\aEpar|\leq |\aE|$ and so parameterizing by $\kappa$ leads to a more general (and difficult) parameterized problem.

\section{Using Edge Deletion Distance for Drawing Extensions}
\label{sec:1pfpt}
The main goal of this section is to establish the fixed-parameter tractability of \textsc{1-Planar Drawing Extension} parameterized by the edge deletion distance $k$. 

We note that one major obstacle faced by a fixed-parameter algorithm is that it is not at all obvious how to decide where the vertices in $\aV$ should be drawn in an augmented drawing of $H$. As a follow-up, we will show that when $\aV=\emptyset$ (i.e., $V(H)=V(G)$), it is possible to obtain a more self-contained combinatorial algorithm with a significantly better runtime; this is presented in Subsection~\ref{sub:justedges}.

\subsection{A Fixed-Parameter Algorithm for \textsc{1-Planar Drawing Extension}}
\label{subsec:1pfpt}

Our first step towards a proof of the desired tractability result is the definition of a \emph{pattern}, which is a combinatorial object capturing essential information about a potential 1-planar extension of~$\cH$. 
%
The formal definition of pattern is given in Definition~\ref{def:pat}.
Definition~\ref{def:derived} then defines the notion of \emph{derived patterns},
which create a link between solutions to an instance of 1-\textsc{Planar Drawing Extension} and patterns.

To given an intuition of the patterns, assume that a pattern consists of a tuple $(S,Q,C)$ and
let $ (G, H, \cH) $ be a 1-\textsc{Planar Drawing Extension} instance. 
Then, the general intuition is that $S$ represents the set of faces in $ \pH $ which contain at least a part of the drawing of an edge in $\aE$ in a hypothetical 1-planar extension $ \cG $ of $\cH$. Crucially, our aim is to keep the size of patterns bounded in $k$, and so we only ``anchor'' $S$ to $\pH$ by storing information about which faces will contain individual edges in $\aE$, vertices from $\aV$, and be adjacent to individual vertices in $\iV$; this is captured by the mapping $Q$. The third piece of information we store is $C$, which represents the cyclic order of how edges in $\aE$ exit or enter the boundary of each face (including the case where an edge crosses through an edge into the same face, i.e., occurs twice when traversing the boundary of that face).

\newcommand{\cros}{\textnormal{crossing}}

\iflong{
\begin{definition}
\label{def:pat}
A \emph{pattern} for an instance $(G,H,\cH)$ is a tuple $(S,Q,C)$ where 
\begin{enumerate}
\item $S$ is a set of at most $2k$ elements;
\item $Q$ is a mapping from $\aV\cup \aE \cup \iV$ which maps:
\begin{itemize}
\item vertices in $\aV$ to elements of $S$;
\item edges in $\aE$ to ordered pairs of elements of $S$;
\item vertices in $\iV$ to subsets of $S$.
\end{itemize}
\item $C$ is a mapping from $S$ that maps each $s\in S$ to a cyclically ordered multiset of pairs $((e_1,q_1),(e_2,q_2),$ $\dots, (e_\ell,q_\ell))$, where each $e_i$ is in $\aE$ and each $q_i$ is in $\iV\cup\{$$\cros$$\}$. Here $\cros$ is a special new symbol signifying a crossing point.
Moreover, $C$ must satisfy the following conditions: 
\begin{itemize}
\item for each $s\in S$ and each tuple $(e,q)\in C(s)$ such that $q\in \iV$, it must hold that $s\in Q(q)$ and $e$ is incident to $q$ in $G$;
\item for each $e\in \aE$ and $s\in S$, if $e$ occurs in at least one tuple in $C(s)$, then $s\in Q(e)$ and $C(s)$ contains at most two tuples of the form $(e,*)$, where $*$ is an arbitrary element;
\item for each $s\in S$, each tuple occurs at most once in $C(s)$ with the exception of tuples containing ``$\cros$'', which may occur twice.
\end{itemize}
\end{enumerate}
\end{definition}
}
\ifshort{
\begin{definition}
\label{def:pat}
A \emph{pattern} for an instance $(G,H,\cH)$ is a tuple $(S,Q,C)$ where 
\begin{enumerate}
\item $S$ is a set of at most $2k$ elements;
\item $Q$ is a mapping from $\aV\cup \aE \cup \iV$ which maps:
\emph{\textbf{(a)}} vertices in $\aV$ to elements of $S$, \emph{\textbf{(b)}} edges in $\aE$ to ordered pairs of elements of $S$, and \emph{\textbf{(c)}} vertices in $\iV$ to subsets of $S$.
\item $C$ is a mapping from $S$ that maps each $s\in S$ to a cyclically ordered multiset of pairs $((e_1,q_1),(e_2,q_2),$ $\dots, (e_\ell,q_\ell))$, where each $e_i$ is in $\aE$ and each $q_i$ is in $\iV\cup\{\cros\}$.
Moreover, $C$ must satisfy the following conditions: \emph{\textbf{(a)}} for each $s\in S$ and each tuple $(e,q)\in C(s)$ such that $q\in \iV$, it must hold that $s\in Q(q)$ and $e$ is incident to $q$ in $G$; \emph{\textbf{(b)}} for each $e\in \aE$ and $s\in S$, if $e$ occurs in at least one tuple in $C(s)$, then $s\in Q(e)$ and $C(s)$ contains at most two tuples of the form $(e,*)$, where $*$ is an arbitrary element; \emph{\textbf{(c)}} for each $s\in S$, each tuple occurs at most once in $C(s)$ with the exception of tuples containing ``$\cros$'', which may occur twice.
\end{enumerate}
\end{definition}
}

\newcommand{\numpat}{\#\text{pat}}
Let $\mathcal{P}$ be the set of all patterns for our considered instance $(G,H,\cH)$. Let $\numpat(k)=2k\cdot (2^{2k})^{3k} \cdot ((2k)!\cdot 2^{3k})^{2k}$ and note that $|\mathcal{P}|\leq \numpat(k)\in 2^{\bigoh(k^2\log k)}$.
	In particular, the number of possible patterns can be bounded by first considering $2k$ options for $|S|$, multiplying this by the at most $(2^{2k})^{3k}$-many ways of choosing $Q$, and finally multiply this by the number of choices for $C$ which can be bounded as follows: for each $s\in S$, $C(s)$ is a set that forms a subset (of size at most $2k$) of the $3k$-cardinality set of tuples (note that $e=\{a,b\}\in \aE$ can only occur in the tuples $(e,a)$, $(e,b)$ and $(e,\cros)$).

The intuition behind patterns will be formalized in the next definition, which creates a link between solutions to our instance and patterns.


\begin{definition}
\label{def:derived}
Let $ (G, H, \cH) $ be a 1-\textsc{Planar Drawing Extension} instance. For each solution $\cG$ of $ (G, H, \cH) $ we define a derived pattern $P = (S,Q,C)$ as follows:
\begin{itemize}
\item $S$ is the set of faces of $\pH$ which have a non-empty intersection with $ \cG(e) $ for some $ e \in \aE $.
\item For $v\in \aV$ we set $ Q(v) $ to the face \(f\) of \(\pH\) for which $ \cG(v) $ lies inside $ f $,
for $e\in \aE$ we set $ Q(e) $ to the set of at most two faces which have a non-empty intersection with $ \cG(e) $,
and for $w\in \iV$ we set $ Q(w) $ to all faces in $S$ incident to $w$ in $ \cG $.
\item For a face $ s \in S $ we consider all edges $ e = uv \in \aE $ with a non-empty intersection between $ \cG(e) $ and $s$. 
It follows that there is an edge $ e' \in E(H) $ on the boundary of $s$ such that $ \cG(e) $ crosses $ \cG(e') $, or $ u \in \iV $ and $ u $ is on the boundary of $ s $, or both. 
We set $ C(s) $ as the ordered set of these crossing points or vertices when traversing $ s $ in clockwise~fashion.

\end{itemize} 
\end{definition}

Our next task is to define \emph{valid patterns}; generally speaking, these are patterns which are not malformed and could serve as derived patterns for a hypothetical solution. 
	One notable property that every valid pattern must satisfy is that all vertices and edges mapped by $Q$ to some $s\in S$ can be drawn in a 1-planar way while respecting $C(s)$. 

\begin{definition}\label{def:valid}
For an instance $(G,H,\cH)$, a pattern $P=(S,Q,C)$ is \emph{valid} if there exists a \emph{pattern graph} $G_P$ with a 
$1$-planar drawing $\cG_P$ satisfying the following properties:
\begin{itemize}
\item $\aV\cup \iV\subseteq V(G_P)$ and $\aE\subseteq E(G_P)$.
\item $\cG_P-\aE$ is a planar drawing.
\item $S$ is a subset of the inner faces of $\cG_P-\aE$.
\item Each $v\in \aV$ is contained in the face $Q(v)$ of $\cG_P-\aV-\aE$.
\item Each $e\in \aE$ is contained in the face(s) $Q(e)$ of $\cG_P-\aE$.
\item Each $v\in \iV$ is incident to the faces $Q(v)$ of $\cG_P-\aE$.
\item When traversing the inner side of the boundary of each face $s$ of $\cG_P-\aE$ in clockwise fashion, the order in which each edge $e\in \aE$ is seen in $\cG_P$ together with the information whether $e$ crosses here or ends in its endpoint in $\iV$, is precisely $C(s)$.\end{itemize}
\end{definition}

Note that the instance $(G,H,\cH)$ in the definition of a valid pattern is only important to define $\aV$, $\iV$, and $\aE$. Moreover,
observe that for each solution $\cG$ of an instance $(G,H,\cH)$, the derived pattern is valid by definition. An illustration of a pattern graph is provided in Part (a) of Figure~\ref{fig:patternandembeddinggraph}.
We also remark that, when comparing a pattern graph to a hypothetical solution which draws an edge into the outer face of $\cH$, we will map the outer face to an inner face of the pattern graph.

\begin{figure}
	\begin{subfigure}[t]{0.48\textwidth}
		\centering
		 \vspace{-0.4cm}
		\includegraphics[scale=0.9]{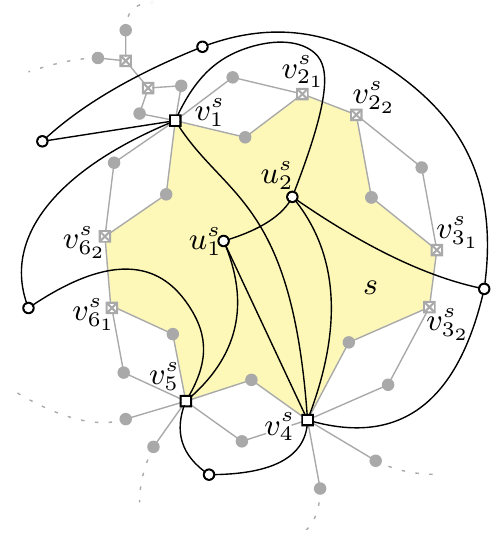} \vspace{-0.4cm}
		\caption{A pattern graph as constructed in Lemma~\ref{lem:validpattern}. The face representing $ s \in S $ is yellow, gray disks are dummy vertices. Black circles are in $ \aV $. Squares are either in $ \iV $ or represent crossings. \vspace{-0.4cm}}
		\label{fig:pattern}		
	\end{subfigure}
	\hfill
	\begin{subfigure}[t]{0.48\textwidth}
		\centering		
		 \vspace{-0.4cm}
		\includegraphics[scale=0.9]{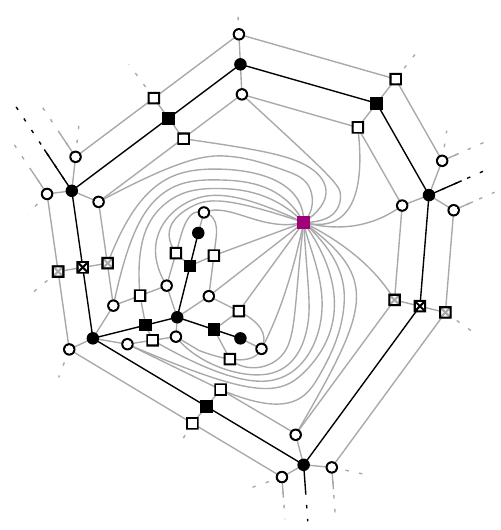}
		\caption{An example of an embedding graph. The white vertices are shadow-vertices, the purple one the face vertex, and gray edges got added.}
		\label{fig:embeddinggraph}
	\end{subfigure}
	\caption{Examples for the definition of a pattern and the \edg. \vspace{-0.4cm}}
	\label{fig:patternandembeddinggraph}
\end{figure}

\begin{lemma}
\label{lem:validpattern}
	Given pattern $P = (S,Q,C)$,
	in time $\bigoh((k!)^k\cdot k^{2k + 1})$ we can either construct a pattern graph $G_P$ together with the drawing $\cG_P$ satisfying all the properties of Definition~\ref{def:valid} or decide that $P$ is not valid.
\end{lemma}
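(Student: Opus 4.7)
The plan is to build $G_P$ and $\cG_P$ constructively from $P = (S, Q, C)$, branching only where the pattern leaves genuine topological freedom, and to declare $P$ invalid the moment no consistent choice remains. First, for each $s \in S$ I would read off a combinatorial boundary template from the cyclic multiset $C(s) = ((e_1,q_1),\dots,(e_\ell,q_\ell))$: maximal consecutive runs of tuples sharing the same $q = v \in \iV$ collapse into a single appearance of $v$ on the boundary of an intended inner face $f_s$ (with the corresponding edges of $\aE$ emanating from it), while every tuple $(e,\cros)$ spawns a fresh dummy vertex representing a crossing with the boundary. Inserting a fresh boundary edge between successive occurrences yields a cyclic boundary of $f_s$ together with a prescribed rotation at each $\iV$-vertex and each dummy.

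Second, I would glue these face-boundaries together. Each edge $e \in \aE$ that contributes a $\cros$-tuple in two faces in $Q(e)$ yields two dummy vertices that must represent the same physical crossing of $e$ with a skeleton edge, so I identify them and merge the two incident boundary edges into a single skeleton edge crossed by $e$. Each $\iV$-vertex common to several $f_s$ is identified across all of its occurrences as well. Once the identifications are complete, I would invoke a standard planar-embedding routine with prescribed rotations to check whether the resulting skeleton $\cG_P - \aE$ embeds in the plane with every $f_s$ as an inner face, padding the outer face with an auxiliary planar wrapper if necessary. Any inconsistency — conflicting rotations at an identified $\iV$-vertex, a dummy becoming incident to more than four skeleton edges, or non-planarity of the skeleton — immediately certifies that $P$ is not valid.

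Third, with the planar skeleton fixed, I would place each $v \in \aV$ inside the face corresponding to $Q(v)$ and route each $e \in \aE$ through its prescribed face(s) and dummy vertex (or vertices). The points at which $\aE$-material meets each $\partial f_s$ are fully determined by $C(s)$, but the \emph{interior} topology inside $f_s$ — which pairs of intra-face arcs cross, and in what cyclic order around the disk — is not pinned down by $P$. I would therefore branch exhaustively over the at most $k!$ arc-arrangements per face that respect 1-planarity (an arc that has already used its crossing budget at a boundary dummy must not cross any further arc inside the face). Multiplied across the at most $|S| \le 2k$ faces carrying non-trivial interior material, and accounting for the $k^{\bigoh(k)}$ bookkeeping cost of producing and verifying the skeleton, this yields the $\bigoh((k!)^k \cdot k^{2k+1})$ runtime claimed by the lemma.

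The main obstacle is \emph{completeness} of this branching: every 1-planar drawing that witnesses validity of $P$ must, up to isotopy inside each face-disk, coincide with one of the enumerated branches. The key topological observation is that a system of arcs inside a disk with prescribed boundary endpoints is determined up to isotopy by which pairs of arcs cross, together with the cyclic order of those crossings around the disk — exactly the combinatorial data enumerated by the branching. A per-branch 1-planarity check together with a verification that every property of Definition~\ref{def:valid} holds in the constructed $\cG_P$ then yields correctness; if no branch produces a valid drawing we conclude that $P$ is not valid.
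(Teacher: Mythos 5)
Your overall architecture --- per-face boundary cycles read off from $C(s)$, identification of repeated $\iV$-vertices and of the two occurrences of a crossing edge, branching over the crossing structure of the added edges, and an embedding check on an $\bigoh(k)$-size auxiliary structure --- is essentially the paper's construction. However, the two steps you treat as routine are exactly where the work lies, and as stated they leave genuine gaps.

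First, the skeleton check: after gluing, the rotation system of $\cG_P-\aE$ is \emph{not} prescribed. At a vertex of $\iV$ that occurs on the boundary of several faces of $S$ (or several times, non-consecutively, on one boundary), the pattern says nothing about the cyclic order in which those boundary pieces attach around the vertex, so there is no ``standard planar-embedding routine with prescribed rotations'' to invoke; you would have to branch over these rotations or solve a constrained-embedding problem. Second, and more seriously, your completeness claim is false as stated. The added edges are not chords of a disk: they may end at interior vertices of $\aV$, and an edge between two vertices of $\aV$ can lie entirely inside a face and never appear in $C(s)$ at all. For such configurations the isotopy class is not determined by the set of crossing pairs together with the cyclic order of crossings on the boundary --- for instance, a component consisting of two added vertices joined by an uncrossed edge can be placed in any of the sub-regions cut out by the other arcs, and the rotation around an added vertex of larger degree is additional data not captured by your enumeration. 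Relatedly, you never explain how a guessed crossing pattern is tested for realizability; the ``per-branch 1-planarity check'' is itself an embedding problem. The paper sidesteps both issues with a blunter device: it places a representative $u_s$ of every added vertex, the cycle vertices coming from $C(s)$, and a degree-four vertex for every guessed crossing between two added edges into a single auxiliary graph $G_P'$ with $\bigoh(k)$ vertices of degree $\bigoh(k)$, and then exhaustively enumerates \emph{all} rotation schemes of $G_P'$ in time $\bigoh((k!)^k)$, checking for each whether it yields a plane drawing satisfying Definition~\ref{def:valid}. If you replace your second and third steps by such an exhaustive enumeration over the whole auxiliary graph (which the stated time bound comfortably permits), your argument goes through; as written, the disk-isotopy observation on which your completeness rests does not hold.
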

\iflong{
\begin{proof}
	Our first step is to construct a possible planarization of the pattern graph $ G_P $, as follows. 
	For each $s\in S$ we add a cycle $ \mathcal C(s) $ with $ |C(s)| $ vertices.
	Let $ v_i^s $ be the $ i $-th vertex on the cycle $ \mathcal C(s) $, then we mark $ v_i^s $ as the vertex representing the second element of the tuple $ c_i \in C(s) $.
	In case $ |C(s)| < 3 $ we add one or two more dummy vertices to $ \mathcal C(s) $, hence every $ \mathcal C(s) $ is at least a triangle.
	For each $ s \in S $ we then subdivide every edge in $ \mathcal C(s) $ by a dummy vertex.
	Next, for each $ u \in \aV $ we add  a vertex $ u_s $ to the graph and also all corresponding edges to the individual vertices in the cycles constructed from $C(s)$, i.e., $u_s$ will be adjacent to $ v_i^s $ if and only if the first element of the tuple $c_i\in C(s)$ is an edge incident to $u$. 
	We then identify two vertices $ v_1,v_2 $ if they represent the same vertex in $ \iV $ or a crossing point of the same edge in $ \aE $.
	
	Our last step towards the desired planarization is to pre-assign the crossings between the drawings of two added edges. Since the number of such crossings is upper-bounded by $k$, we can branch over which pairs of edges cross in time at most $k^{2k}$. 
	Let $ (u,v),(u',v') $ be two edges that cross, then replace them by a new vertex $ x $ with edges $ (u,x)$,$(v,x)$,$(u',x) $, and $ (v',x) $. We call the resulting graph (for one particular branch) $G_P'$.
	
	It remains to determine if there is a plane drawing $ \cG_P' $ of $ G_P' $ conforming with the properties of Definition~\ref{def:valid}.
	If we find such a drawing, we are left to replace the vertices representing crossings by actual edges representing the edges in $ \aE $.
	A crossing introduced in the last step, we replace it simply by two edges that cross.
	For a vertex $ v_i^s $ in a cycle $ \mathcal C(s) $ for $ s \in S $ we handle by splitting $ v_i^s $ into two vertices $ v_{i_1}^s $ and $ v_{i_2}^{s} $ and adding the edge $ v_{i_1}^sv_{i_2}^s $.
	This edge can then be crossed by the edge replacing the two half edges incident to $ v_i^s $. 
	Let $ \cG_P $ be the resulting 1-plane drawing.
	This drawing and its represented graph $ G_P $ fulfill Definition~\ref{def:valid} and hence $ P $ is valid.
	In case no such drawing can be found we return that $ P $ is not valid.

	To compute a possible drawing we first observe that the graph $ G_P' $ has only $ O(k) $ vertices.
	This is easy to see after realizing that we added at most $ 2|\aE| $ vertices for crossings and at most $ |\iV| + |\aE| $ dummy vertices.
	Further every vertex in $ G_P' $ has degree in $ O(k) $.
	For every crossing vertex the degree is in fact six or four, and for every dummy vertex two. For a vertex $ u_s $ representing $ u \in \aV $ the degree is at most $ |\aE| $. 
	Finally for every vertex $ v_i^s $ representing some $ v \in \iV $ we can upper bound the degree by $ 5k $ since $ v_i^s $ is incident to at most $ |\aE| + 2|S| $ edges.
	In total $ G_P' $ has $ O(k) $ vertices and max-degree $ O(k) $, which enables us to iterate all possible rotation schemes in time $ O((k!)^k) $.
	If a rotation scheme implies a plane drawing $ \cG_P' $ we can further check in time $ O(k) $ the conditions of Definition~\ref{def:valid}.
		
	Assume the above construction fails even though there exists a pattern graph $ G_P $ with $ 1 $-planar drawing $ \cG_P $ and properties as in Definition~\ref{def:valid}; to obtain a contradiction, we will show that in fact we can construct a graph $ G_P' $ and drawing $ \cG_P' $ as above.
	By definition we find for each $ s \in S $ a cycle $ \mathcal C(s) $ in $ G_P $ such that  $ V(\mathcal C(s)) \cap \aV = \emptyset $ and $ C(s) \subseteq V(\mathcal C(s)) $, and the vertices and crossings in $ C(s) $ appear in that clockwise order on $ \mathcal C(s) $.
	Further let $ \gamma $ be the simple, closed curve described by the curves representing the edges $ E(\mathcal C(s)) $ in $ \cG_P $.
	Again by definition $ \mathcal C(s) $ exists such that every $ \cG_P(V) $ for $ v \in \aV $ with $ Q(v) = s $ lies in the interior of $ \gamma $ and every $ \cG_P(e)$ for $ e \in \aE $ with $ s \in Q(e) $ has a non-empty intersection with the interior of $ \gamma $.
	To build $ G_P' $ from $ G_P $ consider the planarization $ G_P^\times $ derived from $ \cG_P $. Again we find the cycles as above, now with every crossing being a vertex and some edges in $ \aE $ being represented by two half-edges.
	First, delete for every cycle $ \mathcal C(s) $ all vertices $ v \in V(G_P) $ that do not represent a vertex in $ \aV $ and lie inside the by $ \mathcal C(s) $ described curve.
	Secondly replace every vertex $ v \in V(\mathcal C(s)) $, which is neither representing a crossing nor is in $ \iV $, by an edge.
	Finally add the dummy vertices to every $ \mathcal C(s) $.
	The resulting graph is exactly the graph the above algorithm would have computed, a contradiction.
\end{proof}
}

Next, we will define an annotated (``labeled'') graph representation of $\cH$ and \(\pH\)'s faces.
\newcommand{\eH}{H^*}
The \emph{\edg} $\eH$ of $\cH$ is obtained from $\pH$ by:
\begin{enumerate}
\item\label{edg:Step1} subdividing each uncrossed edge $e$ (resulting in vertex $v_e$);
\item creating a vertex for each face \(f\) of \(\pH\) (resulting in vertex $v_f$);
\item traversing the boundary of each face $f$\footnote{formally, we draw a curve in \(f\) that closely follows
	the boundary 
	until it forms a closed curve.} and whenever we see a vertex $v$ (including the vertices created in Step~\ref{edg:Step1}) we create a shadow copy of $v$ and place it right next to $v$ in the direction we saw $v$ from. Add a cycle connecting the shadow vertices we created in $f$ in the order they were created, and direct it in clockwise fashion\footnote{Note that this may create multiple shadow copies of a vertex. The reason we use shadow copies of vertices instead of using the original vertices is that when traversing the inner boundary of a face, a vertex may be seen multiple times, and such shadow-vertices allow us to pinpoint from which part of the face we are visiting the given vertex.};
\item connecting $v_f$ to all shadow-vertices created by traversing $f$, and all shadow copies of a vertex $v$ to the original~$v$.
\end{enumerate}

Observe that the \edg\ is a connected plane graph. We label the vertices of the \edg\ to distinguish original vertices, edge-vertices, face-vertices, crossing-vertices and shadow-vertices, and use at most $2k$ special labels to identify vertices in $\iV$. An illustration of the \edg\ is provided in Part (b) of Figure~\ref{fig:patternandembeddinggraph}.
Next, we show that it suffices to restrict our attention to the parts of \(\eH\) which are ``close'' to vertices in $\iV$. 
\iflong{For a drawing $\cG$ of a graph $G$ and a subgraph $G'$ of $G$, let the \emph{restriction} of $\cG$ to $G'$ be the drawing obtained by removing $\cG(v)$ for each $v\in V(G)\setminus V(G')$ and $\cG(e)$ for each $e\in E(G)\setminus E(G')$.}

\begin{lemma}
\label{lem:notfar}
Let $I=(G,H,\cH)$ be an instance of \textsc{1-Planar Drawing Extension}. Let $Z$ be the set of all vertices in $\eH$ of distance at least $4k+7$ from each vertex in $\iV$. Let $G'$, $H'$, and $\cH'$ be obtained by deleting all vertices in $Z$ from $G$, $H$, and $\cH$ respectively.~%
Then:
\begin{enumerate}
\item If $I$ is a \textsc{YES}-instance, then each connected component of $G'$ contains at most one connected component of $H'$\footnote{This can be seen not to hold in general if we allow \(H\) to be disconnected.};
\item $I$ is a \textsc{YES}-instance if and only if for each connected component $A$ of $H'$ the restriction of $\cH'$ to $H'[A]$ can be extended to a drawing of the connected component of $G'$ containing~$A$. Moreover, given such a 1-planar extension for every connected component of $G'$, we can output a solution for $I$ in linear time.
\end{enumerate}
\end{lemma}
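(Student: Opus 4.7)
The plan is to prove both parts via a shared structural observation: in any hypothetical 1-planar extension $\cG$, each added edge is drawn through at most two faces of $\pH$, since an added edge has at most one crossing with an edge of $\cH$. Writing $S$ for the set of faces touched by added edges, each face-vertex $v_f$ with $f \in S$ lies within constant $\eH$-distance of some vertex in $\iV$ (one can trace from $v_f$ via a shadow-vertex and an $H$-endpoint of an added edge incident to $f$), and consequently every $H$-vertex on the $\pH$-boundary of $f$ is also within constant $\eH$-distance of $\iV$. In particular, no $H$-vertex on the boundary of any face in $S$ can belong to $Z$.

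For Part 1, I will argue by contradiction: assuming there is a solution $\cG$ yet two distinct components $A_1, A_2$ of $H'$ lie in the same connected component of $G'$, pick a path $P$ in $G'$ from $A_1$ to $A_2$. Since $H'$-edges never leave their component, $P$ must use at least one added edge. The key auxiliary claim I will establish is that for each face $f \in S$, all $H$-vertices on the $\pH$-boundary walk of $f$ lie in a single component of $H'$: contracting the dummy crossing vertices along this boundary walk gives a closed walk in $H$ whose vertices, by the observation above, all avoid $Z$, hence they are pairwise connected in $H'$. Each added edge of $P$ is drawn inside at most two faces of $S$ that share a common boundary element, so its $\iV$-endpoints must already lie in the same $H'$-component; the case where $P$ passes through an $\aV$-vertex $v$ is handled by noting that $v$ sits inside some face $f_v \in S$ whose boundary similarly witnesses a shared $H'$-component. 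Iterating along $P$ collapses all visited $H'$-components into one, contradicting $A_1 \neq A_2$.

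For Part 2, the ``only if'' direction is immediate: restricting $\cG$ to the subgraph induced by each $G'$-component yields a 1-planar extension of the corresponding subdrawing of $\cH'$. For the ``if'' direction, I will paste the given per-component extensions onto $\cH$, relying on two consequences of the Part 1 argument. First, no $Z$-vertex of $\cH$ lies on the boundary of any face in $S$, so the ``deep'' drawn part of $\cH$ involving $Z$ is topologically separated from newly drawn material. Second, distinct $G'$-components use disjoint subsets of $S$: if two components shared a face $f \in S$, their $\iV$-endpoints on its boundary would sit in a common $H'$-component and thereby in a common $G'$-component, contradicting distinctness. Overlaying the per-component extensions onto $\cH$ therefore introduces no new crossings and produces a 1-planar extension of $\cH$ to $G$; the assembly can be carried out in linear time by iterating over the local extensions. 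The main obstacle I anticipate is calibrating constants in the observation to certify that the threshold $4k+7$ dominates the $\eH$-distance accumulated along any chain of up to $k$ added edges that may thread through $\aV$-vertices placed arbitrarily inside faces of $S$.
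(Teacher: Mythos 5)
Your forward direction of Part~2 and the general locality idea are fine, but the ``if'' direction of Part~2 has a genuine gap, and it is exactly the step where the paper's proof does its real work. You argue with the set $S$ of faces of $\pH$ touched by added edges, but $S$ is defined from a hypothetical solution of $I$, which you do not have in this direction; more importantly, the per-component extensions you are pasting are extensions of $\cH'$ restricted to a component $A$, so their added vertices and edges are drawn in faces of the planarization of that \emph{restricted} drawing, not of $\pH$. Deleting $Z$ can merge several faces of $\pH$ into a single face of $\pH'[A]$, and such a merged face contains the drawing of the deleted part of $\cH$ in its interior; nothing in the hypothesis prevents the local extension from routing new material right through that region, so your assertion that ``the deep drawn part of $\cH$ involving $Z$ is topologically separated from newly drawn material'' does not follow. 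The paper closes precisely this hole: it observes that each face of $\pH$ lies in exactly one face of $\pH_i$, shows that if a face of $\pH_i$ contains two or more faces of $\pH$ then the face-vertices of those faces are at $\eH$-distance more than $4k+4$ from $\iV\cap V(H_i)$, and then reuses the distance-counting of Point~1 to conclude that no added edge reachable from $H_i$ through $\aV$ can be drawn there; only after that is the union of $\cH$ with the local extensions a valid drawing (together with shifting the components of $G'$ containing no $H$-vertices into empty space, a case you also omit).

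In Part~1 there are two further problems. First, your ``constant $\eH$-distance'' observation is not constant: a face may be touched only by edges joining two vertices of $\aV$, and the distance to $\iV$ accumulates along chains of added vertices; the calibration you defer as ``the main obstacle'' is the substance of the lemma and of the choice $4k+7$ (the paper does it explicitly: consecutive vertices of the connecting path lie in the same or in adjacent faces, adjacent face-vertices are at $\eH$-distance $4$, an original vertex is at distance $2$ from an incident face-vertex, giving at most $4k+4$ for the relevant face-vertices and $4k+6$ for boundary vertices, which is below the threshold). Second, the justification of your key auxiliary claim is incorrect: contracting the dummy crossing vertices on the boundary walk of a face of $\pH$ does not yield a closed walk in $H$. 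At a dummy vertex $c$ arising from a crossing of $uv$ and $wx$ the walk passes $u,c,w$, and $uw$ is in general not an edge of $H$; moreover the partner endpoints $v$ and $x$ need not lie on the boundary of that face, and need not even survive the deletion of $Z$. So ``all $H$-vertices on the boundary of a face in $S$ lie in one component of $H'$'' is not established by your argument (this is exactly the complication crossings add over the planar case), and your chaining along $P$ rests on it. The paper instead argues directly with the $\eH$-distance between $v_1$ and $v_2$; if you want to keep your face-boundary formulation, that claim needs a genuinely different proof.
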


\iflong{
\noindent We split the proof of Lemma~\ref{lem:notfar} into proofs for the two individual points.

\begin{proof}[Proof of Point 1]
%
%
%
%
	For the sake of contradiction let $J$ be a connected component of $G'$ that contains two distinct connected components $H_1'$ and $H_2'$ of $H'$. Since $J$ is a connected component, there must be a path $P$ from a vertex $v_1\in H_1'$ to a vertex $v_2\in H_2'$ in $J-(H_1'\cup H_2')$, and moreover $P$ must have length at most $k$. By definition, both $v_1$ and $v_2$ are in $\iV$. To complete the proof, it suffices to show that in any solution $\cG$, $v_1$ and $v_2$ have distance at most $4k+4$ in $\eH$.
	
	
	Moreover, in any solution $\cG$, two consecutive vertices of $P$ are either drawn in the same face of $\pH$ or in two adjacent faces of $\pH$. Observe that the distance in $\eH$ between two face-vertices for the faces that share an edge is $4$, and that the distance from an original vertex $v$ to a face-vertex of a face incident to $v$ is $2$. Therefore, if $(G,H,\cH)$ is a \textsc{YES}-instance, then the distance between $v_1$ and $v_2$ in $\eH$ must be at most $4k+4$.
	\end{proof}
}

\iflong{
\begin{proof}[Proof of Point 2]
The forward direction is obvious. For the backward direction, let $G_1,\ldots, G_r$ be the connected components of $G'$ and for $i\in [r]$ let $H_i$ and $\cH_i$  be the restriction of $H'$ and $\cH'$, respectively, to $G_i$. Moreover, let $\pH_i$ be the planarization derived from $\cH_i$ and note that $H_i$ is connected for all $i\in [r]$ by Point 1. 
		Now let us fix an arbitrary $i\in [r]$ such that $H_i$ is not empty and let $\cG_i$ be a 1-planar extension of $\cH_i$ to $G_i$. 
		
		Observe that each face of $\pH$ is completely contained in precisely one face of $\pH_i$.
Moreover, if a face $f$ of $\pH_i$ contains at least two faces $f_1$ and $f_2$ of $\pH$, then both $v_{f_1}$ and $v_{f_2}$ are at distance at least $4k+4$ of any vertex in $\iV\cap V(H_i)$ in $\eH$. Indeed, if this were not the case, then w.l.o.g.\ the vertices on the boundary of $v_{f_1}$ would have distance at most $4k+6$ from some $w\in \iV\cap V(H_i)$ in $\eH$, which would mean that $f_1$ is also a face in $\pH_i$. By the same distance-counting argument introduced at the end of the Proof of Point 1, This implies that no edge in a path $P$ of $G$ from a vertex $v\in H_i$ whose internal vertices all lie in $\aV$ can be drawn in any face of $\pH$ contained in $f$.

To complete the proof, let $G_1, \ldots, G_p$, $p\le r$ be the connected components of $G'$ that contain a vertex in $H$ and $G_{p+1},\ldots, G_r$ the remaining connected components of $G'$. We obtain a solution $\cG$ to the instance $I$ by simply taking the union of $\cH$ and $\cG_i$ for $i\in [p]$ and then for $i\in \{p+1,\ldots, r \}$ shifting $\cG_i	 $ so that $\cG_i$ do not intersect any other part of the drawing. 
\end{proof}
}

\newcommand{\diam}{\text{diambound}}
Since $|\iV|\leq 2k$, Lemma~\ref{lem:notfar} allows us to restrict our attention to a subgraph of diameter at most $(4k+7)\cdot 2 \cdot 2k=16k^2+28k$. This will be especially useful in view of the following known fact, that allows us
to assume that the treewidth of our instances is bounded.

\begin{proposition}[\hspace{1sp}\cite{RobertsonS84}]\label{prop:radius_treewidth}
	A planar graph $ G $ with radius at most $r$ has treewidth at most $3r+1$.
\end{proposition}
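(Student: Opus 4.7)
The plan is to construct a tree decomposition of $G$ whose bags are small unions of root-to-vertex paths in a BFS spanning tree, organised by the faces of a planar triangulation of $G$.

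First, let $v \in V(G)$ be a vertex witnessing radius at most $r$, and compute a BFS spanning tree $T$ of $G$ rooted at $v$; then for every $u \in V(G)$ the unique $v$-to-$u$ path $P_u$ in $T$ has at most $r+1$ vertices. Fix any planar embedding of $G$ and add edges (without introducing crossings) to obtain a planar triangulation $G' \supseteq G$; since $\tw(G) \le \tw(G')$, it suffices to bound $\tw(G')$. For each triangular face $f$ of $G'$ with corners $a,b,c$, define the bag $B_f := V(P_a) \cup V(P_b) \cup V(P_c)$. Because all three paths share the root $v$, we get $|B_f| \le 3(r+1) - 2 = 3r+1$, comfortably within the claimed treewidth bound of $3r+1$.

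The tree underlying the decomposition would be a spanning tree of the planar dual of $G'$, with each node identified with the corresponding face's bag. What remains is to verify the tree decomposition axioms. Edge coverage is immediate, since every edge $uw$ of $G'$ lies on the boundary of a triangular face $f$ whose corners include both $u$ and $w$, so $\{u,w\} \subseteq B_f$. The vertex-subtree condition---that for each $u$ the set $\{f : u \in B_f\}$ induces a connected subtree---is the main obstacle. I would reduce it to the observation that the faces $f$ whose triangle has a corner $x$ with $u \in V(P_x)$ are precisely the faces incident to the subtree $T_u$ of $T$ rooted at $u$, and then argue that the set of faces incident to any connected subtree of $T$ forms a connected region of the embedded plane, because $T$ is itself embedded without crossings. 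Formalising this topological claim, most naturally by induction on $|V(T_u)|$---using that attaching a leaf to the current subtree extends the current region of incident faces by a contiguous fan around the new tree edge---would be the technical heart of the proof.
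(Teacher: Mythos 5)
The paper does not actually prove this statement---it is quoted from Robertson and Seymour---so the only question is whether your construction is sound. Your overall plan (BFS tree $T$ rooted at a center, triangulate to $G'$, one bag per triangular face consisting of the three root-to-corner paths, bag size at most $3(r+1)-2=3r+1$, hence width at most $3r$) is the standard route and can be completed. The genuine gap is exactly at the step you defer: which tree the bags are attached to, and how the vertex-subtree condition is argued. Attaching the bags to ``a spanning tree of the planar dual of $G'$'' chosen arbitrarily does not work. For instance, if $u$ is a leaf of $T$, the faces whose bags contain $u$ are precisely the faces incident to $u$, which form a cycle in the dual graph; an arbitrary dual spanning tree restricted to that set of faces is in general a disconnected forest, because the tree path between two faces around $u$ may run through faces far from $u$. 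Your proposed topological lemma---that the faces incident to a connected subtree of $T$ cover a connected region of the plane---is true, but it is a statement about adjacency in the full dual graph, and connectivity in the dual graph does not transfer to connectivity inside a fixed spanning tree of it. So formalising that lemma would not close the argument.

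The standard repair is to make a specific choice: take the decomposition tree to be the cotree, i.e.\ the spanning tree of the dual of the triangulation $G'$ formed by the dual edges of exactly those edges of $G'$ that are \emph{not} in $T$ (tree--cotree duality). Then for any vertex $u$, the faces whose bags contain $u$ are those with a corner in the subtree $T_u$ of $T$ rooted at $u$, and one argues connectivity by walking once around the plane subtree $T_u$: consecutive faces in this cyclic walk share an edge incident to $T_u$ but not belonging to $T_u$, and the only such edge that lies in $T$ is the edge from $u$ to its parent. Hence at most one consecutive pair in the cyclic sequence fails to be adjacent in the cotree, and since the sequence is cyclic the whole set of faces is still connected there. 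With this choice of tree (and this walk argument in place of the ``connected region'' observation) your construction goes through and indeed gives treewidth at most $3r$, within the claimed bound of $3r+1$.
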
 

	\begin{lemma}
\label{lem:gettw}
\textsc{1-Planar Drawing Extension} is \FPT\ parameterized by $k+\tw(\eH)$ if and only if it is \FPT\ parameterized by $k$,
where $\eH$ is the embedding graph of $\cH$. 
\end{lemma}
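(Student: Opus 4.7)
The backward direction ($\Leftarrow$) is immediate, since $k\leq k+\tw(\eH)$ implies that any algorithm FPT in $k$ is also FPT in $k+\tw(\eH)$.

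For the forward direction ($\Rightarrow$), I would assume an algorithm $\mathcal A$ that is FPT in $k+\tw(\eH)$ and use it to design one that is FPT in $k$ alone. The plan is: given an instance $(G,H,\cH)$ with parameter $k$, first apply Lemma~\ref{lem:notfar} to compute in polynomial time the triple $(G',H',\cH')$ obtained by deleting $Z\cap V(G)$. By that lemma, it suffices to solve independently each subinstance induced on a connected component $A$ of $H'$ together with the corresponding connected component of $G'$, and then combine the solutions in linear time. Letting $\eH_A$ denote the embedding graph of each such subinstance's restricted partial drawing, I will argue that $\tw(\eH_A)=\bigoh(k^2)$; then invoking $\mathcal A$ with combined parameter $k+\bigoh(k^2)=\bigoh(k^2)$ on each subinstance yields total runtime $f(\bigoh(k^2))\cdot n^{\bigoh(1)}$ for some computable $f$, which is FPT in $k$.

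To bound $\tw(\eH_A)$, I would show that $\eH_A$ has radius $\bigoh(k^2)$ and then invoke Proposition~\ref{prop:radius_treewidth}. The radius bound uses two observations. First, every vertex of $\eH_A$ lies within distance $\bigoh(k)$ of some vertex of $\iV\cap V(A)$: by the construction of $Z$, every original vertex of $V(A)$ is within distance $4k+6$ of $\iV$ in $\eH$, and face-vertices, shadow-vertices, edge-vertices and crossing-vertices of $\eH_A$ are each at constant distance from some original vertex on the boundary of their corresponding face or edge. Second, since $A$ is connected and $|\iV\cap V(A)|\leq 2k$, any two $\iV$-vertices of $A$ are linked in $\eH_A$ by a chain of at most $2k-1$ ``hops'', each going between a pair of $\iV$-vertices whose $\bigoh(k)$-balls in $\eH_A$ share a common vertex (the connectivity of $A$ forces such a chain to exist, by picking for each vertex along an $A$-path its nearest $\iV$-vertex and observing how this assignment can change at most once per edge of the path). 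Fixing any $v_0\in \iV\cap V(A)$ as a center and composing these two bounds, every vertex of $\eH_A$ has distance $\bigoh(k^2)$ from $v_0$ in $\eH_A$.

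The main obstacle will be carefully translating distance estimates between $\eH$ (the original embedding graph, in terms of which closeness to $\iV$ is defined) and $\eH_A$ (which is built from the restricted drawing $\cH'$ in which several faces of $\pH$ may have merged into one). Concretely, I will need to verify that every merged face in the planarization of $\cH'$ restricted to $A$ still has at least one original vertex of $V(A)$ on its boundary, so that its face-vertex in $\eH_A$ is within a constant number of steps (via a shadow-vertex) of that boundary vertex, which is in turn close to $\iV\cap V(A)$ by the construction of $Z$. Handling this bookkeeping and composing the two observations into a precise $\bigoh(k^2)$ radius bound will constitute the technical core of the proof.
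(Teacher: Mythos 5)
Your proposal is correct and follows essentially the same route as the paper's proof: prune via Lemma~\ref{lem:notfar}, bound the radius of each component's embedding graph by $\bigoh(k^2)$ through closeness to the at most $2k$ vertices of $\iV$, convert this to a treewidth bound via Proposition~\ref{prop:radius_treewidth}, run the assumed algorithm on each component, and recombine with Point~2 of Lemma~\ref{lem:notfar}. The distance-translation issue you flag as the remaining technical core is exactly what the paper settles by observing that each face of $\pH$ is contained in precisely one face of the restricted planarization, so that short paths to $\iV$ in $\eH$ yield correspondingly short paths in the component's embedding graph.
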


\begin{proof}
The backward direction is trivial. For the forward direction, assume that that there exists an algorithm $\cB$ which solves \textsc{1-Planar Drawing Extension} in time $f(k+\tw(\eH))\cdot |V(G)|^c$ for some constant $c$ and computable function $f$. Now, consider the following algorithm $\cA$ for \textsc{1-Planar Drawing Extension}: $\cA$ takes an instance $(G_0,H_0,\cH_0)$ and constructs $(G_1,H_1,\cH_1)$ by applying Lemma~\ref{lem:notfar}. Recall that by Point 1 of Lemma~\ref{lem:notfar}, $(G_0,H_0,\cH_0)$ is either \textsc{NO}-instance, in which case $\cA$ correctly outputs ``NO'', or each connected component of $G_1$ contains at most one connected component of $H_1$. 

Now let us consider a connected component $\cC$ of $G_1$ and the \edg\ $\eH_1[\cC]$ of $\cH_1[\cC]$ and let $v_f$ be a face-vertex in $\eH_1[\cC]$. If $v_f$ is at distance at least $4k+9$ from every vertex in $\iV\cap\cC$ in $\eH_1[\cC]$, then every vertex on the boundary of $f$ is at distance at least $4k+7$ from every vertex $w\in \iV\cap\cC$ in $\eH_1[\cC]$. Let $v$ be an arbitrary vertex incident to $f$ in $\cH_1[\cC]$. Since each face of $\pH_0$ is completely contained in precisely one face of $\pH_1[\cC]$, it follows that $v$ is at distance at least $4k+7$ from each vertex $w\in\iV\cap\cC$ in $\eH$. Because $v\in V(H_1[\cC])$, this contradicts the fact that every vertex in $V(H_1)$ is at distance at most $4k+6$ from a vertex $w\in\iV$ in $\eH$. Hence, every face-vertex in $\eH_1[\cC]$ is at distance at most $4k+8$ from a vertex in $\iV\cap\cC$. Moreover, every vertex in $\eH_1[\cC]$ is at distance at most $2$ from some face-vertex and there are at most $2k$ vertices in $\iV\cap\cC$. Therefore, the radius, and by Proposition~\ref{prop:radius_treewidth} the treewidth, of $\eH_1[\cC]$ is bounded by $\bigoh(k^2)$.


Now, for each connected component $\cC$ of $G_1$, we solve the instance $(G_1[\cC],H_1[\cC],\cH_1[\cC])$ using algorithm $\cB$. If $\cB$ determines that at least one such (sub)-instance is a \textsc{NO}-instance, then $\cA$ correctly outputs ``NO''. Otherwise, $\cA$ outputs a solution for $(G_0,H_0,\cH_0)$ that it computes by invoking the algorithm given by Point 2 of Lemma~\ref{lem:notfar}. 
To conclude, we observe that $\cA$ is a fixed-parameter algorithm parameterized by $k$ and its correctness follows from Lemma~\ref{lem:notfar}.
\end{proof}


We now have all the ingredients we need to establish our tractability result.

\fptkcourcelle


\newcommand{\edge}{\operatorname{adj}}
\newcommand{\conn}{\operatorname{conn}}
\newcommand{\arc}{\operatorname{arc}}
	\newcommand{\freeVar}{\mathcal{F}}

\ifshort{
\begin{proof}[Proof Sketch]
We prove the theorem by showing that \textsc{1-Planar Drawing Extension} is fixed-parameter tractable parameterized by $k+\tw(\eH)$, which suffices thanks to Lemma~\ref{lem:gettw}.

To this end, consider the following algorithm $\cA$. Initially, $\cA$ loops over all of the at most $\numpat(k)$ many patterns, tests whether each pattern is valid or not using Lemma~\ref{lem:validpattern}, and stores all valid patterns in a set $\calP$. Next, it branches over all valid patterns in $\calP$, and for each such pattern $P=(S=\{s_1,\dots,s_\ell\},Q,C)$ it constructs an MSO formula $\Phi_P(\freeVar)$, where $\freeVar$ is a set of at most $7k$ free variables specified later, 
 the purpose of which is to find a suitable ``placement'' for $P$ in $\cH$ by finding an interpretation in the \edg\ $\eH$. In particular, $\Phi_P$ uses the free variables in $\freeVar$ to find a suitable face-vertex $x_i$ for each $s_i\in S$ and a suitable crossing point for each edge mapped to two elements of $S$, while also guaranteeing that the cyclic orders specified by $C$ are adhered to. Once we find a suitable placement for $P$ in $\cH$, the algorithm constructs an extension by topologically ``inserting'' the pattern graph $G_P$ into the identified faces of $\pH$ and using the crossing points as well as vertices in $\iV$ as ``anchors''.
\end{proof}
}

\iflong{
\begin{proof}
	
	
	

We prove the theorem by showing that \textsc{1-Planar Drawing Extension} is fixed-parameter tractable parameterized by $k+\tw(\eH)$, which suffices thanks to Lemma~\ref{lem:gettw}.

To this end, consider the following algorithm $\cA$. Initially, $\cA$ loops over all of the at most $\numpat(k)$ many patterns, tests whether each pattern is valid or not using Lemma~\ref{lem:validpattern}, and stores all valid patterns in a set $\calP$. Next, it branches over all valid patterns in $\calP$, and for each such pattern $P=(S=\{s_1,\dots,s_\ell\},Q,C)$ it constructs an MSO formula $\Phi_P(\freeVar)$, where $\freeVar$ is a set of at most $7k$ free variables specified later, 
 the purpose of which is to find a suitable ``embedding'' for $P$ in $\cH$ by finding an interpretation in the \edg\ $\eH$.

In the following we will formally define the MSO formula $\Phi_P(\freeVar)$. Recall that the vertices of $\eH$ have the following labels: a label $v$ for every vertex $v\in \iV$ and then the labels $O$, $E$ ,$F$, $S$ which represent original, edge-, face-, crossing-, and shadow-vertices, respectively. 
For vertices $x, y$, let $\edge(x,y)$ be a formula stating that $x$ and $y$ are adjacent vertices, and $\conn(x, y, X)$ a formula stating that there is a directed path\footnote{Recall that edges between shadow vertices are directed.} from $x$ to $y$ with all inner vertices in $X$. 

The set of free variables $\freeVar$ of $\Phi_P(\freeVar)$ consists of:
\begin{itemize}
	\item $x_1,\ldots, x_\ell$, where $x_i$ corresponds to a single element $s_i$ in $S$; 
	\item $y_1, \ldots, y_{k'}$, where $y_i$ corresponds to an edge $e_i\in \aE$ that crosses an edge in $H$---formally, $(e_i,\cros)\in C(s_j)$ for some $j\in [\ell]$ (Note that this $e_i$ could either cross from one face of $\pH$ to another, but also could cross an edge of $H$ that is incident to a single face in $\pH$);
	\item for each $i\in [\ell]$, we have $z_1^i, \ldots, z_{|C(s_i)|}^i$ -- where $z_j^i$ correspond to $j$-th element of $C(s_i)$ (after fixing some arbitrary first element in the cyclic ordering). 
\end{itemize}

Note that $\ell\leq 2k$, $k'\leq k$, and the total number of variables of the form $z_j^i$ is upper-bounded by $4k$ since each edge $e\in \aE$ can occur in at most $4$ tuples across all cyclic orders in a valid pattern (in particular, $e$ may start in some $v_1\in \iV$, cross to a second face, and then end in some $v_2\in \iV$).
The formula $\Phi_P(\freeVar)$ is then the conjunction of the following subformulas:

\begin{enumerate}
		\item checkFaces$(\freeVar)$, which ensures that $x_i$'s are assigned to distinct face-vertices and is the conjunction of:
		\begin{itemize}
			\item $P_Fx_i$, for all $i\in[\ell]$ and 
			\item $x_i\neq x_j$ for all $1\le i < j\le \ell$;
		\end{itemize}
		\item checkEdges$(\freeVar)$, which ensures that $y_i$'s are assigned to distinct edge-vertices and is the conjunction of:
		\begin{itemize}
			\item $P_Ey_i$, for all $i\in[k']$ and 
			\item $y_i\neq y_j$ for all $1\le i < j\le k'$;
		\end{itemize}
		\item checkShadow$(\freeVar)$, which ensures that $z_j^i$'s are assigned to shadow-vertices that are adjacent to $x_i$:
		\begin{itemize}
			\item for all $i\in [\ell]$ and $j\in [|C(s_i)|]$ we have $P_Sz_j^i\wedge \edge(x_i,z_j^i)$
		\end{itemize}
		\item checkCrossings$(\freeVar)$, which ensures that the edge-vertex $y_p$,
		corresponding to an edge $e_p\in \aE$ crossing an edge in $H$ incident to faces $s_1, s_2$, is adjacent to $z_{j_1}^{i_1}$ and $z_{j_2}^{i_2}$ corresponding to the two pairs $(e_p,\cros)$ in $C(s_1)$ and $C(s_2)$, respectively: 
		\begin{itemize}
			\item for all $p\in [k']$ and the corresponding $z_{j_1}^{i_1}$ and $z_{j_2}^{i_2}$, checkCrossings$(\freeVar)$ contains $\edge(z_{j_1}^{i_1}, y_p)\wedge \edge(y_p, z_{j_2}^{i_2})$.
		\end{itemize}
		\item check$\iV(\freeVar)$, which ensures that if incidence between an edge $e\in \aE$ and a vertex $v\in \iV$ is realized in the face $s_i$ (i.e., $(e,v)\in C(s_i)$), then the $z_j^i$ corresponding to $(e,v)$ in $C(s_i)$ is adjacent to $v$.
		\begin{itemize}
			\item For all $i\in [\ell]$ and all $(e,v)\in \aE\times\iV$ such that $(e,v)$ corresponds to $z_j^i$, check$\iV(\freeVar)$ contains $\exists u \left( P_vu\wedge \edge(x_i,z_{j}^i) \wedge \edge(z_{j}^i, u)\right)$.
		\end{itemize}
		\item checkCyclicOrder$(\freeVar)$, which ensures that $z_j^i$'s occur in the cyclic order around the face-vertex $x_i$ given by $C(s_i)$:
		\begin{itemize}
			\item for all $i\in [\ell]$ and $j\in [|C(s_i)|]$ checkCyclicOrder$(\freeVar)$ contains 
			
			$\exists X \left( (\forall x (Xx \rightarrow (\edge(x_i,x)\wedge (x\neq z_1^i) \wedge\ldots\wedge (x\neq z_{|C(s)|}^i) ))\wedge \conn(z_j^i, z_{j+1}^i, X)) \right)$, where \\$z_{|C(s_i)|+1}^i = z_{1}^i$.
		\end{itemize}
\end{enumerate}

Clearly, the length of the formula $\Phi_P(\freeVar)$ is bounded by a function of $k$. 
Hence, we can use Fact~\ref{fact:MSO} to, in time $f(k,\tw(\eH))\cdot |\eH|$ for some computable function $f$, either decide that $\eH\not \models \Phi_P(\freeVar)$ or find an assignment  $\phi: \freeVar\rightarrow V(\eH)$ such that $\eH \models \Phi_P(\phi(\freeVar))$. 

Given the assignment $\phi$, we construct an extension $\dr{G}$ of $\cH$ as follows. Let $G_P$ and $\cG_P$ be a pattern graph and its 1-planar drawing of $G_P$, respectively, satisfying the properties of Definition~\ref{def:valid}. We can construct $G_P$ and $\cG_P$ in time bounded by a function of $k$ by Lemma~\ref{lem:validpattern}.
By Definition~\ref{def:valid}, $\cG_P-\aV-\aE$ is a planar drawing where $S$ is a set of non-outer faces of $\cG_P-\aV-\aE$. The subformula checkFaces$(\freeVar)$ ensures that $\phi$ maps $x_1,\dots, x_\ell$ in a way which captures a bijection between faces of $\cG_P-\aV-\aE$ and face-vertices of $\eH$ (which represent faces of $\cH$). Furthermore, given a face $f_i$, $i\in [\ell]$, of $\cG_P-\aV-\aE$ which corresponds in this way to $s_i$ and $x_i$, when traversing the inside of the boundary of $f_i$ in a clockwise fashion the pairs $(e,t)\in \aE\times (\iV\cup \{\cros\})$ are seen precisely in the same order as in $C(s_i)$. This, thanks to subformula checkCyclicOrder$(\freeVar)$, is in turn the same cyclic order as the order of vertices $z_1^i,\ldots,z_{|C(s_i)|}^i$ in the neighborhood of $x_i$ in $\eH$.

We will now glue the interior of $f_i$ inside the face of $H$ represented by $x_i$ such that we glue the elements of $C(s_i)$ precisely on the corresponding vertices in $\{z_1^i,\ldots,z_{|C(s_i)|}^i\}$, with one small exception: if $t$ consecutive elements of $C(s_i)$ are mapped to the same shadow vertex $u$ adjacent to $x_i$, we create $t$ copies of $u$ (by subdividing edges between $u$ and a neighboring shadow vertex) and perform the gluing to these copies in a way which preserves the cyclic ordering. To extend the edges to the vertices in $\iV$ and to connect the edges in two different faces, we concatenate them with the $z_j^i$-$v$ and $z_{i'}^i$-$z_{j'}^j$ paths guaranteed by check$\iV(\freeVar)$ and checkCrossings$(\freeVar)$, respectively. 

Since all added edges are drawn in $\cG_P$, it follows that all added edges will be drawn once this procedure ends. Furthermore, inside the faces of $\cH$, the added edges cross precisely in the same way as they crossed in $\cG_P$ and if an added edge crosses between two faces of $\cH$, then first it also crosses an edge of $\cG_P-\aV-\aE$, second it crosses a edge of $\cH$ that is not crossed yet, and third at most one added edge crosses this edge. The second and third point of the previous sentence are guaranteed by subformula checkEdges$(\freeVar)$. Therefore, each edge crosses at most one other edge and what we get is indeed a 1-planar drawing of $G$ that extends $\cH$. 

On the other hand, given a solution $\cG$ to $(G,H, \cH)$, it is straightforward to verify that if $\phi$ assigns $x_i$'s to the face-vertices for faces that intersect at least a part of an added edge in $\cG$, $y_i$'s to the edge-vertices for the edges of $\cH$ that are crossed by an added edge, and $z_j^i$'s to the shadow-vertices corresponding to the intersections of added edges with the boundary of the face corresponding to $x_i$, then $\eH\models \Phi_P(\phi(\freeVar))$ for the derived pattern $P$ for $\cG$. 
\end{proof}
}

%
%

\subsection{A More Efficient Algorithm for Extending by Edges Only}
%
\label{sub:justedges}

In this subsection we obtain a more explicit and efficient algorithm than in Theorem~\ref{thm:1Pfpt} for the case where $V(G)=V(H)$. 
%
The idea underlying the algorithm is to iteratively identify sufficiently many 1-planar drawings of each added edge into \(\cH\) that can either all be extended to a 1-planar drawing of \(G\), or none of them can, which allows us to 
branch over a small number of possible drawings for that edge.

Let $X=\bigcup_{uv\in \aE} \{u,v\}$ be the set of all endpoints of edges in $\aE$, and let us fix an order of the added edges by enumerating \(\aE = \{e_1, \dotsc, e_k\}\).
Now, consider a 1-planar drawing \(\cH_i\) of \(H_i := H + \{e_1, \dotsc, e_{i-1}\}\) and assume that we want to add \(e_i\) as a curve \(\gamma(e_i)\).
For a cell $f$ in $\cH_i + \gamma(e_i)$ and vertices $x_1,x_2$ on the boundary of $f$, we denote by $b_{\gamma(e_i)}(f,x_1,x_2)\subseteq E(H_{i+1})$ the edges on the $x_1$-$x_2$-path along the boundary of $f$ which traverses this boundary in counterclockwise direction. We explicitly note that $b_{\gamma(e_i)}(f,x_1,x_2)$ does not contain any half-edges.
In this way \(b_{\gamma(e_i)}(f,x_1,x_2)\) is the set of edges of \(H_{i+1}\) on the \(x_1\)-\(x_2\)-path along the boundary of \(f\) that are not crossed in \(\cH_i + \gamma(e_i)\), and hence may still be crossed by drawings of \(e_{i + 1}, \dotsc, e_k\) in a 1-planar extension of \(\cH_i + \gamma(e_i)\) to \(G\).

Let $ \gamma_1(e_i) $ and $ \gamma_2(e_i) $ be two possible curves for $ e_i $ to be drawn into $\cH_i$. 
Then we call $\gamma_1(e_i)$ and $ \gamma_2(e_i) $ \emph{\(\{e_{i+1}, \dotsc, e_k\}\)-partition equivalent} 
if there is a bijection $ \pi $ from the cells of $ \cH_i + \gamma_1(e_i) $  to the cells of $ \cH_i + \gamma_2(e_i) $ such that
\begin{itemize}
	\item the vertices in \(X\) on the boundaries of the cells
	are invariant under \(\pi\), i.e., for each cell $f$ whose boundary intersects $X$ precisely in $X'$ it must hold that $\pi(f)$ intersects $X$ precisely in $X'$ as well; and
	\item for each pair of cells \(f, f'\) of \(\cH_i + \gamma_1(e_i)\) and ordered pairs of (not necessarily pairwise distinct) vertices \((x_1, x_2), (x_1', x_2')\in X^2\) that
	\begin{itemize}
	\item are on the boundary of \(f\) and \(f'\), respectively, and
	\item the counterclockwise \(x_1\)-\(x_2\)-path and the counterclockwise \(x_1'\)-\(x_2'\)-path along the boundaries of $f$ and $f'$, respectively, does not contain any inner vertices in \(X\),
	\end{itemize}
	the following must hold:
	\[
	\begin{cases}
	\mbox{if }
	\begin{aligned}[t]
	& \vert b_{\gamma_1(e_i)}(f,x_1,x_2) \cap b_{\gamma_1(e_i)}(f',x_1',x_2') \vert \leq k\mbox{, then}\\
	& \vert b_{\gamma_1(e_i)}(f,x_1,x_2) \cap b_{\gamma_1(e_i)}(f',x_1',x_2') \vert = \vert b_{\gamma_2(e_i)}(\pi(f),x_1,x_2) \cap b_{\gamma_2(e_i)}(\pi(f'),x_1',x_2') \vert
	\end{aligned}\\
	\mbox{otherwise}\\
	\phantom{\mbox{if }}
	\mbox{also }
	\vert b_{\gamma_2(e_i)}(\pi(f),x_1,x_2) \cap b_{\gamma_2(e_i)}(\pi(f'),x_1',x_2') \vert > k.
	\end{cases}
	\]
\end{itemize}

Roughly speaking, the first condition guarantees that when extending \(\cH_i\) by \(\{e_{i+1}, \dotsc, e_k\}\)-partition equivalent drawings of \(e_i\), the topological separation of all vertices that might be important when drawing \(e_{i+1}, \dotsc, e_k\) is the same.
The second condition ensures that when extending \(\cH_i\) by \(\{e_{i+1}, \dotsc, e_k\}\)-partition equivalent drawings of \(e_i\), the number of edges whose drawings might be crossed by drawings of \(\{e_{i+1} \dotsc, e_k\}\) is the same, or so large that they cannot all be crossed by drawings of \(\{e_{i+1} \dotsc, e_k\}\).
\iflong{
This is more formally captured and used in the proof of the following lemma.}
\begin{lemma}
\label{lem:hidepart_-equiv}
	For any \(1 \leq i \leq k\),
	if two drawings \(\gamma_1(e_i), \gamma_2(e_i)\) of \(e_i\) into a drawing \(\cH_i\) of \(H_i\) are \(\{e_{i+1}, \dotsc, e_k\}\)-partition-equivalent, they either both can be extended to a 1-planar drawing of \(G\), or none of them can.
\end{lemma}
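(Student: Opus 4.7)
The plan is to prove the lemma by constructing one extension from the other. Assume without loss of generality that $\cH_i + \gamma_1(e_i)$ can be extended to a 1-planar drawing $\cG_1$ of $G$, and then build a 1-planar extension $\cG_2$ of $\cH_i + \gamma_2(e_i)$ by transferring the drawings of the remaining added edges $e_{i+1}, \dotsc, e_k$ through the bijection $\pi$.

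First I would classify each added edge $e_j$ with $j > i$ according to its behaviour in $\cG_1$. Since $e_j$ is permitted at most one crossing, and since any edge of $H_{i+1}$ that is already crossed in $\cH_i + \gamma_1(e_i)$ cannot be crossed again, exactly one of the following holds: \emph{(I)} $e_j$ lies in a single cell $f$ of $\cH_i + \gamma_1(e_i)$ with both endpoints in $X$ on the boundary of $f$, and may share its one crossing with another added edge drawn in $f$; or \emph{(II)} $e_j$ crosses one uncrossed edge of $H_{i+1}$, passing from a cell $f$ to a cell $f'$ with endpoints in $X$ on the two boundaries, and crosses nothing else.

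Second, I would translate each $e_j$ into $\cG_2$ by redrawing it inside $\pi(f)$ (for type~(I)), or across a suitable uncrossed edge in the common boundary of $\pi(f)$ and $\pi(f')$ (for type~(II)). The first partition-equivalence condition guarantees that the endpoints of $e_j$ sit on the correct boundaries in $\cH_i + \gamma_2(e_i)$. For type~(II), I would identify the consecutive-$X$-vertex arcs in $f$ and $f'$ through which $e_j$ enters and exits, and select the replacement crossing edge from the intersection $b_{\gamma_2(e_i)}(\pi(f), x_1, x_2) \cap b_{\gamma_2(e_i)}(\pi(f'), x_1', x_2')$. The second partition-equivalence condition ensures that either this intersection has exactly the same size as in $\cG_1$, or else both intersections exceed $k$; since at most $k$ of the remaining added edges can ever need to draw from any single such pool, the required choice always exists and no edge is reused.

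The main obstacle will be showing that the translated drawings inside each cell $\pi(f)$ can be laid out simultaneously so that all prescribed crossings between the re-routed added edges are realized and no spurious ones are introduced. This reduces to a topological argument: the cyclic interleaving of boundary endpoints and candidate crossing slots around the boundary of $\pi(f)$ carries the same combinatorial data as that around the boundary of $f$, because the partition-equivalence conditions pin down each arc between consecutive $X$-vertices up to the pool of available uncrossed edges. Once this is in place, the drawings inside each disk-like cell can be reproduced by standard planar replacement inside a topological disk, after which the resulting $\cG_2$ is verified to be 1-planar, and the symmetric construction in the other direction completes the proof.
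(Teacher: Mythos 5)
Your proposal is correct and follows essentially the same route as the paper's proof: construct $\cG_2$ from $\cG_1$ via the bijection $\pi$, use the first partition-equivalence condition (together with the fact that both drawings arise from the same $\cH_i$, so cyclic boundary orders of $X$ are preserved up to reversal) to place endpoints, and use the second condition with the bound $|\{e_{i+1},\dotsc,e_k\}|\leq k$ to select distinct replacement edges $c_j'$ in the sets $b_{\gamma_2(e_i)}(\pi(f),x_1,x_2)\cap b_{\gamma_2(e_i)}(\pi(f'),x_1',x_2')$, finishing by symmetry. If anything, your explicit case split (single-cell edges possibly crossing another added edge versus edges crossing an uncrossed edge of $H_{i+1}$) and your remark about reproducing the crossings among re-routed added edges inside each cell spell out details the paper's own write-up leaves implicit.
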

\iflong{
\begin{proof}
	We show that we can obtain a 1-planar drawing extension of \(\cH_i + \gamma_2(e_i)\) to \(G = H_i + \{e_{i+1}, \dotsc, e_k\}\) from a 1-planar drawing extension of \(\cH_i + \gamma_1(e_i)\) to \(G\).
	Then the claim immediately follows by a symmetric argument when \(\gamma_1(e_i)\) and \(\gamma_2(e_i)\) are interchanged.
	
	Let \(\pi\) be a bijection between the cells of \(\cH_i + \gamma_1(e_i)\) and the cells of \(\cH_i + \gamma_2(e_i)\) that witnesses \(\{e_{i+1}, \dotsc, e_k\}\)-partition equivalence of \(\gamma_1(e_i)\) and \(\gamma_2(e_i)\).
	Assume we are given a 1-planar drawing extension \(\cG_1\) of \(\cH_i + \gamma_1(e_i)\) to \(G\).
	From this, we will define a 1-planar drawing extension \(\cG_2\) of \(\cH_i + \gamma_2(e_i)\) to \(G\).
	For \(e \in E(H_i)\) set \(\cG_2(e) = \cH_i(e)\) and set \(\cG_2(e_i) = \gamma_2(e_i)\).
	In this way, \(\cG_2\) is an extension of \(\cH_i\).
	
	Note that for any cell \(f\) of \(\cH_i + \gamma_1(e_i)\) the order in which the vertices of \(X\) occur on the boundary of \(f\) is the same (up to possibly reversal) in which they occur on the boundary of \(\pi(f)\)
	(exactly the same such vertices occur because of \(\{e_{i+1}, \dotsc, e_k\}\)-partition equivalence).
	This is due to the fact that \(\cH_i + \gamma_1(e_i)\) and \(\cH_i + \gamma_2(e_i)\) are obtained from the same drawing \(\cH_i\) and drawing edges into \(\cH_i\) merely subdivides cells and cannot permute the order on their boundaries.
	
	Now we can define \(\cG_2(e_j)\) for \(j \in \{i + 1, \dotsc, k\}\) as follows:
	For \(J \subseteq \{i + 1, \dotsc, k\}\) such that \(\cG_1(e_j)\) intersects two cells \(f\) and \(g\) of \(\cH_i + \gamma_1(e_i)\) for every \(j \in J\), it holds that each \(\cG_1(e_j)\) crosses the drawing \((\cH_i + \gamma_1(e_i))(c_j)\) of an edge \(c_j \in E(H_i) \cup \{e_i\}\).
	In particular, \(c_j\) lies on the shared boundary of \(f\) and \(g\).
	Both \(f\) and \(g\) contain a vertex in \(X\) in their boundary, as each of them contain at least one endpoint of \(e_j\).
	Hence there are \(x_1, x_2 \in X\) that are consecutive on the boundary of \(f\) neglecting everything but \(X\), and \(y_1, y_2 \in X\) that are consecutive on the boundary of \(g\) neglecting everything but \(X\) such that \(c_j \in b_{\gamma_1(e_i)}(f,x_1,x_2) \cap b_{\gamma_1(e_i)}(g,y_1,y_2)\).
	By partition-equivalence the boundaries of \(\pi(f)\) and \(\pi(g)\) each contain an endpoint of each \(e_j\),
	and because \(|J| \leq k\), we find distinct \(c_j' \in b_{\gamma_2(e_i)}(\pi(f),x_1,x_2) \cap b_{\gamma_2(e_i)}(\pi(g),y_1,y_2)\) (or possibly \(c_j' \in b_{\gamma_2(e_i)}(\pi(f),x_2,x_1) \cap b_{\gamma_2(e_i)}(\pi(g),y_2,y_1)\)) for each \(j \in J\).
	Without loss of generality the \(c_j\) are indexed in the order in which they occur on the counterclockwise \(x_1\)-\(x_2\)-path along the boundary of \(f\).
	We re-index the \(c_j'\) to conform to the same order (up to reversal), also taking \(x_1\) and \(x_2\) into account, on \(\pi(f)\).
\end{proof}
}

The next lemma shows that the number of non-equivalent drawings is bounded by a function of $k$, which in turn allows us to apply exhaustive branching to prove the theorem.
\begin{lemma}
\label{lem:part_-equivbound}
	For any \(1 \leq i \leq k\),
	the number of ways to draw \(e_i\) into a drawing \(\cH_i\) of \(H_i\) that are pairwise not \(\{e_{i+1}, \dotsc, e_k\}\)-partition-equivalent is at most
	\(4(2k + 1) \cdot 2(k + 1) \in \bigoh(k^2)\).
\end{lemma}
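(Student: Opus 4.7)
The plan is to bound the number of partition-equivalence classes by analysing how $\gamma(e_i)$ can be routed between its endpoints $u$ and $v$ under the 1-planarity constraint. Since $\cH_i$ is already 1-planar and $\cH_i + \gamma(e_i)$ must also be 1-planar, $\gamma(e_i)$ crosses at most one edge of $H_i$, and that edge must currently be uncrossed in $\cH_i$. Consequently $\gamma(e_i)$ either lies entirely in one cell of $\cH_i$ (the uncrossed case) or it visits exactly two cells $f_u, f_v$ that share a common uncrossed edge $e'$ which it crosses (the crossed case).

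Next I would observe that, up to partition-equivalence, such a drawing is fully determined by its combinatorial behaviour at its two endpoints together with, in the crossed case, the identity of the crossed edge $e'$ up to equivalence. Intuitively, two drawings that exit $u$ into the same cell, enter $v$ from the same cell, and (if crossed) cross the same edge $e'$, split the boundaries of the involved cells into arcs that agree on their $X$-structure and on edge-counts capped at $k$, and therefore induce the same partition-equivalence class via a natural bijection $\pi$ on cells.

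The heart of the argument is bounding, for each endpoint, the number of inequivalent choices. Traversing the rotation around a vertex $w \in \{u,v\}$ in $\cH_i$, the corners at $w$ sit in a cyclic sequence of cells; what matters for partition-equivalence is only the presence and cyclic ordering of the at most $2k$ vertices of $X$ on the surrounding arcs, together with edge-counts that are indistinguishable once they exceed $k$. A direct cyclic counting argument then shows that the equivalence class of the exit corner at $w$ can change only at a bounded number of transitions around the rotation, yielding at most $2(k+1)$ inequivalent exit-corner choices at one endpoint. At the other endpoint the counting must additionally incorporate the choice between the uncrossed case and the crossed case over the $\bigoh(k)$ pairwise partition-inequivalent candidates for $e'$ (together with the two possible orientations of the cell pair around $e'$), producing a bound of $4(2k+1)$. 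Multiplying the two bounds gives $4(2k+1)\cdot 2(k+1) \in \bigoh(k^2)$, as claimed.

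The main obstacle I expect is formalising the collapse of the cyclic sequence of corners at a vertex under partition-equivalence, especially in situations where a vertex of $X$ occurs more than once on the boundary of a single cell, or the boundary carries many crossing-induced dummy vertices from $\cH_i^\times$. The cap at $k$ in the second condition of partition-equivalence is precisely what prevents the edge-count structure from generating unboundedly many classes, and comparable care is needed to handle repeated boundary occurrences of $X$-vertices so that they contribute only constantly many transitions each to the asymptotic count.
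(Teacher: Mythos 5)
Your framing---at most one crossing, so the curve is determined up to isotopy by its exit corner at one endpoint, its entry corner at the other, and the identity of the crossed edge---is reasonable, but the two counting claims that actually carry the proof are only asserted, and neither matches how partition equivalence behaves. First, the claim that there are only $\bigoh(k)$ (your factor $2k+1$) pairwise inequivalent candidates for the crossed edge $e'$ is not right as stated: two candidate edges lying in the same gap between consecutive vertices of $X$ but at different positions along that boundary arc split the arc into pieces with different numbers of uncrossed edges, and whenever these numbers are at most $k$ the second condition of the equivalence distinguishes them. This is precisely what the paper's factor $2(k+1)$ is for: after fixing in which of the at most $2k$ gaps between consecutive $X$-vertices the (possible) crossing point lies ($2k+1$ options, including ``no crossing''), one records which of the two resulting sub-arcs carries the smaller number of uncrossed edges and what that value is in $\{1,\dotsc,k,>k\}$ (the other value is then implied), giving $2(k+1)$ refinements; the remaining factor $4$ comes from choosing, for each endpoint, whether it is reached from the crossing point clockwise or counterclockwise. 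Your decomposition spends the $2(k+1)$ on something else entirely, so this capped-count refinement is unaccounted for.

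Second, the bound of $2(k+1)$ on inequivalent exit-corner choices at an endpoint is exactly the step you yourself flag as the obstacle, and it is not established: the corners at an endpoint correspond to incidences with cells, whose number is the vertex degree, and those cells can differ both in which $X$-vertices lie on their boundaries and in the capped numbers of uncrossed edges on the arcs between consecutive $X$-vertices, so the promised ``direct cyclic counting argument'' would need real work and is not supplied. The paper's proof avoids per-endpoint corner counting altogether; it parameterizes an equivalence class by the position of the single crossing point relative to the cyclic order of $X$, the two direction choices per endpoint, and the capped split counts, and multiplies $4(2k+1)$ by $2(k+1)$. (A further small slip: a crossed edge need not separate two distinct cells; the curve may cross an edge incident to a single cell, a case the paper's gap-plus-direction bookkeeping covers but your ``two cells $f_u,f_v$'' setup excludes.) So as written the proposal has a genuine gap rather than being an alternative proof.
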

\iflong{
\begin{proof}
	An equivalence class in question is, by definition, determined by
	a partition of the vertices of \(X\) on the boundaries, that a drawing of \(e\) in this equivalence class induces, and the number of uncrossed edges on the shared boundary of pairs of faces that are involved in this new partition.
	
	As the endpoints of \(e\) and their drawings are determined, two of three possible points at which a drawing of \(e\) partitions boundaries of faces of \(\cH_i\) are fixed.
	The possible third point lies between two of at most \(2k\) consecutive vertices in \(X\).
	This gives us \(2k + 1\) options for a possible third point (including the option not to have a third point).
	From this point we can reach each endpoint by a curve in counterclockwise or clockwise direction.
	
	Similarly, once one fixes the induced partitions, the impact a drawing \(\gamma(e)\) of \(e\) into \(\cH_i\) on the possible sizes of the face of the boundary between two vertices in $ X $ is quite restricted as it only impacts the values for adjacent cells of \(\cH_i + \gamma(e)\) that are bounded by parts of \(\gamma(e)\)  and the previously consecutive \(x_1, x_2 \in X\) that are partitioned by \(\gamma(e)\).
	These are at most two pairs, where the value for one pair implies the value for the other.
	Thus it suffices to distinguish which pair has the smaller value and what this value is among \(\{1, \dotsc, k, >k\}\), which results in \(2(k + 1)\) many possibilities.
\end{proof}
}

\fptkdirect

\iflong{
\begin{proof}
	We can pre-compute the intersection of the boundary of each cell of \(\cH\) with \(X\) and
	for each pair of cells \(f, f'\) of \(\cH\) and ordered pairs of vertices \(x_1, x_2 \in X\) and \(x_1', x_2'\) that are consecutive on the boundaries of \(f\) and \(f'\) respectively if one neglects everything but \(X\), the cardinality of the set of edges that are on the clockwise $x_1$-$x_2$-path along the boundary of $f$ and at the same time on the clockwise $x_1'$-$x_2'$-path along the boundary of $f'$
	in polynomial time.
	
	As described in the proof of Lemma~\ref{lem:part_-equivbound},
	at any stage, for \(1 \leq i \leq k\),
	we can branch on \(\{e_{i+1}, \dotsc, e_k\}\)-partition-equivalent drawings \(\gamma(e)\) of \(e\) using the pre-computed information.
	This information can be modified within each branch according to the choice of \(\gamma(e)\) in constant time because, as described in the proof of Lemma~\ref{lem:part_-equivbound} the impact of \(\gamma(e)\) involves only few values whose modifications can correctly be computed from the updated pre-computed information up to this stage and the chosen values determining \(\gamma(e)\).
	Correctness of this branching follows from Lemma~\ref{lem:hidepart_-equiv}.
\end{proof}
}

\section{Using Vertex+Edge Deletion Distance for \textsc{IC-Planar Drawing Extension}}
\label{sub:icfpt}
\label{sec:kappa}
\newcommand{\rreg}{{region}}
In this section, we show that
\textsc{IC-Planar Drawing Extension} parameterized by $\kappa$ is fixed-parameter tractable. 
We note that an immediate consequence of this is the fixed-parameter tractability of \textsc{IC-Planar Drawing Extension} parameterized by $k$.

On a high level, our strategy 
is similar to the one used to prove Theorem~\ref{thm:1Pfpt}, in the sense that we also use a (more complicated) variant of the patterns 
along with Courcelle's Theorem. However, obtaining the result 
requires us to extend the previous proof technique 
to accommodate the fact that
the number of edges incident to $\aV$, and hence the size of a pattern, is no longer bounded by $ \kappa $.
This is achieved by identifying so-called \emph{difficult vertices} and \emph{regions} that split up the neighborhood of each face-vertex in the \edg\ into a small number of sections (a situation which can then be handled by a formula in Monadic Second Order logic). Less significant complications are that we need a stronger version of Lemma~\ref{lem:notfar} to ensure that the diameter of the resulting graph is bounded, and need to be more careful when using MSO logic in the proof of the main theorem.

Let $f$ be a face of $\pH$ and let $\cG$ be a solution (i.e., an IC-planar drawing of $G$) for the instance $(G,H,\cH)$. Let $\eH$ be the \edg\ of $\cH$, and without loss of generality let us assume (via topological shifting) that each edge between a vertex $a'$ on the boundary of $f$ and a vertex $b\in \aV$ placed by $\cG$ in $f$ is routed ``through'' one shadow copy of $a'$\footnote{The reason one distinguishes which shadow copy of $a'$ the edge is routed through is because this unambiguously identifies which part of the face the edge uses to access $a'$.}. Let $\aV^f$ be the subset of $\aV$ drawn by $\cG$ in the face $f$.

Observe that, since shadow vertices are not part of the original instance and instead merely mark possible ``parts'' of the face that can be used to access a given vertex, it may happen that a solution routes several edges through one shadow vertex.
We say that a shadow vertex $v\in N_{\eH}(v_f)$ (where $N_{\eH}(v_f)$ denotes the neighborhood of $v_f$ in $\eH$) 
is \emph{difficult} w.r.t.\ $f$ if $\cG$ routes at least two edges through $v$. Note that it may happen that a vertex $v'\in \iV$ with more than one neighbor in $\aV$ has several shadow copies, none of which are difficult (see Figure~\ref{fig:not_difficult}).

\begin{figure}
	\begin{center}
		\includegraphics[page=2]{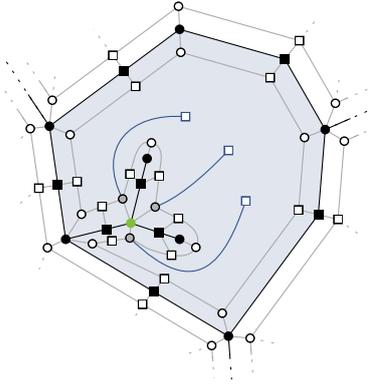}
	\end{center}
	\caption{An example of $\eH$ where a vertex $v'\in \iV$ has several non-difficult shadow copies. Blue vertices are in $ \aV $. The green vertex has no difficult shadow copy w.r.t.\ the blue face in $ \pH $.
		\label{fig:not_difficult}
	}	
\end{figure}


\newcommand{\regbound}{3\kappa}
\newcommand{\diffbound}{3\kappa^2}

	\begin{lemma}
\label{lem:diffbound}
There are at most $ \diffbound $ difficult vertices w.r.t.\ a face $ f $ of $ \pH $.
\end{lemma}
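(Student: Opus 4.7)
The plan is to charge each difficult shadow to an unordered pair of added vertices in $\aV^f$ and then bound the number of shadows charged to each pair by a small constant. By definition, a difficult shadow $v$ is a copy of some $v' \in \iV$ at a specific corner of $v'$ inside $f$ through which at least two edges $v' u_1, v' u_2$ with distinct $u_1, u_2 \in \aV^f$ are routed; I associate $v$ with the unordered pair $\{u_1, u_2\}$.

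The key claim I would prove is that for each unordered pair $\{u_1, u_2\} \subseteq \aV^f$, at most three vertices $v' \in \iV$ have both $u_1 v'$ and $u_2 v'$ routed through a common corner of $v'$. Granted this, since $|\aV^f| \leq |\aV| \leq \kappa$ the total number of such pairs is at most $\binom{\kappa}{2}$, giving at most $3\binom{\kappa}{2} \leq 3\kappa^2$ difficult shadows in total.

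To establish the claim, suppose $t$ vertices $v'_1, \ldots, v'_t$ witness the corner-sharing property for the pair $\{u_1, u_2\}$. IC-planarity guarantees that each of $u_1$ and $u_2$ is incident to at most one crossed edge, so at most two indices $i$ can correspond to a $v'_i$ with a crossed edge to $\{u_1, u_2\}$; for the remaining $\geq t - 2$ vertices both edges $u_1 v'_i$ and $u_2 v'_i$ are uncrossed. These $v'_i$, together with $u_1, u_2$ and the $2(t-2)$ uncrossed connecting edges, form an uncrossed planar $K_{2, t-2}$ drawn inside $f$ with $u_1, u_2$ in the interior, the $v'_i$s on $\partial f$, and each $v'_i$'s two incident edges in the same corner. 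A standard planarity argument---$u_1$'s fan of uncrossed edges partitions the interior of $f$ into $t-2$ sectors, and $u_2$ lying in one of them cannot reach all the $v'_i$s without crossing some $u_1$-edge---yields $t - 2 \leq 2$, i.e., $t \leq 4$.

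The main obstacle is to push this further from $t \leq 4$ to $t \leq 3$: the planar substructure argument alone is not tight enough. I expect this to be handled by a short case analysis on how the remaining one or two crossed edges (for the exceptional $v'$s) are routed. Given that the planar $K_{2,t-2}$ together with $u_1, u_2$ already fills the interior of $f$, any candidate routing for a crossed edge from $u_1$ or $u_2$ to the exceptional vertex is forced either to cross an edge sharing an endpoint with it (violating $1$-planarity / IC-planarity independence) or to require a second crossed edge incident to $u_1$ or $u_2$ (violating the at-most-one-crossed-edge-per-vertex property of IC-planar drawings). Ruling out all configurations for $t = 4$ in this manner completes the per-pair bound and thus the lemma.
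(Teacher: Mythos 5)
Your proposal is sound in substance and follows the same top-level strategy as the paper: charge every difficult shadow vertex to a pair of added vertices drawn in $f$ (for a fixed pair $\{u_1,u_2\}$ and a fixed $v'\in\iV$ at most one shadow copy of $v'$ can be charged, since the edges $u_1v'$ and $u_2v'$ are unique, so counting over $v'$ is consistent with counting shadow copies), and then bound the number of shadows shared by each pair by a constant. Where you differ is in the per-pair argument. The paper assumes four shared shadow vertices $w_1,\dots,w_4$, uses connectivity of $H$ to place them on a cycle of $\eH$ avoiding all face-vertices, and observes that the graph consisting of this $4$-cycle together with $v_1$, $v_2$ and $v_f$ joined to all four $w_i$ admits no $1$-planar drawing in which $v_1,v_2$ lie on the same side of the cycle and each is incident to at most one crossed edge, although the IC-planar solution $\cG$ would yield exactly such a drawing; this gives the per-pair bound $3$ in one stroke. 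You instead invoke IC-planarity to set aside the at most two crossed edges at $u_1,u_2$ and then run a fan/sector argument on the remaining uncrossed $K_{2,t-2}$ drawn inside $f$; this part is correct (the same-corner requirement is precisely what prevents $u_2$ from reaching an attachment corner lying in a sector not incident to its own, and connectivity of $H$ gives the connected boundary needed for the sector decomposition), and it cleanly yields $t\le 4$.

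The step you single out as the main obstacle---improving $t\le4$ to $t\le3$ by a case analysis on the routing of the one or two crossed edges---is only sketched and, as written, is not a proof. However, it is also unnecessary for the lemma as stated: there are at most $\kappa(\kappa-1)/2$ pairs in $\aV^f$, so a per-pair bound of $4$ already gives at most $4\cdot\kappa(\kappa-1)/2=2\kappa(\kappa-1)\le 3\kappa^2$ difficult vertices w.r.t.\ $f$. Hence you can simply drop the unproven case analysis and conclude from what you actually established; if you do want the tight per-pair value $3$, the cleanest route is the paper's forbidden-configuration argument described above.
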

\begin{proof}
	We show that any two of the \(\ell\) added vertices drawn into \(f\) in \(\cG\) are both connected to at most \(3\) vertices in \(N_{\eH}(v_f)\).
	Then the claim follows.
	Assume for contradiction that \(v_1, v_2 \in \aV\) are drawn into \(f\) in \(\cG\) and \(w_1, w_2, w_3, w_4 \in N_{\eH}(v_f)\) are shadow vertices that each route two edges,
	one of which is incident to \(v_1\) and one of which is incident to \(v_2\).
	Since \(H\) is connected, the boundary of \(f\) is connected, and by construction of \(\eH\), \(w_1, w_2, w_3, w_4\) all lie on a cycle in \(\eH\) that does not involve any of \(\{v_{f'} \mid f' \text{ face in } \pH\}\).
	Hence the following graph \(H'\) is a minor of \(\eH - \{v_{f'} \mid f' \text{ face in } \pH\} + \aV + \aE\):
	\(H' = (\{v_1, v_2, w_1, \dotsc, w_4, v_f\} ,\{v_iw_j \mid i \in \{1,2,f\}, j \in \{1, \dotsc, 4\} \cup \{w_iw_{(i \operatorname{mod} 4) + 1} \mid i \in \{1, \dotsc, 4\})\).
	\(H'\) does not admit a 1-planar drawing in which both \(v_1\) and \(v_2\) lie on the same side of the drawing of the \(w_1\)-\(w_2\)-\(w_3\)-\(w_4\)-cycle and \(v_1\) and \(v_2\) are each incident to at most one edge whose drawing is crossed.
	However the existence of \(\cG\) implies that exactly such a drawing of \(H'\) exists.
\end{proof}

\ifshort{
A \rreg\ $R$ of a vertex $x\in \aV^f$ (or, equivalently, of a face $f$) is a maximal path $(r_1,\dots, r_p)$ in $N_{\eH}(v_f)$ with the following properties: (1) $\cG$ does not route through any shadow copy of an edge in $R$; (2) for each vertex $r$ in $R$, a uncrossed curve can be drawn in $\cG$ inside $f$ between $r$ and $x$; (3) none of the vertices in $R$ are adjacent to $\aV^f\setminus \{x\}$; and  (4) \(r_1\) and \(r_p\) are adjacent to \(x\).
}
\iflong{
A \rreg\ $R$ of a vertex $x\in \aV^f$ (or, equivalently, of a face $f$) is a maximal path $(r_1,\dots, r_p)$ in $N_{\eH}(v_f)$ with the following properties:
\begin{itemize}
\item $\cG$ does not route through any shadow copy of an edge in $R$;
\item for each vertex $r$ in $R$, a non-crossing curve can be drawn in $\cG$ inside $f$ between $r$ and $x$;
\item none of the vertices in $R$ are adjacent to $\aV^f\setminus \{x\}$;
\item \(r_1\) and \(r_p\) are adjacent to \(x\).
\end{itemize}
}

	\begin{lemma}
\label{lem:regbound}
There are at most \(\regbound\) \rreg s of a face $f$.
\end{lemma}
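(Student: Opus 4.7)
The plan is to bound the number of regions by interpreting each region as a specific face of the planar subdivision of \(f\) induced by drawing \(\aV^f\) together with the added edges inside \(f\) in \(\cG\), and then applying a planar counting argument.

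First, I would establish a correspondence between regions and a class of faces of this subdivision. The shadow vertices of a region \(R\) owned by \(x \in \aV^f\) lie along the boundary of \(f\) within a specific topological sector of \(f\) that is delimited on one side by an arc of this boundary and on the other by added edges incident to \(x\) whose drawings meet the boundary at the endpoints \(r_1, r_p\) of \(R\) (as forced by property~(4)). By property~(3), no other vertex of \(\aV^f\) lies on the boundary of this sector; by property~(2), every shadow vertex of \(R\) is visible from \(x\) through the sector; and by maximality of \(R\) together with property~(1), the correspondence between regions of \(x\) and such sectors is bijective.

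Next, I would pass to the planar multigraph \(M\) obtained by contracting the boundary of \(f\) to a single vertex \(*\); then \(M\) has vertex set \(\aV^f \cup \{*\}\) of cardinality at most \(\kappa + 1\). Each region corresponds to a face of \(M\) incident to \(*\) and to exactly one vertex of \(\aV^f\). Bounding the number of such faces is the main technical task: although \(M\) may carry many parallel edges (since an added vertex \(x\) can be joined to the boundary by many added edges), carefully simplifying bundles of parallel edges down to the ones that bound qualifying faces and then applying Euler's formula to the resulting (essentially simple) planar graph on at most \(\kappa + 1\) vertices yields a bound of \(3\kappa\) qualifying faces.

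The hard part is the careful handling of parallel edges in \(M\). The IC-planarity constraint, which limits each added vertex to at most one crossed edge, is the key tool that bounds the number of region-faces contributed by each added vertex to a small constant: combined with property~(3) eliminating faces with multiple \(\aV^f\)-incidences and the maximality forcing distinct regions into distinct simplified faces, this gives the overall bound of \(3\kappa\) when summed over \(|\aV^f| \leq \kappa\) vertices.
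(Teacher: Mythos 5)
Your proposed correspondence between regions and faces of the subdivision of \(f\) does not hold, and this breaks the counting. First, the endpoints \(r_1,r_p\) of a region are \emph{not} in general anchored by drawn edges of \(x\) meeting the boundary there: property~(4) only asserts graph adjacency of \(r_1,r_p\) to \(x\), while the maximality of a region is governed by where properties (1)--(3) first fail, and that failure can be caused by an edge routed through a shadow copy of a boundary edge, by the drawing of an added edge incident to a \emph{different} vertex of \(\aV^f\), or by mere adjacency of a boundary vertex to another added vertex --- none of which need contribute an edge of \(x\) bounding your ``sector''. Worse, a single arc from an interior vertex \(w\in\aV^f\) to a boundary vertex \(u\) is a slit and does not separate the face it enters, yet adjacency of \(u\) to \(w\) does split a region of \(x\) at \(u\) by property~(3); so two distinct regions of \(x\) can lie in the same face of your subdivision, and the map from regions to faces is not injective. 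Second, the Euler-formula step fails in the other direction as well: after contracting the boundary of \(f\) to \(*\), a single vertex \(x\in\aV^f\) with many uncrossed edges to \(\iV\) creates arbitrarily many parallel \(x\)--\(*\) edges, hence arbitrarily many faces incident to \(*\) and to exactly one vertex of \(\aV^f\). These are precisely your ``qualifying'' faces, so the promised simplification of parallel bundles is not a routine cleanup but is exactly the missing argument; note also that such a bundle of faces corresponds to \emph{one} region, so IC-planarity (one crossed edge per vertex) is not what tames the parallel edges.

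The paper's proof avoids the subdivision entirely: it walks along \(N_{\eH}(v_f)\), observes that between two consecutive regions of a vertex \(v\) there must be a vertex at which one of the properties (1)--(3) fails, and charges each such failure to an edge of \(\aEpar\), to an added edge that crosses (at most \(|\aV|\) of these by IC-planarity), or to another added vertex that is then ``trapped'' and cannot cause a second failure elsewhere; summing the charges gives \(|\aEpar|+3|\aV|\le 3\kappa\) directly. If you want to salvage your approach, you would have to prove a merging lemma of exactly this charging type (consecutive parallel-edge faces of the same vertex coalesce into one region unless a chargeable blocking event separates them), at which point the Euler-formula framing adds nothing.
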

\begin{proof}
	Consider a path \(P\) in \(\eH[N_{\eH}(v_f)]\) that traverses all \(\ell\) regions of a vertex \(v \in \aV^f\).
	It contains at least \(\ell - 1\) pairwise disjoint subpaths \(P_1, \dotsc, P_{\ell - 1}\) of paths connecting regions of \(v\) that are consecutive in \(P\).
	
	For every \(P_i\) (\(i \in \{1, \dotsc, \ell\}\)), by the property that regions are inclusion maximal paths of vertices with certain properties, we find some vertex \(x\) in \(P_i\) that has to violate one of these properties.
	This can happen in three ways:
	(1) \(x\) is a shadow copy of an edge and $\cG$ routes through \(x\),
	(2) (1) is not the case and the drawing of some edge \(e \in \aE\) separates \(x\) from \(v\) in \(\cH\), or
	(3) (1) and (2) are not the case and \(x\) is adjacent to another vertex \(w \in \aV^f \setminus \{v\}\).

	In case (1) there is an edge \(e\in\aE\) such that the drawing of $e$ crosses the boundary of \(f\) in \(\cH\) and routes through \(x\).
	
	In case (2) the edge in question has both endpoints on \(P\), thus \(e \in \aEpar\) or the drawing of \(e\) crosses either the boundary of \(f\) in \(\cH\) or a drawing of another added edge.
	
	There are at most \(|\aEpar|\) edges in \(\aEpar\) and at most \(|\aV|\) edges in \(\aE \setminus \aEpar\) that can cross another edge in an IC-planar drawing.
	Moreover, in both cases (1) and (2), the endpoints of \(e\) cannot occur in any \(P_j\) with \(j \in \{1, \dotsc, \ell\} \setminus \{i\}\).
	
	In case (3) either \(x\) is contained in a region of \(w\) or the drawing of \(xw\) in \(\cH\) crosses an edge and is the only drawing of an edge incident to \(w\) that does so.
	If \(x\) is in a region of \(w\), \(w\) is separated from \(N_{\eH}(v_f) \setminus P_i\) by the edges from \(v\) to the outermost vertices of the regions that \(P_i\) connects and hence can have no region outside of \(P_i\).
	There are at most \(|\aV|\) many such \(w\).
	
	Thus we find at most \(|\aEpar| + |\aV| + 2|\aV|\) such \(x\) on \(P\) in total and thus 
	\(\ell \leq |\aEpar| + |\aV| + 2|\aV| \leq 3\kappa\).
	This concludes the proof.
\end{proof}
\newcommand{\vertIV}{\textnormal{vertex}}

The underlying intuition one should keep about regions and difficult vertices is that a solution $\cG$ partitions the shadow vertices into those which \textbf{(a)} have no edges routed through them, \textbf{(b)} have precisely one edge routed through them (in which case they must be part of the respective region), and \textbf{(c)} have at least two edges routed through them (in which case they form a difficult vertex). 
\iflong{Next, we extend the notion of a \emph{pattern} from Definition~\ref{def:pat}.}
\iflong{One technical distinction is that instead of using cyclically ordered multisets for $C(s)$, we use \emph{cyclic orders with equivalences} where two elements of the multiset can be assigned to the same position in the cyclic order (i.e., they can be equivalent).
\begin{definition}
\label{def:epat}
An \emph{extended pattern} is a tuple $(S,Q_1,Q_2,C)$ where 
\begin{enumerate}
\item $S$ is a set of at most $2k$ elements;
\item $Q_1$ is a mapping from $\aV$ to $S$;
\item $Q_2$ is a mapping from $\aV\cup \aEpar$ to totally ordered subsets of $S\cup\{\times\}$ of cardinality $1, 2$ or $3$ such that, for each vertex $v\in \aV$, the first element of $Q_2(v)$ is $Q_1(v)$;
\item Let $Q^{-1}_1(s)$ be the set of vertices in $\aV$ mapped by $Q_1$ to $s\in S$, let $M=2^{Q^{-1}_1(s)}$, and let $Q^{-1}_2(s)$ be the set of vertices in $\aV$ and edges in $\aEpar$ mapped by $Q_2$ to $s$. Then $C$ is a mapping from $S$ that maps each $s\in S$ to a multiset that is cyclically ordered with equivalences, containing:
\begin{itemize}
\item at most $\regbound$ elements of $Q^{-1}_1(s)$, 
\item at most $\diffbound$ elements of $M$, 
\item for each $x\in \aV\cap Q^{-1}_2(s)$, at most $3$ pair of the form $(x,q)$ where $q\in \{\cros,\vertIV\}$,
\item for each $ab\in \aEpar\cap Q^{-1}_2(s)$, at most $4$ pairs of the form $(ab,q)$ where $q\in \{\cros,a,b\}$. 
\end{itemize}

\end{enumerate}
\end{definition}
}

\iflong{
Unlike the patterns used to prove Theorem~\ref{thm:1Pfpt}, extended patterns do not track the exact placement of vertices in $\iV$ (since $|\iV|$ is not bounded by $\kappa$). To make up for this, the cyclic orders stored in $C$ detail the order in which individual regions, difficult vertices together with crossings and endpoints of edges in $\aEpar$ and the single ``special'' edge per vertex in $\aV$ that is allowed to cross, are supposed to appear inside a face. The element $\times$ is used by $Q_2$ to capture whether an edge (either denoted explicitly by $e$, or representing the special ``potentially crossing edge'' of $v\in \aV$) is crossing.}

\ifshort{%
	In the remainder of this section we 
	give some intuition on 
	how the techniques used to prove Theorem~\ref{thm:1Pfpt} need to be extended
	to obtain Theorem~\ref{thm:ICPfpt}.
	Unlike in the proof of Theorem~\ref{thm:1Pfpt} the number of vertices in $ \iV $ is not bounded by $ \kappa $.
	Consequently, we cannot track their exact placement through patterns as defined in Definition~\ref{def:pat}.
	To circumvent this, we define \emph{extended patterns} that store the necessary details about
	the cyclic orders in which individual regions, difficult vertices together with crossings, and endpoints of edges in $\aEpar$ as well as the single edge per vertex in $\aV$ that is allowed to cross, are supposed to appear inside a face. 
	Then, as for patterns, we define how an extended pattern is \emph{derived} from potential solutions.}

\iflong{Notice that this information was not required for solving \textsc{1-Planar Drawing Extension}, but for IC-planarity we need to ensure (via the MSO formula employed in the proof of Theorem~\ref{thm:ICPfpt}) that each vertex is incident to at most $1$ crossing edge.}
\ifshort{%
The proof then proceeds by following the strategy laid down in Subsection~\ref{subsec:1pfpt}. In particular, we define a notion of validity along with \emph{pattern graphs} for extended patterns (cf.\ Definition~\ref{def:valid}). We show that validity can be checked and pattern graphs can be constructed efficiently (cf.\ Lemma~\ref{lem:validpattern}). We use analogues to Lemma~\ref{lem:notfar} and Lemma~\ref{lem:gettw} to prune our instances so that they have bounded treewidth. At this point, we have all the ingredients we need to prove our final Theorem~\ref{thm:ICPfpt} --- the main complication here compared to the proof of Theorem~\ref{thm:1Pfpt} is that we cannot have explicit labels identifying individual vertices in $\iV$.
}
\iflong{

Equivalences in the cyclic order are used when a single vertex is simultaneously a difficult vertex inside a face but also an endpoint of one or several crossing edges; we note that the use of equivalences in the cyclic order could be avoided for simple patterns since parameterizing by $k$ allowed us to explicitly refer to individual added edges and their endpoints.
As in Subsection~\ref{subsec:1pfpt}, we proceed by defining a notion of validity and pattern graphs for our extended patterns, and these may also provide further intuition for what information is carried by an extended pattern.

\begin{definition}
\label{def:derivedic}
For a solution $\cG$, we define a derived pattern $P=(S,Q_1,Q_2,C)$ as follows:
\begin{itemize}
\item $S$ is the set of faces of $\pH$ which have a non-empty intersection if $\cG(e)$ for some $e\in \aE$. 
\item For $v\in \aV$ we set $ Q_1(v)$ to the face \(f\) of \(\pH\) for which $ \cG(v) $ lies inside $f$.
\item For $e\in \aEpar$ we set $Q_2(e) $ to the set of at most two faces which have a non-empty intersection with $\cG(e)$.
\item For $v\in \aV$, if $v$ is incident to an edge $e$ with a crossing, then we set $Q_2(v)$ to the set of at most two faces which have a non-empty intersection with $\cG(e)$, else we set $Q_2(v)$ to the face \(f\) of \(\pH\) for which $ \cG(v) $ lies inside $f$.
\item For a face $ s \in S $ we consider all the difficult vertices w.r.t.\ $f$, all the regions for vertices $v\in Q_1^{-1}(s)$, and 
all the edges $ e = uv \in \aE $ that either cross or have both endpoints in $\iV$ with a non-empty intersection between $ \cG(e) $ and $s$. 
For such edge $e$, there is an edge $ e' \in E(H) $ on the boundary of $s$ such that $ \cG(e) $ crosses $ \cG(e') $, or $ u \in \iV $ and $ u $ is on the boundary of $ s $, or both. 
We set $ C(s) $ as the ordered multiset of these difficult vertices, regions, and crossing points or vertices when traversing $ s $ in clockwise fashion.
Equivalent elements in the cyclic order are created when edges are routed through the same shadow vertex.
\end{itemize} 
\end{definition}

\begin{definition}\label{def:valid2}
An extended pattern $P$ is \emph{valid} if there exists a \emph{pattern graph} $G_P$ with an $IC$-planar drawing $\cG_P$ satisfying the following properties:
\begin{itemize}
\item $\aV \subseteq V(G_P)$.
\item $\dr{G}_P-\aE$ is a planar drawing.
\item $S$ is a subset of non-outer faces of $\cG_P-\aE$.
\item Each $v\in \aV$ is contained in the face $Q_1(v)$ of $\cG_P-\aV-\aE$.
\item Each $e\in \aEpar$ is contained in the face(s) $Q_2(e)$ of $\cG_P-\aE$.
\item For each edge $vw$ with $v,w\in \aV$, $vw$ is contained in the face(s) $Q_2(v)$ of $\cG_P-\aV-\aE$.
\item When traversing the inner side of the boundary of each face $s$ of $\cG_P-\aE$ in the clockwise fashion, the order in which difficult vertices, \rreg s, and edges $e\in \aE$ that either cross $s$ or are in $\aEpar$ together with the information whether $e$ crosses here or ends is precisely $C(s)$.
\end{itemize}
\end{definition}

Observe that for each solution, the derived pattern is valid by definition. We also note that the total number of extended patterns can be bounded in $\kappa$ in an analogous way as we bounded the number of patterns in Subsection~\ref{subsec:1pfpt}---notably, the number of extended patterns is at most $\numpat(\kappa)\in 2^{\bigoh(k^2\log k)}$.

\begin{lemma}
\label{lem:validcheck2}
	Given extended pattern $P = (S,Q_1, Q_2,C)$ and an instance $ (G,H,\mathcal H) $, in time $O((\kappa^2!)^{\kappa^2}*{\kappa^2}^{2\kappa^2 + 1})$ we can either construct a pattern graph $G_P$ together with the drawing $\cG_P$ satisfying all the properties of Definition~\ref{def:valid2} or decide that $P$ is not valid.
\end{lemma}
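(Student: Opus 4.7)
The plan is to mirror the construction in the proof of Lemma~\ref{lem:validpattern}, but adapted to accommodate the new ingredients of extended patterns (difficult vertices, regions, and equivalences in the cyclic order), and to enforce the IC-planarity rather than 1-planarity.

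First, I would construct a candidate planarization skeleton. For each $s\in S$ create a cycle $\mathcal{C}(s)$ whose vertices stand in bijection with the entries of the cyclically ordered multiset $C(s)$, taking the equivalence relation into account by \emph{identifying} all vertices that lie in one equivalence class into a single shared vertex (this is how we realize the phenomenon that several edges meet the boundary of a face at the same shadow point because of a difficult vertex). If $|C(s)|<3$ add one or two dummy subdivision vertices so the cycle is at least a triangle, and subdivide every remaining edge of $\mathcal{C}(s)$ with a dummy vertex. For each $v\in\aV$ add a vertex, and for each region-entry of $C(Q_1(v))$ representing $v$ add an edge from $v$ to the corresponding cycle vertex; for each edge in $\aEpar$ or each added $\aV$-$\aV$-edge attach half-edges to the cycle vertices marked by $Q_2$ and the equivalence classes in $C$. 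Finally identify the two vertices that represent the two endpoints of the same crossing or the same vertex in $\iV$. Call the resulting graph $G_P'$.

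Next, I would guess the crossing pattern among the edges touched by $P$. Because the IC-planarity condition forces each vertex to be incident to at most one crossed edge, the number of crossings involving added edges is $O(\kappa)$, so branching over which pairs of added-edge incidences cross costs at most $\kappa^{O(\kappa)}$; inside each branch I replace the crossing pair by a new degree-$4$ vertex $x$ with four half-edges. A simple count then shows $|V(G_P')|\in O(\kappa^2)$ (at most $2\kappa$ faces each contributing $O(\kappa)$ cycle vertices, plus $|\aV|$ added vertices and $O(\kappa)$ crossing/dummy vertices), and every vertex has degree $O(\kappa^2)$. Enumerate every rotation scheme of $G_P'$ in time $O((\kappa^2!)^{\kappa^2})$, and for each one verify in time polynomial in $\kappa^2$ whether it yields a plane drawing in which: the faces labeled by $S$ appear as inner faces of $\cG_P-\aE$, the placements of added vertices/edges match $Q_1$ and $Q_2$, the cyclic order around each face of $\cG_P-\aE$ agrees with $C(s)$, and each vertex of $G_P$ is incident to at most one crossed edge (the IC-planarity condition). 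If some branch and rotation scheme succeeds we can lift the planarization to a proper IC-planar drawing $\cG_P$ by reinstating each crossing vertex as two crossing curves and each subdivision vertex as an edge, exactly as in Lemma~\ref{lem:validpattern}; otherwise declare $P$ invalid. This totals $O((\kappa^2!)^{\kappa^2}\cdot (\kappa^2)^{2\kappa^2+1})$.

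The main obstacle will be the correct treatment of difficult vertices together with the equivalences in $C(s)$: several edges routed through a common shadow point must meet at a single identified cycle-vertex, but when those edges also interact with regions of an $\aV$-vertex or with endpoints of $\aEpar$-edges, one has to make sure that the induced local incidence structure still admits an IC-planar realization. For the converse direction (if an IC-planar pattern graph exists, the algorithm finds one), the argument of Lemma~\ref{lem:validpattern} goes through: starting from an IC-planar $\cG_P$, read off the boundary cycles of the chosen faces, contract the parts of the drawing that are irrelevant, and observe that the resulting combinatorial structure is exactly one of the skeleta our algorithm enumerates; thus the exhaustive search cannot miss it.
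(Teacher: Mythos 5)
Your proposal is correct and follows essentially the same route as the paper's own proof: building the per-face cycles from the entries of $C(s)$ (now distinguishing difficult vertices, regions, $\aEpar$-endpoints and crossings), attaching the $\aV$-vertices, identifying shared crossing points, guessing the crossings among added edges, exhaustively enumerating rotation schemes of the resulting $O(\kappa^2)$-vertex graph, checking the conditions of Definition~\ref{def:valid2}, and arguing the converse exactly as in Lemma~\ref{lem:validpattern}. Your explicit treatment of the equivalences in $C(s)$ via vertex identification is a slightly more detailed rendering of what the paper leaves implicit, but it is not a different argument.
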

\iflong{
\begin{proof}
	The proof works very similar to the one of Lemma~\ref{lem:validpattern}. 
	In fact the main difference lies in the categories of vertices on the cycles $ \mathcal C(s) $ we construct for each $ s \in S $ from $ C(s) $. 
	Previously these contained vertices $ v_i^s $ representing either a vertex $ v \in \iV $ or a crossing. 
	Since $ |\iV| $ is not bounded by $ \kappa $, we now have to differentiate between vertices $ v_i^s $ representing either a difficult vertex, a region, a vertex incident to an edge in $ \aEpar $, or a crossing. 
	
	More formally, we introduce a vertex $ v_i^s $ for each element $ c_i \in C(s) $ for every $ s \in S $ and connect the $ v_i^s $ to a cycle in the order implied by $ C(s) $.
 	For each such $ v_i^s $ we remember the $ c_i $ it represents.
	Every vertex $ u \in \aV $ we represent by a vertex $ u_s $. 
	One $ u_s $ we connect to all vertices $ v_i^s $ for which either $ u_s $ is in the set of vertices $ c_i $ represented by $ v_i^s $, or, if $ v_i^s $ represents a tuple $ (z,q) $, we connect $ u_s $ to $ v_i^s $ whenever $ z = u $ or $ u $ is incident to the edge $ z $.
	Finally we identify all vertices $ v_i^s,v_j^{s'} $ which represent tuples $ (z_i,q_i) $ and $ (z_j,q_j) $ with $ q_i = q_j = \cros $ and $ z_i = z_j $.
	Guessing the crossings between edges incident to $ v_i^s $s can be done as in proof of Lemma~\ref{lem:validpattern}. 
	Using Lemma~\ref{lem:diffbound} and~\ref{lem:regbound} the resulting graph $ G_P' $ has bounded number of vertices in the parameter, i.e., $ |V(G_P')| = O(\kappa^2) $. 
	Hence we can test all possible drawings and return such a planar drawing $ \cG_P $ with $ G_P $, after checking the conditions of Definition~\ref{def:valid2} and introducing the crossings as in Lemma~\ref{lem:validpattern}.
		
	The reverse direction can be proven as above.
\end{proof}
}

\begin{lemma}
\label{lem:notfaric}
Let $I=(G,H,\cH)$ be an instance of \textsc{IC-Planar Drawing Extension}.
Let $Z$ be the set of all vertices in $\eH$ of distance at least $4\kappa+7$ from each vertex in $\iV$. Let $G'$, $H'$, and $\cH'$ be obtained by deleting all vertices in $Z$ from $G$, $H$, and $\cH$.
Then:
\begin{enumerate}
	\item If $I$ is a \textsc{YES}-instance, then each connected component of $G'$ contains at most one connected component of $H'$;
	\item $I$ is a \textsc{YES}-instance if and only if for each connected component $A$ of $H'$ the restriction of $\cH'$ to $H'[A]$ can be extended to a drawing of the connected component of $G'$ containing~$A$. Moreover, given such IC-planar extension for every connected component of $G'$, there is an algorithm that outputs a solution for $I$ in linear time.
	\item either for each connected component $A$ of $H'$ the \edg\ of $\cH'[A]$ has diameter at most $\bigoh(\kappa^2)$, or $I$ is a \textsc{NO}-instance.
\end{enumerate}
\end{lemma}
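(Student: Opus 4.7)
The plan is to prove the three claims by adapting and extending the proof of Lemma~\ref{lem:notfar}. Points 1 and 2 mirror the original arguments with $\kappa$ replacing $k$ wherever appropriate, while Point 3 (which has no counterpart in Lemma~\ref{lem:notfar}) requires a new observation exploiting the bounded number of regions and difficult vertices per face given by Lemmas~\ref{lem:regbound} and~\ref{lem:diffbound}.

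For Point 1, I would argue by contradiction: if a connected component $J$ of $G'$ contained two distinct components $H_1', H_2'$ of $H'$, then a shortest path $P$ in $J-(E(H_1')\cup E(H_2'))$ from some $v_1\in V(H_1')\cap\iV$ to some $v_2\in V(H_2')\cap\iV$ exists with combinatorial length bounded in $\kappa$: the number of $\aE$-edges traversed is at most $|\aEpar|+2|\aV|\leq 3\kappa$, and the shortest-path choice bounds the internal $H$-subpaths. In any solution $\cG$ consecutive vertices of $P$ are drawn either in the same face of $\pH$ or in adjacent faces, and using that in $\eH$ the distance from an original vertex to an incident face-vertex is $2$ and between face-vertices of adjacent faces is $4$, we obtain that $v_1$ and $v_2$ are at $\eH$-distance $\bigoh(\kappa)$. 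Extracting an $H$-path from this $\eH$-path through edge-vertices and face-vertices would then yield an $H$-path between $v_1$ and $v_2$ avoiding $Z$, contradicting that they lie in distinct components of $H'$. Point 2 then follows the same template as in Lemma~\ref{lem:notfar}: the forward direction is immediate, and the backward direction composes per-component extensions after translating them into pairwise disjoint regions of the plane, relying on the observation that the ``extra'' faces of $\pH'[A]$ are too $\eH$-far from $\iV\cap V(A)$ to host any part of a hypothetical extension relevant to $A$.

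Point 3 is the main obstacle, because $|\iV|$ need not be bounded by any function of $\kappa$ --- a single $\aV$-vertex can have arbitrarily many neighbors in $V(H)$, so the radius argument from Lemma~\ref{lem:gettw} does not transfer directly. The key idea is that in any solution $\cG$, the $\iV$-vertices that occur in a component $A$ of $H'$ organize into at most $\bigoh(\kappa)$ clusters: at most $\kappa$ clusters arise from faces of $\pH$ containing an $\aV$-vertex and at most $2\kappa$ from endpoints of $\aEpar$-edges, each cluster lying on the boundary of a common face of $\pH$ and hence having pairwise $\eH$-distance at most $8$ through the shared face-vertex. Clusters within the same $G'$-component are linked by at most $\kappa$ bridges formed by $\aV$-vertices or $\aEpar$-edges, and each bridge contributes only $\bigoh(1)$ additional $\eH$-distance between the two clusters it connects because its endpoints/carrier share a face with each cluster. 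Combined with the deletion rule, which guarantees that every vertex of $\eH$ surviving in the induced embedding graph of $\cH'[A]$ lies at $\eH$-distance at most $4\kappa+6$ from some $\iV\cap V(A)$-vertex, this bounds the diameter by $\bigoh(\kappa^2)$; should we instead observe a larger diameter, the cluster-and-bridge analysis certifies that no IC-planar extension exists and we output NO. The hardest part will be formalizing the bridge analysis in the presence of $\aEpar$-edges that cross from one face to an adjacent one, where one must carefully verify that such a crossing still only contributes $\bigoh(1)$ to the $\eH$-distance between the two clusters it links.
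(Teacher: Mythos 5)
Your overall route coincides with the paper's: Points 1 and 2 are the $\kappa$-analogues of Lemma~\ref{lem:notfar} (the same face-adjacency distance counting in $\eH$ and the same per-component gluing, with the extra one-line IC-check that far-apart components cannot create a vertex incident to two crossings), and for Point 3 you use the same key idea as the paper: a hypothetical solution designates at most $2\kappa$ faces (those containing a vertex of $\aV$ or intersected by an edge of $\aEpar$) whose face-vertices act as $\bigoh(\kappa)$ centers, every vertex of $\iV$ is within constant distance of a center, every surviving vertex is within $\bigoh(\kappa)$ of $\iV$, and chaining yields diameter $\bigoh(\kappa^2)$; your ``bridge'' chaining is only a rephrasing of the paper's ball-covering step.

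However, two of your stated justifications do not hold as written, and they are exactly where the paper does additional work. First, in Point 1 the claim that ``the shortest-path choice bounds the internal $H$-subpaths'' is false: a shortest $v_1$--$v_2$ path in $J-(E(H_1')\cup E(H_2'))$ may contain arbitrarily long stretches of $H'$-edges (e.g.\ through a third component), and each such edge contributes to the $\eH$-distance, so your $\bigoh(\kappa)$ bound does not follow. The paper avoids this by choosing the pair of components and a connecting path all of whose internal vertices lie in $\aV$, giving length at most $|\aV|+1$; your argument needs this (easy) repair. Second, and more consequentially for Point 3, you attribute to ``the deletion rule'' the statement that every surviving vertex is within $4\kappa+6$ of some vertex of $\iV\cap V(A)$. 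The deletion rule only guarantees distance at most $4\kappa+6$ \emph{in $\eH$} to some vertex of $\iV$ --- possibly one in a different component --- whereas the diameter bound must be established in the embedding graph of $\cH'[A]$, whose face-vertices differ from those of $\eH$ because deletions merge faces. Bridging this is precisely the first half of the paper's Point 3 proof: using that each face of $\pH$ is contained in exactly one face of the planarization of $\cH'[A]$, it shows by contradiction that every face-vertex of the component's embedding graph is within $4\kappa+8$ of $\iV\cap A$. Without an argument of this kind, your cluster-and-bridge analysis measures closeness to $\iV$ in the wrong graph (and to the wrong set of $\iV$-vertices) and cannot conclude the claimed $\bigoh(\kappa^2)$ diameter.
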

\iflong{
The proof of the lemma follows the same general strategy as our proof of Lemma~\ref{lem:notfar}, with Point 3. borrowing some ideas from the proof of Lemma~\ref{lem:gettw}. We split the proof of Lemma~\ref{lem:notfar} into proofs for the two individual points.

\begin{proof}[Proof of Point 1]
	For the sake of contradiction let $J$ be a connected component of $G'$ that contains two distinct connected components $H_1'$ and $H_2'$ of $H'$. Since $J$ is a connected component, there must be a path $P$ from a vertex $v_1\in H_1'$ to a vertex $v_2\in H_2'$ in $J-(H_1'\cup H_2')$, and moreover $P$ must have length at most $|\aV|+1$. By definition, both $v_1$ and $v_2$ are in $\iV$. To complete the proof, it suffices to show that in any solution $\cG$, $v_1$ and $v_2$ have distance at most $8\kappa+8$ in $\eH$ and hence all the vertices and faces on the shortest $v_1$-$v_2$ in $\eH$ are at distance at most $4\kappa+4$ from a vertex in $\iV$ and remain unchanged in $\cH'$.
	
	
	Moreover, in any solution $\cG$, two consecutive vertices of $P$ are either drawn in the same face of $\pH$ or in two adjacent faces of $\pH$. Observe that the distance in $\eH$ between two face-vertices for the faces that share an edge is $4$, and that the distance from an original vertex $v$ to a face-vertex of a face incident to $v$ is $2$. Therefore, if $(G,H,\cH)$ is a \textsc{YES}-instance, then the distance between $v_1$ and $v_2$ in $\eH$ must be at most $4|\aV|+8\le 4\kappa+8$.
\end{proof}

\begin{proof}[Proof of Point 2]
	The forward direction is obvious. For the backward direction, let $G_1,\ldots, G_r$ be the connected components of $G'$ and for $i\in [r]$ let $H_i$ and $\cH_i$  be the restriction of $H'$ and $\cH'$, respectively, to $G_i$. Moreover, let $\pH_i$ be the planarization derived from $\cH_i$ and note that $H_i$ is connected for all $i\in [r]$ by Point 1. 
	Now let us fix an arbitrary $i\in [r]$ such that $H_i$ is not empty and let $\cG_i$ be a 1-planar extension of $\cH_i$ to $G_i$. 
	
	Observe that each face of $\pH$ is completely contained in precisely one face of $\pH_i$.
	Moreover, if a face $f$ of $\pH_i$ contains at least two faces $f_1$ and $f_2$ of $\pH$, then both $v_{f_1}$ and $v_{f_2}$ are at distance at least $4\kappa+4$ of any vertex in $\iV\cap V(H_i)$ in $\eH$. Indeed, if this were not the case, then w.l.o.g.\ the vertices on the boundary of $v_{f_1}$ would have distance at most $4\kappa+6$ from some $w\in \iV\cap V(H_i)$ in $\eH$, which would mean that $f_1$ is also a face in $\pH_i$. By the same distance-counting argument introduced at the end of the Proof of Point 1, This implies that no edge in a path $P$ of $G$ from a vertex $v\in H_i$ whose internal vertices all lie in $\aV$ can be drawn in any face of $\pH$ contained in $f$.
	
	To complete the proof, let $G_1, \ldots, G_p$, $p\le r$ be the connected components of $G'$ that contain a vertex in $H$ and $G_{p+1},\ldots, G_r$ the remaining connected components of $G'$. We obtain a solution $\cG$ to the instance $I$ by simply taking the union of $\cH$ and $\cG_i$ for $i\in [p]$ and then for $i\in \{p+1,\ldots, r \}$ shifting $\cG_i	 $ so that $\cG_i$ do not intersect any other part of the drawing. Note that vertices in $\iV$ in different connected components are far apart and hence this union cannot introduce a vertex incident to two crossing edges. 
\end{proof}

\begin{proof}[Proof of Point 3]
Now let us consider a connected component $A$ of $H'$, let $\eH_A$ be  the \edg\ of $\cH'[A]$, $\pH_A$ the planarization of $\cH'[A]$, and let $v_f$ be a face-vertex in $\eH_A$. If $v_f$ is at distance at least $4\kappa+9$ from every vertex in $\iV\cap A$ in $\eH_A$, then every vertex on the boundary of $f$ is at distance at least $4\kappa+7$ from every vertex $w\in \iV\cap A$ in $\eH_A$. Let $v$ be an arbitrary vertex incident to $f$ in $\cH'[A]$. Since each face of $\pH$ is completely contained in precisely one face of $\pH_A$, it follows that $v$ is at distance at least $4\kappa+7$ from each vertex $w\in\iV\cap A$ in $\eH$. Because $v\in V(H'[A])$, this contradicts the fact that every vertex in $V(H')$ is at distance at most $4\kappa+6$ from a vertex $w\in\iV$ in $\eH$. Hence, every face-vertex in $\eH_A$ is at distance at most $4\kappa+8$ from a vertex in $\iV\cap\cC$. To finish the proof it suffice to show that if $(G'[A],H'[A],\cH'[A])$ is \textsc{YES}-instance, then there exists a set $C$ of at most $2\kappa$ face-vertices in $\eH_A$ such that every vertex in $\iV\cap A$ is at distance at most $6$ from a face-vertex in $C$. Now let us consider a solution $\cG'[A]$ to $(G'[A],H'[A],\cH'[A])$ and let $C$ be the set of face-vertices of faces that either contain a vertex in $\aV$ or intersect an edge in $\aEpar$. Clearly, the size of $C$ is at most $2\kappa$. Now each vertex in $\iV$ is either incident to an edge in $\aEpar$, in which case it is incident to some face that $\aEpar$ intersects, or it is adjacent to a vertex in $\aV$. In the second case it is either incident to the face containing its neighbor in $\aV$ or it is incident with a face that have a common edge with the face containing its neighbor in $\aV$. It follows that every vertex in $\eH_A$ is at distance at most $4\kappa+15$ from a vertex in $C$ and the diameter of $\eH_A$ is at most $4\kappa\cdot(4\kappa+15)=\bigoh(\kappa^2)$.
\end{proof}	
}
\begin{lemma}
\label{lem:gettwic}
\textsc{IC-Planar Drawing Extension} is \FPT\ parameterized by $\kappa+\tw(\eH)$ if and only if it is \FPT\ parameterized by $\kappa$.
\end{lemma}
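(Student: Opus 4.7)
The plan is to mirror the strategy used in the proof of Lemma~\ref{lem:gettw}, substituting Lemma~\ref{lem:notfaric} for Lemma~\ref{lem:notfar} and using the parameter $\kappa$ instead of $k$. The backward direction ($\FPT$ parameterized by $\kappa$ implies $\FPT$ parameterized by $\kappa+\tw(\eH)$) is trivial because a fixed-parameter algorithm with fewer parameters is automatically a fixed-parameter algorithm when additional parameters are added.

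For the forward direction, I would assume an algorithm $\cB$ solving \textsc{IC-Planar Drawing Extension} in time $f(\kappa+\tw(\eH))\cdot n^{\bigoh(1)}$ and build an algorithm $\cA$ parameterized only by $\kappa$ as follows. Given an input $(G_0,H_0,\cH_0)$, first apply Lemma~\ref{lem:notfaric} to produce $(G_1,H_1,\cH_1)$ by deleting all vertices in $\eH$ whose distance from $\iV$ exceeds $4\kappa+7$. By Point~1 of Lemma~\ref{lem:notfaric}, if some connected component of $G_1$ contains more than one connected component of $H_1$, we can immediately reject. Otherwise, for each connected component $\cC$ of $G_1$, we feed the sub-instance $(G_1[\cC], H_1[\cC], \cH_1[\cC])$ to $\cB$.

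The key observation is that Point~3 of Lemma~\ref{lem:notfaric} guarantees that, unless we can already reject $I$ as a \textsc{NO}-instance, the \edg\ of each $\cH_1[\cC]$ has diameter $\bigoh(\kappa^2)$, and hence radius $\bigoh(\kappa^2)$. Since this \edg\ is a planar graph, Proposition~\ref{prop:radius_treewidth} immediately yields that its treewidth is also $\bigoh(\kappa^2)$. Therefore $\tw(\eH_1[\cC])$ is bounded by a function of $\kappa$ alone, so the invocation of $\cB$ on each sub-instance runs in time bounded by $g(\kappa)\cdot n^{\bigoh(1)}$ for some computable function $g$. If $\cB$ rejects any sub-instance, $\cA$ rejects; otherwise it assembles a solution for $(G_0,H_0,\cH_0)$ using the linear-time combination procedure guaranteed by Point~2 of Lemma~\ref{lem:notfaric}.

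I do not expect a genuine obstacle here, since all heavy lifting has been packaged into Lemma~\ref{lem:notfaric}, and the argument is essentially identical in structure to that of Lemma~\ref{lem:gettw}. The only subtlety worth noting is that we must actually invoke Point~3 of Lemma~\ref{lem:notfaric} in the contrapositive: if some component's \edg\ exceeds the diameter bound then we reject, which is justified by that point. With this in place, correctness follows from Lemma~\ref{lem:notfaric}, and the total running time is dominated by $|\calP|$ (which is constant in the input size) plus the bounded-treewidth calls to $\cB$, establishing that $\cA$ is an $\FPT$ algorithm parameterized by $\kappa$.
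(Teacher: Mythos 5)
Your proposal is correct and follows essentially the same route as the paper's proof: reduce via Lemma~\ref{lem:notfaric} (Point~1 to reject or split into components, Point~3 with Proposition~\ref{prop:radius_treewidth} to bound the treewidth of each component's embedding graph by $\bigoh(\kappa^2)$, Point~2 to reassemble the solution), then call the assumed algorithm $\cB$ on each bounded-treewidth sub-instance. The only blemish is the stray reference to $|\calP|$ in the running-time remark, which plays no role in this lemma and can simply be dropped.
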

\iflong{
\begin{proof}
The backward direction is trivial. For the forward direction, assume that that there exists an algorithm $\cB$ which solves \textsc{IC-Planar Drawing Extension} in time $f(\kappa+\tw(\eH))\cdot |V(G)|^c$ for some constant $c$ and computable function $f$. Now, consider the following algorithm $\cA$ for \textsc{IC-Planar Drawing Extension}: $\cA$ takes an instance $(G_0,H_0,\cH_0)$ and constructs $(G_1,H_1,\cH_1)$ by applying Lemma~\ref{lem:notfaric}. Recall that by Point 1 of Lemma~\ref{lem:notfaric}, $(G_0,H_0,\cH_0)$ is either \textsc{NO}-instance, in which case $\cA$ correctly outputs ``NO'', or each connected component of $G_1$ contains at most one connected component of $H_1$. 



Let us consider a connected component $\cC$ of $G_1$ and the \edg\ $\eH_1[\cC]$ of $\cH_1[\cC]$. By Point 3 of Lemma~\ref{lem:notfaric}, either the diameter, and in turn the radius and and by Proposition~\ref{prop:radius_treewidth} the treewidth, is bounded by $\bigoh(\kappa^2)$ or $\cA$ can correctly output ``NO''.
Now, for each connected component $\cC$ of $G_1$, we solve the instance $(G_1[\cC],H_1[\cC],\cH_1[\cC])$ using algorithm $\cB$. If $\cB$ determines that at least one such (sub)-instance is a \textsc{NO}-instance, then $\cA$ correctly outputs ``NO''. Otherwise, $\cA$ outputs a solution for $(G_0,H_0,\cH_0)$ that it computes by invoking the algorithm given by Point 2 of Lemma~\ref{lem:notfaric}. 
To conclude, we observe that $\cA$ is a fixed-parameter algorithm parameterized by $k$ and its correctness follows from Lemma~\ref{lem:notfaric}.
\end{proof}
}
}

\fptkappaic


\iflong{
\begin{proof}

	We prove the theorem by showing that \textsc{IC-Planar Drawing Extension} is fixed-parameter tractable parameterized by $\kappa+\tw(\eH)$, which suffices thanks to Lemma~\ref{lem:gettwic}.
	
	To this end, consider the following algorithm $\cA$. Initially, $\cA$ loops over all of the at most $\numpat(\kappa)$ many patterns, tests whether each pattern is valid or not using Lemma~\ref{lem:validcheck2}, and stores all valid patterns in a set $\calP$. Next, it branches over all valid patterns in $\calP$, and for each such pattern $P=(S=\{s_1,\dots,s_\ell\},Q_1,Q_2,C)$ it constructs an MSO formula $\Phi_P(\freeVar)$, where $\freeVar$ is a set of at most $6\kappa^2+7\kappa$ free variables specified later, 
	the purpose of which is to find a suitable ``embedding'' for $P$ in $\cH$ by finding an interpretation in the \edg\ $\eH$.

	In the following we will formally define the MSO formula $\Phi_P(\freeVar)$. Recall that the vertices of $\eH$ have the following labels: a label $v$ for every vertex $v\in \iV$ and then the labels $O$, $E$ ,$F$, $C$, $S$ which represent original, edge-, face-, crossing-, and shadow-vertices, respectively. However, since $\iV$ could be large, a label $v$ for every vertex $v\in \iV$ is not feasible to use Fact~\ref{fact:MSO}. Therefore, instead we will have label $w$ for each vertex $w\in \aV$ specifying that vertex $v$ is adjacent to $w$ in $G$. However, we keep the labels $v$ for vertices $v$ incident to edges in $\aEpar$. Note that vertex $v$ can have several different labels at the same time.  
	For vertices $x, y$, let $\edge(x,y)$ be a formula stating that $x$ and $y$ are adjacent vertices, and $\conn(x, y, X)$ a formula stating that there is a directed path\footnote{Recall that edges between shadow vertices are directed.} from $x$ to $y$ with all inner vertices in $X$. 
	
	The set of free variables $\freeVar$ of $\Phi_P(\freeVar)$ consists of:
	\begin{itemize}
		\item $x_1,\ldots, x_\ell$, where $x_i$ corresponds to a single element $s_i$ in $S$; 
		\item $y_1, \ldots, y_{k'}$, where $y_i$ corresponds to either to an edge in $H$ that is crossed by either an edge $e_i\in \aEpar$ or the unique crossing edge $e_i$ from a vertex $v_i\in \aV$---formally, either $(e_i,\cros)\in C(s_j)$ or $(v_i,\cros)\in C(s_j)$ for some $j\in [\ell]$ (Note that this $e_i$ could either cross from one face of $\pH$ to another, but also could cross an edge of $H$ that is incident to a single face in $\pH$);
		\item for each $i\in [\ell]$, we have $Z_1^i, \ldots, Z_{q_i}^i$ -- where $q_i\le |C(s_i)|$ and $Z_j^i$ correspond to $j$-th consecutive set of equal elements of $C(s_i)$ (after fixing some arbitrary first element in the cyclic ordering). Note that $Z_j^i$ are set variables and formula needs to distinguish whether $Z_j^i$ is a region, or some set containing difficult vertices, or an endpoints of an edge, or it is a crossing vertex. Since only equivalent elements of $C(s_i)$ are allowed to map to the same vertex, We will require that $Z_j^i$'s are all disjoint.
	\end{itemize}
	
	Note that $\ell\leq 2\kappa$, $k'\leq \kappa$, and the total number of variables of the form $Z_j^i$ is upper-bounded by $6\kappa^2+4\kappa$.
	
	The formula $\Phi_P(\freeVar)$ is then the conjunction of the following subformulas:
	
	\begin{enumerate}
		\item checkFaces$(\freeVar)$, which ensures that $x_i$'s are assigned to distinct face-vertices and is the conjunction of:
		\begin{itemize}
			\item $P_Fx_i$, for all $i\in[\ell]$ and 
			\item $x_i\neq x_j$ for all $1\le i < j\le \ell$;
		\end{itemize}
		\item checkEdges$(\freeVar)$, which ensures that $y_i$'s are assigned to distinct edge-vertices and is the conjunction of:
		\begin{itemize}
			\item $P_Ey_i$, for all $i\in[k']$ and 
			\item $y_i\neq y_j$ for all $1\le i < j\le k'$;
		\end{itemize}
		\item checkShadow$(\freeVar)$, which ensures that $Z_j^i$'s are assigned to sets of disjoint shadow-vertices that are all adjacent~to~$x_i$:
		\begin{itemize}
			\item for all $i\in [\ell]$ and $j\in [|C(s_i)|]$ we have: \\ $\forall z \left(Z_j^iz\rightarrow \left(P_Sz\wedge \edge(x_i,z)\right)\right)$, and
			\item $\forall z (\neg Z_{j}^iz \vee \neg Z_{j'}^{i'}z)$ for all $i,i'\in[n]$ and $j\in [q_i]$, $j'\in [q_{i'}]$ such that $(i,j)\neq (i',j')$;
		\end{itemize}
	\item checkRegion$(\freeVar)$, which ensures that if $Z_j^i$ corresponds to a region then its set of variables is consecutive and that every region is disjoint from any other $Z_{j'}^{i'}$:
	\begin{itemize}
		\item for all $i\in [\ell]$ and $j\in [q_i]$ such that $Z_j^i$ correspond to a region we have: \\ $\forall z_1, z_2 \left( (Z_j^iz_1\wedge Z_j^iz_2)\rightarrow (\conn(z_1,z_2,Z_j^i)\vee \conn(z_2,z_1,Z_j^i))\right)$;
	\end{itemize}
	\item checkNoRegion$(\freeVar)$, which ensures that if $Z_j^i$ does not correspond to a region then it contains precisely one variable:
	\begin{itemize}
		\item for all $i\in [\ell]$ and $j\in [q_i]$ such that $Z_j^i$ does not correspond to a region we have: \\ $\exists z \left(Z_j^iz\wedge\left(\forall x \left(Zx\rightarrow (x= z) \ \right) \right)\right)$
	\end{itemize}
		\item checkCrossings$(\freeVar)$, which ensures that the edge-vertex $y_p$,
		corresponding to an edge $e_p\in \aEpar$ or the unique edge incident with a vertex $v_p\in \aV$ crossing an edge in $H$ incident to faces $s_1, s_2$, is adjacent to a vertex in $Z_{j_1}^{i_1}$ and $Z_{j_2}^{i_2}$ corresponding to the two pairs $(e_p,\cros)$ (or $(v_p,\cros)$) in $C(s_1)$ and $C(s_2)$, respectively: 
		\begin{itemize}
			\item for all $p\in [k']$ and the corresponding $Z_{j_1}^{i_1}$ and $Z_{j_2}^{i_2}$, checkCrossings$(\freeVar)$ contains: \\ $\exists z_1,z_2 \left( Z_{j_1}^{i_1}z_1 \wedge Z_{j_2}^{i_2}z_2\wedge \edge(z_{1}, y_p)\wedge \edge(y_p, z_2)\right)$.
		\end{itemize}
		\item check$\iV^1(\freeVar)$, which ensures that if incidence between an edge $e\in \aEpar$ and a vertex $v\in \iV$ is realized in the face $s_i$ (i.e., $(e,v)\in C(s_i)$), then the unique variable $z\in Z_j^i$ corresponding to $(e,v)$ in $C(s_i)$ is adjacent to $v$.
		\begin{itemize}
			\item For all $i\in [\ell]$ and all $(e,v)\in \aEpar\times\iV$ such that $(e,v)$ corresponds to $Z_j^i$, check$\iV^1(\freeVar)$ contains: \\ $\exists z,u \left(Z_j^iz\wedge P_vu\wedge \edge(x_i,z) \wedge \edge(, u)\right)$.
		\end{itemize}
	\item check$\iV^2(\freeVar)$, which ensures that if for $v\in \aV$ the set variable $Z_j^i$ corresponds to $(v,\vertIV)$ in $C(s_i)$, then the unique variable $z\in Z_j^i$ in $C(s_i)$ is adjacent to a neighbor of $v$.
	\begin{itemize}
		\item For all $i\in [\ell]$ and all $(v,\vertIV)\in \aV\times\{ \vertIV\}$ such that $(v,\vertIV)$ corresponds to $Z_j^i$, check$\iV^1(\freeVar)$ contains: \\ $\exists z, u \left(Z_j^iz\wedge P_vu\wedge \edge(x_i,z) \wedge \edge(z, u)\right)$.
	\end{itemize}
	\item checkIncidences$(\freeVar)$, which ensures that for a vertex $v$ all the neighbors of $v$ in $G$ are adjacent to some vertex in $Z_j^i$'s corresponding to regions, difficult vertices, or elements $(v,\vertIV)$. That is for every neighbor $x$ of $v$ we can draw an edge from $v$ to $x$ same way as in the pattern.
	\begin{itemize}
		\item Let $v\in\aV$ and let $Z_1^v,\ldots Z_r^v$ be all the $Z_j^i$'s that represent either a region of vertex $v$, difficult vertex w.r.t. some face $s_i$ with $v$ being in the corresponding set in $C(s_i)$ or an element $(v,\vertIV)$, then we have: 
		$\forall x \big(P_vx\rightarrow ((\exists y (Z_1^vy\wedge \edge(x,y)))\vee\ldots$ $ \vee (\exists y (Z_r^vy\wedge \edge(x,y)))) \big)$
	\end{itemize}
		\item checkCyclicOrder$(\freeVar)$, which ensures that $Z_j^i$'s occur in the cyclic order around the face-vertex $x_i$ given by $C(s_i)$:
		\begin{itemize}
			\item for all $i\in [\ell]$ and $j\in [|C(s_i)|]$ checkCyclicOrder$(\freeVar)$ contains:\\ 
			$\exists X,z_1,z_2 \big(Z_{j}^iz_1\wedge Z_{j+1}^iz_2\wedge (\forall x (Xx \rightarrow (\edge(x_i,x)\wedge \neg Z_1^ix \wedge\ldots$ $\wedge \neg Z_{|C(s_i)|}^ix ))$ $ \wedge \conn(z_1,$ $ z_2, X)) \big)$, where $Z_{|C(s_i)|+1}^i = Z_{1}^i$.
		\end{itemize}
	\item checkIC$(\freeVar)$ that verifies that every original vertex is incident to at most one crossing edge. That is for an original vertex we need to go through all incident edges in $\cH$ and edges to vertices in $\aV$ and check if they are crossing. The edge in $\cH$ is crossing either already in $\cH$, if it is one of edges $y_1$. The edge $e\in \aEpar$ is crossing if $\times\in Q_2(e)$, similarly the edge from $v\in \aV$ is crossing if $\times\in Q_2(e)$, $(v,\vertIV)$ is some $C(s_i)$ and $Z_j^i$ corresponding to $(v,\vertIV)$ is interpreted as a shadow copy of the given original vertex.
	\begin{itemize}
		\item 
		Let $Z_1,\ldots Z_r$ be $Z_j^i$'s corresponding to the entries $(ab,a)$, $(ab,b)$, and $(v,\vertIV)$ for edges $ab\in \aEpar$ and vertices $v\in \aV$ with $\times \in Q_2(ab)$ and $\times \in Q_2(v)$, respectively. Then we have the formula: \\
		$\forall x,y,z (\left(P_Ox\wedge (y\neq z)\wedge \edge(x,y)\wedge \edge(x,z) \right)\rightarrow$ 
		$(\neg ( (y=y_1)\vee \ldots \vee (y=y_k)\vee P_Cy\vee$ $Z_1y\vee \ldots\vee Z_ry)  ) \vee \neg \left( (z=y_1)\vee \ldots \vee(z=y_k)\vee P_Cz\vee Z_1z\vee \ldots\vee Z_rz\right) ) $.
	\end{itemize}
	\end{enumerate}
	
	Clearly, the length of the formula $\Phi_P(\freeVar)$ is bounded by a function of $\kappa$. 
	Hence, we can use Fact~\ref{fact:MSO} to, in time $f(\kappa,\tw(\eH))\cdot |\eH|$ for some computable function $f$, either decide that $\eH\not \models \Phi_P(\freeVar)$ or find an assignment  $\phi: \freeVar\rightarrow V(\eH)$ such that $\eH \models \Phi_P(\phi(\freeVar))$. 
	
	The rest of the proof now follows by repeating the arguments given in the proof of Theorem~\ref{thm:1Pfpt}---in particular, we will insert the IC-planar drawing $\cG_P$ of the pattern $G_P$ corresponding to $\Phi_P$ into the faces identified by the formula. The only substantial difference is that here, the pattern graph does not provide an explicit drawing for the edges between $\aV$ and a region on the face containing $\aV$---however, a drawing for these edges is easy to construct thanks to the existence of non-crossing curves connecting the region and the respective vertex in $\aV$.
\end{proof}
}

\section{Inserting Two Vertices into a 1-Plane Drawing}\label{sec:2vtcs}

In this section we show that \textsc{1-Planar Drawing Extension} is polynomial-time tractable in the case where we are only adding $2$ vertices to the graph along with their incident edges (i.e., when $ |\aV| = 2 $ and $ \aEpar =\emptyset$\footnote{We note that it is trivial to extend the result to the case where the number of added edges is bounded by a fixed constant, via simple exhaustive branching.}). Already solving this, at first glance simple, case seems to require non-trivial insight into the problem. In the following we call the two vertices in $ \aV $ the \emph{red} and \emph{blue} vertex,
denoted by $ r $ and $ b $, respectively.

On a high level, our algorithm employs a ``delimit-and-sweep'' approach. First, it employs exhaustive branching to place the vertices and identify a so-called ``initial delimiter''---a Jordan curve that isolates a part of our instance that we need to focus on. In the second step, it uses such an initial delimiter to solve the instance via a careful dynamic programming subroutine.
As our very first step, we exhaustively branch to determine which cells $r$ and $b$ should be drawn in, in \(\bigoh(n^2)\) time, and in each branch we add \(r\) and \(b\) into the selected cell(s) (from now on, we consider these embeddings part of $\cH$). 

\noindent \textbf{The Flow Subroutine.} \quad
Throughout this section, we will employ a generic network-flow subroutine that allows us to immediately solve certain restricted instances of \textsc{1-Planar Drawing Extension}. In particular, assuming we are in the setting where $r$ and $b$ have already been inserted into \(\cH\), consider the situation where:
\begin{itemize}
	\item There is a partial mapping \(\lambda\) from the faces of \(\cH^\times\) to \(\{R, B\}\);
	 and
	\item \(r\) and \(b\) are in different cells of \(\cH\).
\end{itemize}

We say a 1-planar extension of \(\cH\) to \(G\) is \emph{\(\lambda\)-consistent} if the drawing of any edge in \(E(G) \setminus E(H)\) which is incident to \(r\) intersects the interior of face \(F\) of \(\cH^\times\) only if \(\lambda(F) = R\), and correspondingly
the drawing of any edge in \(E(G) \setminus E(H)\) which is incident to \(b\) intersects the interior of face \(F\) of \(\cH^\times\) only if \(\lambda(F) = B\) (i.e., $\lambda$ specifies precisely which kind of edges may enter which face).\ifshort{ We use a reduction to network flows to show:
}
\iflong{

We show that for a given \(\lambda\) we can either find a \(\lambda\)-consistent extension or decide that there is none by constructing an equivalent network flow problem.
}


%

\begin{lemma}
	\label{lem:flow}
	Given $\lambda$ as above, it is possible to determine whether there exists a $\lambda$-consistent 1-planar extension of \(\cH\) to \(G\) in polynomial time.
\end{lemma}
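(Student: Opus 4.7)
The plan is to reduce the problem to a maximum $s$-$t$ flow computation in an auxiliary network. I would first establish a few structural observations about any $\lambda$-consistent 1-planar extension. Since $r$ lies in the interior of its cell $C_r$ of $\cH$, any added edge incident to $r$ intersects the interior of $C_r$; hence if $r$ has any incident added edges then necessarily $\lambda(C_r) = R$, and symmetrically for $b$ (otherwise we immediately reject). Moreover, because $\lambda$ labels each face with at most one colour, and $r$-incident (resp.\ $b$-incident) added edges may intersect only $R$-faces (resp.\ $B$-faces), no $r$-incident added edge and $b$-incident added edge can cross each other; combined with the fact that edges sharing an endpoint do not cross in a good drawing, this means every crossing involving an added edge must be between an added edge and an edge of $H$ that was uncrossed in $\cH$, and by 1-planarity each such $H$-edge can be crossed at most once.

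With these observations in hand, I would characterise the admissible drawings of a single added edge $rv$: either $rv$ is drawn entirely inside $C_r$, which requires $v \in \partial C_r$, or $rv$ crosses exactly one uncrossed $H$-edge $e$ on $\partial C_r$ and then ends at $v$ on the boundary of the cell $C'$ on the opposite side of $e$, with $\lambda(C') = R$. A symmetric dichotomy applies to each $b$-incident added edge. Consequently, a $\lambda$-consistent 1-planar extension exists if and only if one can assign to each added edge one of its valid options in such a way that no uncrossed edge of $H$ is used as a crossing by more than one added edge.

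I would then model this combinatorial assignment problem as $s$-$t$ flow in a directed network $N$ consisting of: a source $s$ and a sink $t$; an ``edge node'' $\alpha_{e'}$ for every added edge $e'$; a ``crossing node'' $\beta_e$ for every uncrossed edge $e$ of $H$; arcs $s \to \alpha_{e'}$ of capacity $1$; a direct arc $\alpha_{e'} \to t$ of capacity $1$ whenever the no-crossing option is available to $e'$; arcs $\alpha_{e'} \to \beta_e$ of capacity $1$ for each valid crossing option of $e'$ through $e$; and arcs $\beta_e \to t$ of capacity $1$. A valid assignment exists precisely when the maximum $s$-$t$ flow in $N$ has value $|\aE|$, which is decidable in polynomial time using any standard max-flow algorithm.

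The main subtlety I expect to have to address is the lifting of a combinatorial assignment to a genuine 1-planar drawing. Each cell of $\cH$ is a topological disk (or the complement of one, for the outer face), and $r$ and $b$ are single interior points of their respective cells; hence, given the prescribed boundary targets for edges leaving $r$ (respectively $b$), the corresponding curves can be realised as mutually non-crossing arcs within each cell in the cyclic order inherited from the cell's boundary, and analogous local realisations are available in the cells entered by crossing routes. Together with the structural observations above---in particular that $R$- and $B$-faces are disjoint, so $r$- and $b$-edges interact only at shared endpoints (where they cannot cross), and that no uncrossed $H$-edge is crossed twice thanks to the capacity constraints---this yields a fully admissible 1-planar $\lambda$-consistent drawing extending $\cH$, completing the proof.
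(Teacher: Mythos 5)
Your reduction is essentially the paper's: the paper also decides $\lambda$-consistency via maximum flow, but it builds two separate networks (one for $r$, one for $b$) whose internal nodes are faces of $\cH^\times$, with the capacity between the face containing $r$ and another face $f'$ set to the \emph{number} of crossable edges on their shared boundary, rather than your single network with a capacity-one node per crossable edge. This difference matters exactly at the step you pass over quickly. Because your flow commits each added edge to a \emph{specific} crossable edge, the integral flow it returns may pair targets with crossings in an interleaving way inside an entered cell $C'$: for instance, if the boundary of $C'$ carries, in cyclic order, crossable shared edges $e_1,e_2$ and then targets $v_1,v_2$, the flow may assign $rv_1$ to cross $e_1$ and $rv_2$ to cross $e_2$, and then the two arcs inside $C'$ form interleaving chords and must cross each other---which is forbidden, since they share the endpoint $r$ and each already carries its one allowed crossing. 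So ``analogous local realisations are available in the cells entered by crossing routes'' is not literally true for the assignment the flow hands you. What is missing (and is easy to add) is the observation that, within each entered cell, \emph{any} crossable edge on the shared boundary can serve \emph{any} target routed into that cell, so the assignment can be re-matched cell by cell so that the crossing points occur along the boundary in the same cyclic order as their targets, after which the arcs are pairwise non-crossing chords; together with the fan from $r$ inside $C_r$ this yields the drawing. This is precisely the point the paper makes explicitly (``the order of crossings matches the order in which the target vertices appear on the boundary''), and its aggregated face-pair capacities avoid committing to specific crossable edges in the first place. With that re-matching argument spelled out, your iff-characterisation, and hence your algorithm, are correct.

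A smaller point: you assume $r$ lies in the interior of a unique cell $C_r$ (and likewise for $b$). The paper's construction instead connects the source to every face of $\cH^\times$ that contains $r$, possibly on its boundary; this extra generality is needed when the lemma is invoked later with edges incident to $r$ and $b$ already drawn (e.g.\ inside the dynamic program, where the edges $\alpha_r,\alpha_b,\beta_r,\beta_b$ are added before the flow subroutine is called). You should phrase your network so that an $r$-edge may start into any $R$-labelled face incident to $r$, which changes nothing else in your argument.
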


\iflong{\begin{proof}}
\ifshort{\begin{proof}[Proof Sketch]}
	Consider the max flow instance $\theta_1$ constructed as follows. $\theta_1$ contains a universal sink $t$ and a universal source $s$. We add one vertex for each vertex in $N_{E(G) \setminus E(H)}(r)$, and a capacity-1 edge from each such ``$R$-vertex'' to $t$. We add one ``$f$-vertex'' for each face $f$ in $\cH^\times$ that $\lambda$ maps to $R$, and a capacity-1 edge from each such vertex to every $R$-vertex that lies on the boundary of $f$. We add an (unlimited-capacity) edge from $s$ to every $f$-vertex whose face contains $r$ (possibly on its boundary). Finally, we add an edge from every $f$-vertex whose face contains $r$ to each other $f'$-vertex of capacity equal to the number of crossable edges that lie on the shared boundary of $f$ and \(f'\).
	The instance $\theta_2$ is constructed in an analogous fashion for $B$ and $b$.
		\ifshort{
		To conclude the proof, it suffices to show that the drawings of the edges in \(\aE\) incident to \(r\) in a \(\lambda\)-consistent extension \(\cG\) of \(\cH\) to \(G\) correspond to \(s\)-\(t\)-flows of values \(|N_{E(G) \setminus E(H)}(r)|\) and \(|N_{E(G) \setminus E(H)}(b)|\) for \(\theta_1\) and \(\theta_2\), respectively.
	}
	\iflong{
	
	Now, assume there is a $\lambda$-consistent extension \(\cG\) of \(\cH\) to \(G\). Each edge from $r$ to an endpoint $v\in R$ is either non-crossing (in which case $\theta_1$ models it as a direct flow from $s$ to $v$ through a face containing $r$, and then to $t$), or crosses into another face $f_2$ (in which case $\theta_1$ models it as a flow from $s$ to $f_1$, then to $f_2$, then to $v$, and finally to $t$). The fact that each such edge in $\cG$ must use a separate crossable edge when crossing out of a face containing $r$ ensures that routing the flow in this way will not exceed the edge capacities of $\theta_1$. The argument for $\theta_2$ is analogous, and hence we obtain that both $\theta_1$ and $\theta_2$ allow \(s\)-\(t\)-flows of values \(|N_{E(G) \setminus E(H)}(r)|\) and \(|N_{E(G) \setminus E(H)}(b)|\) respectively.
	
	Conversely, assume that both $\theta_1$ and $\theta_2$ \(s\)-\(t\)-flows of values \(|N_{E(G) \setminus E(H)}(r)|\) and \(|N_{E(G) \setminus E(H)}(b)|\) respectively. Consider a witnessing flow for $\theta_1$ (the procedure for $\theta_2$ will be analogous). Clearly, this flow must route 1 capacity through each $R$-vertex $v$ to achieve its capacity bound. If the flow enters $v$ after going only through a single $f$-vertex, then add a drawing of the edge from $r$ to $v$ passing only through $f$. If it instead enters $v$ after passing through $f$-vertices (say $f_1$ and $f_2$), then add a drawing of the edge from $r$ to $v$ beginning in $f_1$ and crossing into $f_2$. Since no edge from $b$ enters a face marked $R$, there is a way of choosing which boundaries to $f_2$ to cross in order to ensure that edges from $r$ will not cross each other (the order of crossings matches the order on which the target vertices $v\in R$ appear on the boundary of $f_2$). Applying the same argument for $\theta_2$ results in a $\lambda$-consistent extension $\cG$ of \(\cH\) to \(H\).
	}
\end{proof}

\begin{corollary}
	\textsc{1-Planar Drawing Extension} for \(|\aV| = 1\) and \(\aEpar = \emptyset\) can be solved in polynomial time.
\end{corollary}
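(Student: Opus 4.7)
The plan is to reduce this special case directly to the single-color instance of the flow subroutine of Lemma~\ref{lem:flow}. Since $|\aV|=1$, let $r$ denote the unique added vertex; because $\aEpar=\emptyset$, every added edge has $r$ as one endpoint and a vertex of $H$ as the other. The first step would be to exhaustively branch over the $\bigoh(n)$ cells of $\cH$ into which $r$ may be embedded (the planarization $\pH$ has $\bigoh(n)$ vertices and hence $\bigoh(n)$ faces). Fix one such cell in each branch, and treat the embedding of $r$ inside it as part of $\cH$.

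For each such branch, I would define the mapping $\lambda$ so that it sends every face of $\pH$ to the label $R$; the label $B$ plays no role since there is no blue vertex. I would then invoke only the $\theta_1$ half of the flow construction from the proof of Lemma~\ref{lem:flow}. The argument there already yields an $s$-$t$-flow of value $|N_{E(G)\setminus E(H)}(r)|$ in $\theta_1$ precisely when one can route the added edges from $r$ to their prescribed endpoints as curves that cross each edge of $H$ at most once, in a way consistent with $\cH$.

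The crucial observation that makes this reduction sound is that, in any good drawing, two edges sharing the endpoint $r$ cannot cross each other; hence every crossing introduced by an extension must involve an edge of $H$, and these are exactly what the edge capacities of $\theta_1$ count. In particular, the absence of $b$ removes the need for the $\theta_2$ instance and eliminates any competition for crossable boundary edges. The total running time is polynomial, as we enumerate polynomially many placements of $r$ and each max-flow computation runs in polynomial time. I do not anticipate a substantial obstacle: the only subtlety is verifying that, with $b$ absent, no additional 1-planarity constraint arises beyond those captured by $\theta_1$, which follows directly from the correctness proof of Lemma~\ref{lem:flow} specialized to a single color.
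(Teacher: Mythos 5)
Your proposal matches the paper's proof: branch over the $\bigoh(n)$ cells to place $r$, then apply Lemma~\ref{lem:flow} with $\lambda$ mapping every face of $\cH^\times$ to $R$ (so only the $\theta_1$ instance is relevant). The additional remarks about edges sharing the endpoint $r$ not crossing each other are already implicit in the correctness argument of Lemma~\ref{lem:flow}, so no further work is needed.
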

\iflong{
\begin{proof}
	Branch on the cell containing the single added vertex \(r\) in \(\bigoh(n)\)-time.
	Then apply Lemma~\ref{lem:flow} to \(\cH\), \(G\) and \(\lambda\) where \(\lambda\) maps every face of \(\cH^\times\) to \(R\).
\end{proof}
}

\ifshort{\subparagraph{Finding an Initial Delimiter.} We begin by formally defining the following notion:}
\iflong{
\subsection{Finding an Initial Delimiter}
If $\aE$ contains the edge $rb$, we then branch to determine which edge it crosses (if any), and add $rb$ into $\cH$ as well. }

\begin{definition}
A Jordan curve $\omega$ in the plane is an \emph{initial delimiter} for $\cH$ if:
\begin{enumerate}
\item $\omega$ passes through both $r$ and $b$ but through no other vertex of $\cH$, 
\item whenever $\omega$ shares at most one point with the interior of an edge, this point is a proper crossing between $\omega$ and that edge,
\item the intersection between $\cH$ and the exterior of $\omega$ (including $\omega$ itself) contains a single cell $c_r$ whose boundary contains $r$, and a single cell $c_b$ whose boundary contains $b$, and
\item the intersection of the boundary of $c_r$ (resp. $c_b$) and $\omega$ is a single simple curve containing $r$ (resp. $b$) as an interior point. 
\end{enumerate}
\end{definition}

Intuitively, the third condition means that if we add $\omega$ onto $\cH$, then there are unique cells in the exterior of $\omega$ for $r$ and $b$. A solution for our instance, i.e., a drawing $\cG$ of $G$, is $\omega$-\emph{compatible} if every edge from $\aE$ is drawn in the exterior region defined by $\omega$.\ifshort{
We state the main result of this subsection below---intuitively, it provides us with a set of initial delimiters that we can exhaustively branch over, and in each branch we can restrict our attention to solutions that are compatible with the chosen initial delimiter. To avoid confusion, we note that the lemma covers the case where $r$ and $b$ are placed in the same cell $f$ (by branching on and placing a constant number of new edges that ``separate'' the boundary of $f$).
%
%
%
}
\iflong{

Our aim will be to obtain a set $Q$ of initial delimiters for which we know that if the instance $(G,H,\cH)$ admits a solution, then $Q$ contains at least one $\omega$ such that $(G,H,\cH)$ also admits an $\omega$-compatible solution. This will later allow us to exhaustively branch over all delimiters in $Q$, and in each branch restrict our attention to merely finding an $\omega$-compatible solution.

If $r$ and $b$ were placed in different cells and $rb$ was part of $\aE$ (i.e., is part of $\cH$ now), then we can construct a single initial delimiter with the desired property. Indeed, due to previous branching we have already added the edge $rb$ into $\cH$, and since this edge is crossing we know that no other edge in $\aE$ can cross $rb$. Hence, we can obtain the Jordan curve $\omega$ by starting at $r$, following the edge $rb$ (without touching it, staying close to the ``right'') to $b$ and then following $rb$ back to $r$ on the other side. This way, the interior of $\omega$ intersects $\cH$ in precisely drawing of the edge $rb$ and a small part of the single edge that $rb$ crosses. Since in every potential solution, no other edge in $\aE$ intersects this part of $\cH$, we can restrict our attention only to $\omega$-compatible solutions.
 Moreover, the single edge that $\omega$ splits into two cannot be crossed by any edge in a solution (this observation will be useful later). 

The next case we will consider is that $r$ and $b$ are placed in the same cell \(c\).
In this case there may be up to two pairs of edges incident to \(r\) and \(b\) whose drawings cross each other in a solution.
Branching in \(\bigoh(n^4)\) on the existence and identity of these edges completely determines their drawings, which we from now on assume to be part \(\cH\) and adapt \(H\) correspondingly.
After this, in a hypothetical 1-plane extension of \(\cH\) to \(G\) we can identify four edges \(e_r, f_r, e_b, f_b \in E(G) \setminus E(H)\) such that
	\begin{itemize}
		\item \(e_r\) and \(f_r\) are consecutive edges in \(E(G) \setminus E(H)\) in the rotation system around \(r\);
		\item \(b\) lies in or on the boundary of the part \(c'_b\) of \(c\) which lies between \(e_r\) and \(f_r\) in clockwise direction;
	\end{itemize}
	and analogously
	\begin{itemize}
		\item \(e_b\) and \(f_b\) are consecutive edges in \(E(G) \setminus E(H)\) in the rotation system around \(b\); and
		\item \(r\) lies in or on the boundary of the part \(c'_r\) of \(c\) which lies between \(e_b\) and \(f_b\) in clockwise direction.
	\end{itemize}
	In this situation the outer boundary of \(c'_r \cap c'_b\) is a valid choice for \(\omega\):
	Drawings of edges in \(E(G) \setminus E(H)\) can not leave and reenter \(c\) due to 1-planarity.
	Thus drawings of edges incident to \(r\) and \(b\) in \(\cG\) intersect the interior of \(c\) if and only if corresponding edges incident to \(r\) and \(b\) in \(\pG\) intersect the interior of \(c\).
	Since \(e_r\) and \(f_r\) are consecutive, there are no edges in \(E(G^\times) \setminus E(H^\times)\) incident to \(r\) which are drawn in \(c'_b\) by \(\pG\).
	Similarly no edges in \(E(G^\times) \setminus E(H^\times)\) incident to \(b\) are drawn in \(c'_r\).
	The choice and drawing of \(e_r, f_r, e_b\) and \(f_b\) can be branched on in \(\bigoh(n^8)\) ways. 

Finally, consider the case where $r$ and $b$ were placed in different cells and $rb$ is not an edge in $\cH$. This is the most difficult case for us, and the rest of this subsection is devoted to resolving it. Let $\bar \cH$ be the graph obtained from the planarization $\cH^\times$ of $\cH$ by creating a \emph{face-vertex} for each face of $\cH^\times$, an \emph{edge-vertex} for each edge in $\cH^\times$, and making each face-vertex representing a face $f$ adjacent to all edge-vertices that represent edges which lie on the boundary of $f$.

An $F_r$-$F_b$ path $P$ in $\bar \cH$ \emph{corresponds} to an initial delimiter $\omega$ if $\omega$ is drawn by starting from $r$, following $P$ (with edge-vertices specifying where $\omega$ should cross, and face-vertices specifying faces $\omega$ enters into), reaching $b$, and from there backtracking along $P$ to $r$ (in a similar way as we followed the $rb$ edge in the case it was part of $\cG$). Clearly, an initial delimiter that corresponds to a given $F_r$-$F_b$ path $P$ can be constructed in polynomial time.

One observation we will use is that the ``exact drawing'' of an initial delimiter corresponding from an $r$-$b$ path does not matter. In particular, if $\cG$ is a solution and $P$ is an $r$-$b$ path with some corresponding initial delimiter $\omega$ such that $\cG$ is $\omega$-compatible, then for \emph{every} initial delimiter $\omega'$ corresponding to $P$, we can topologically shift the drawing of the edges in $\aE$ to obtain an $\omega'$-compatible solution.

Now, consider the subcase where a shortest path in $\bar \cH$ between the face $F_r$ of $\cH$ containing $r$ and the face $F_b$ of $\cH$ containing $b$ consists of at least $6$ edges, i.e., contains at least two faces $f_1$, $f_2$ distinct from $F_r$ and $F_b$. Then, each face of $\cH$ shares either a boundary of $F_r$ or $F_b$, but not both. In this case, we can solve the instance via an application of Lemma~\ref{lem:flow} to \(\cH\), \(G\) and \(\lambda\), where \(\lambda\) maps the faces of \(\cH^\times\) at distance at most \(1\) from \(F_r\) in \(\bar{\cH}\) to \(R\), and the faces of \(\cH^\times\) at distance at most \(1\) from \(F_b\) in \(\bar{\cH}\) to \(B\).

On the other hand, we reach a situation where we are guaranteed that there exists a path in $\bar \cH$ with at most $4$ edges. In this case, we prove:

\begin{lemma}
\label{lem:shortcurve}
Assume the instance $(G,H,\cH)$ admits a solution $\cG$ and $\bar \cH$ contains an $F_r$-$F_b$ path of length at most $4$. Then there is a solution $\cG'$ and a simple curve $C$ from $r$ to $b$ in $\cH$ that does not cross any edge in $\aE$ and crosses at most $2$ edges of $\cH$.
\end{lemma}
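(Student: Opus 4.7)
The plan is to pick a shortest path \(P\) in \(\bar{\cH}\) from \(F_r\) to \(F_b\) (of length at most \(4\) by assumption) and define \(C\) as a curve that follows \(P\); this \(C\) crosses at most \(2\) edges of \(\cH\) (namely the edge-vertices on \(P\)). The main work is to show that after a suitable choice of the curve and a bounded amount of rerouting of edges in \(\aE\), we obtain a solution \(\cG'\) in which \(C\) is disjoint from every \(\aE\)-edge.

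The structural foundation is that each edge in \(\aE\) is incident to exactly one of \(r, b\) (since the edge \(rb\), if it exists in \(\aE\), was added to \(\cH\) in a previous case), and has at most one crossing by 1-planarity. Inside \(F_r\) the drawings of \(\aE\)-edges incident to \(r\) therefore form a non-crossing family of curves emanating from \(r\); they subdivide \(F_r\) into sub-regions, each having \(r\) on its boundary and covering an arc of \(\partial F_r\). Hence at least one sub-region \(W_r\) meets a non-crossed portion of \(e_1\), and we can draw the portion of \(C\) within \(F_r\) from \(r\) to a chosen interior point \(p_1\) of \(e_1\) without intersecting any \(\aE\)-edge. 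An analogous argument yields the portion of \(C\) in \(F_b\), connecting \(b\) to a point \(p_2\) on the appropriate boundary edge.

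The non-trivial part of the argument concerns the intermediate face \(f_1\) in the length-4 case. The only \(\aE\)-edges intersecting \(f_1\) are those incident to \(r\) that crossed \(e_1\) and those incident to \(b\) that crossed \(e_2\); each appears in \(f_1\) as a half-chord whose endpoints lie on \(e_1\) (resp.\ \(e_2\)) and on \(\partial f_1\). By 1-planarity this half-chord family is non-crossing. I will show that the sub-regions of \(f_1\) defined by this system always contain one that touches both the interiors of \(e_1\) and \(e_2\), possibly after a local rerouting of \(\cG\): if a half-chord \(h\) from \(p \in e_1\) to a vertex \(v\) on \(\partial f_1\) separates \(e_1\) from \(e_2\) in \(f_1\), the corresponding \(\aE\)-edge \(rv\) can be rerouted along a curve tracing \(h\) backwards to a neighborhood of \(p\), crossing \(e_1\) near \(p\), and then staying close to \(C\) inside \(W_r\) back to \(r\); this preserves 1-planarity (the reroute uses the same single crossing with \(\cH\)) and strictly decreases the number of separating half-chords inside \(f_1\). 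Iterating this reduction terminates in at most \(|\aE|\) steps, after which a region of \(f_1\) touching both \(e_1\) and \(e_2\) becomes available and the portion of \(C\) through \(f_1\) can be drawn.

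The main obstacle is verifying correctness of the rerouting step in the length-4 case: namely that the reroute does not introduce a new violation of 1-planarity (a second crossing for the rerouted edge, or a new crossing for some previously drawn \(\aE\)-edge), and that the previously constructed portions of \(C\) in \(F_r\) and \(F_b\) remain valid. Both issues are handled by a topological argument that exploits the non-crossing structure of the half-chord family: the rerouted curve can be drawn arbitrarily close to \(h\) and then close to the boundary of the wedge \(W_r\) back to \(r\), so it interacts only with edges it was already adjacent to, and the wedge containing \(C\) in \(F_r\) is preserved by choosing the reroute to go through a neighboring wedge whose boundary is formed by the same rerouted edge \(rv\).
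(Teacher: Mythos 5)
Your overall strategy differs from the paper's: you fix the curve $C$ in advance along the short $F_r$-$F_b$ path and try to reroute added edges out of its way, whereas the paper proceeds by induction over the added edges $e_1,\dots,e_\ell$, maintaining a curve $C_i$ that avoids $e_1,\dots,e_i$ and crosses at most two edges of $\cH$, and resolves each new conflict either by letting the curve \emph{follow the drawing of the conflicting added edge} back to $r$ (using 1-planarity of that edge to keep the crossing count at two) or, in one sub-case, by redrawing that single added edge alongside $C_i$. Unfortunately, as written your argument has genuine gaps. First, two of your structural claims are false in general: inside $F_r$ the added edges need not all be incident to $r$ (an edge incident to $b$ may cross an edge shared by $F_b$ and $F_r$ and intrude into $F_r$, and it may cross an $r$-edge there), so the "non-crossing wedge" decomposition of $F_r$ by $r$-edges alone does not guarantee that $C$ avoids all of $\aE$ in $F_r$; and the added edges intersecting the middle face $f_1$ are not only those crossing $e_1$ or $e_2$ --- an $r$-edge (resp.\ $b$-edge) may enter $f_1$ through \emph{any} edge on the shared boundary of $F_r$ (resp.\ $F_b$) and $f_1$, so the chord system in $f_1$ is richer than you assume.

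Second, and most importantly, your progress measure does not work: rerouting the edge $rv$ only modifies its portion inside $F_r$ (it still crosses $e_1$ essentially at $p$ and still ends at $v$), so its half-chord inside $f_1$ --- and hence the separation of $e_1$ from $e_2$ within $f_1$ --- is unchanged. The claim that this reroute "strictly decreases the number of separating half-chords" is therefore unjustified, and with it the termination and correctness of the iteration. In addition, the rerouted curve that hugs $C$ inside $W_r$ may cross other added edges present in $F_r$ (in particular intruding $b$-edges), which threatens 1-planarity and the non-crossing requirement between added edges; this is not addressed by the "topological argument" you invoke. To repair the proof along your lines you would need a reroute that actually changes the picture inside $f_1$ (e.g., redrawing the offending edge so that its $f_1$-portion hugs the boundary, or changing which boundary edge it crosses), together with a measure that provably decreases --- which is essentially what the paper's edge-by-edge induction accomplishes by letting the curve adapt to the solution rather than the other way around.
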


\begin{proof}
Let us enumerate all edges in $\aE$ as $e_1,\dots,e_\ell$. We will proceed by induction, where our induction hypothesis for $i$ is that there is a simple $r$-$b$ curve $C_i$ that does not cross any of the $e_j$ where $j\leq i$ and crosses at most $2$ edges of $\cH$. Clearly, the hypothesis holds for $i=0$ (as witnessed by any initial delimiter corresponding to the assumed $F_r$-$F_b$ path in $\bar \cH$). Moreover, if the hypothesis holds for $i=\ell$, then the lemma also holds.

So, assume that for $i\leq \ell$, such a simple curve $C_i$ exists. If $e_{i+1}$ does not cross $C_i$ in $\cG$, then we can set $C_{i+1}=C_i$ and proceed. So, assume that $e_{i+1}$ crosses $C_i$ and that, w.l.o.g., $e_{i+1}$ is incident to $r$. In that case, consider a traversal of $C_i$ from $b$ until the first time $C_i$ touches $e_{i+1}$; we will call this point $p$, and we let $C_{i+1}$ be the same as $C_i$ from $b$ up to $p$. 

We now distinguish two subcases. If $p$ occurs in a face different from $F_r$, then we let $C_{i+1}$ diverge from $C_i$ at this point and instead follow $e_{i+1}$ towards $r$. This guarantees that $C_{i+1}$ will only cross precisely one other edge in $\cH$; however, $C_i$ also had to cross at least one edge in $\cH$ from $p$ to $r$, and hence the number of edges crossed by $C_{i+1}$ is not higher than the number of edges crossed by $C_i$.

On the other hand, if $p$ occurs in $F_r$ and $e_{i+1}$ contains no crossing on its part from $r$ to $p$, then $C_{i+1}$ can simply follow $e_{i+1}$ towards $r$. The perhaps most interesting case is when $e_{i+1}$ does contain a crossing between $r$ and $p$, in spite of $p$ being in $F_r$. In this case, we will set $C_{i+1}=C_i$ but topologically shift $e_{i+1}$ without affecting any other part of the solution. In particular, we redraw $e_{i+1}$ to follow along $C_{i}$ up to $p$, and from there on let $e_{i+1}$ continue as in the original solution $\cG$. Notice that the new drawing of $e_{i+1}$ will not cross any of the edges $e_1,\dots,e_{i}$, and furthermore cannot have more crossings than $e_{i+1}$ had in $\cG$---indeed, there could be at most $1$ crossing on $C_i$ between $p$ and $r$\footnote{We may, in fact, also inductively assume that no such crossing occurs on $C_i$ in this case, but it is not necessary.}, and $e_{i+1}$ already had one crossing between $r$ and $p$ in $\cG$.
\end{proof}

Finally, observe that the ``exact drawing'' of an initial delimiter corresponding from an $F_r$-$F_b$ path does not matter. In particular, if $\cG$ is a solution and the $P$ is an $F_r$-$F_b$ path with some corresponding initial delimiter $\omega$ such that $\cG$ is $\omega$-compatible, then for \emph{every} initial delimiter $\omega'$ corresponding to $P$, we can topologically shift the drawing of the edges in $\aE$ to obtain an $\omega'$-compatible solution.

With this in hand, we can now construct the set $Q$ for this case as follows: we branch over all $F_r$-$F_b$ paths $P$ of length at most $4$ in $\bar \cH$, for each such path construct a corresponding initial delimiter $\omega$, and add $\omega$ to $Q$. We note that $\omega$ may split at most $2$ edges into two parts that can be still crossed by an edge in $\aE$. It will be clear in the next section that once we choose $\omega\in Q$, we can also branch on which half of the split edge is crossable. 
We summarize the result for this section below:
}

\begin{lemma}\label{lem:Qs}
For every instance $(G,H,\cH)$ where $|\aV|\leq 2$ and $\aEpar =\emptyset$, we can in polynomial time either solve $(G,H,\cH)$ or construct a set $Q$ of initial delimiters with the following property (or both): if $(G,H,\cH)$ admits a solution, then $Q$ contains at least one $\omega$ such that $(G,H,\cH)$ also admits an $\omega$-compatible solution.
\end{lemma}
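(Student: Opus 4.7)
The plan is to perform a case analysis on the relative positions of $r$ and $b$ (and on whether the edge $rb$ belongs to $\aE$), handling each case either by producing a polynomially-sized set of candidate initial delimiters or by invoking Lemma~\ref{lem:flow} directly. In every branch the work is polynomial, and the set $Q$ is accumulated by merging the candidates produced by each case.

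First I would dispose of two easy cases. If $rb \in \aE$ and $r,b$ lie in different cells, I branch on which edge (if any) $rb$ crosses in a hypothetical solution; then $rb$ becomes part of $\cH$ and I take $\omega$ to be the Jordan curve that hugs $rb$ tightly on both sides, passing from $r$ to $b$ and back, yielding a single delimiter. If instead $r$ and $b$ lie in the same cell $c$, I exhaustively guess in $\bigoh(n^8)$ time the (at most) four added edges $e_r,f_r,e_b,f_b$ that are consecutive around $r$ and around $b$ in a hypothetical solution and separate the two vertices inside $c$; after fixing them I take $\omega$ to be the outer boundary of the intersection of the two sectors they bound, and add it to $Q$. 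The crucial point in this case is that, by 1-planarity, no further added edge incident to $r$ can enter the $b$-sector and vice versa, so restricting attention to $\omega$-compatible solutions loses nothing.

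The remaining case is when $r,b$ lie in distinct cells with $rb \notin \aE$. I introduce the auxiliary face--edge graph of $\cH$ (a face-vertex for every face of $\cH^{\times}$, an edge-vertex for every edge, with edges given by incidence), and let $d$ be the distance between the face-vertices $F_r,F_b$ of the cells containing $r$ and $b$. If $d\ge 6$ then no face of $\cH^{\times}$ is adjacent to both $F_r$ and $F_b$, so I define $\lambda$ by assigning colour $R$ to every face within distance $1$ of $F_r$ and colour $B$ to every face within distance $1$ of $F_b$, and solve the instance directly via Lemma~\ref{lem:flow}; by 1-planarity any solution assigns each face at most one of these colours, so the reduction is lossless. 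If $d\le 4$ I enumerate every $F_r$--$F_b$ path of length at most $4$ in the auxiliary graph (polynomially many), and for each such path construct a corresponding $\omega$ by tracing a closed curve that leaves $r$, crosses the prescribed sequence of edges through the prescribed sequence of faces to reach $b$, and returns on the opposite side; all such curves are added to $Q$.

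The main obstacle, and the reason the length-$4$ enumeration in the last subcase is correct, is a structural claim that whenever a solution $\cG$ exists in this subcase, it can be perturbed into an $\omega$-compatible solution for some $\omega$ produced by a path of length at most $4$. I would establish this by induction on an enumeration $e_1,\dots,e_\ell$ of $\aE$: starting from any initial delimiter $C_0$ corresponding to a shortest $F_r$--$F_b$ path (which crosses at most $2$ edges of $\cH$), I repeatedly modify $C_i$ into $C_{i+1}$ that avoids $e_{i+1}$ while still crossing at most $2$ edges of $\cH$. Where $C_i$ first meets $e_{i+1}$ in a face other than the endpoint face, the reroute follows $e_{i+1}$ back to its endpoint, trading at most one $\cH$-crossing for another; when the meeting point lies in the endpoint face, I instead topologically redraw $e_{i+1}$ to hug $C_i$, which cannot increase its crossing count by 1-planarity. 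Once this invariant survives to $i=\ell$ the resulting $C_\ell$ is a delimiter of the required form, and a final observation that the precise geometric realisation of $\omega$ along a fixed auxiliary path is irrelevant (up to topological shifting of $\aE$) closes the argument.
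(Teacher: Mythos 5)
Your proposal follows the paper's proof essentially verbatim: the same three-way case analysis (the $rb\in\aE$ case with a delimiter hugging $rb$, the same-cell case with an $\bigoh(n^8)$ guess of $e_r,f_r,e_b,f_b$ and the sector-intersection boundary, and the distinct-cell case split by the distance between $F_r$ and $F_b$ in the auxiliary face--edge graph), the same direct application of Lemma~\ref{lem:flow} when that distance is at least $6$, and the same inductive rerouting argument (the paper's Lemma~\ref{lem:shortcurve}) justifying the enumeration of length-at-most-$4$ paths. The only detail you gloss over is that in the same-cell case the paper first branches, in $\bigoh(n^4)$, on the up to two pairs of added edges incident to $r$ and $b$ that cross each other and adds them to $\cH$ before fixing $e_r,f_r,e_b,f_b$; this pre-branching is what makes your claim that no further added edge incident to $r$ can enter the $b$-sector (and vice versa) actually hold.
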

 
\ifshort{\subparagraph{Dynamic Programming}}
\iflong{	\subsection{Dynamic Programming}}
	\label{subsubsec:dynprog}
\ifshort{
	We can now proceed with a very high-level sketch of how the algorithm proceeds once we have pre-selected (via branching) an initial delimiter. The general idea is to perform a left-to-right sweep of $\cH$ by starting with the boundaries provided by the initial delimiter. The runtime of the algorithm is upper-bounded by the fact that it relies on dynamic programming where the maximum size of the records (representing possible ``positions'' on our sweep) is polynomial in the input size, and where the possibility of transitioning from one record to the next can be checked in polynomial time. In particular, the ``steps'' we use to move from one record to the next relies on a situational combination of exhaustive branching and the network-flow subroutine described in Lemma~\ref{lem:flow}. 
	
	The intuitive reason we need to combine both of these techniques is that in some parts of our sweep, we will encounter faces where edges from both $r$ and $b$ may enter---there the interactions between these edges are too complicated to be modeled as a simple flow problem, but (as we will show) we can identify separating curves that cut our instance into parts for which we have a mapping $\lambda$ that can be applied in conjunction with Lemma~\ref{lem:flow}.

	We now formalize the records used by our algorithm:
	a \emph{record} is a tuple $(\alpha_r,\alpha_b,T)$, where $\alpha_r$ is either a vertex in $\cH$ or an edge $e\in \aE$ incident to $r$. $\alpha_b$ is defined symmetrically, and we describe such edges by specifying their endpoint and potential crossing point.
	$T$ is then an auxiliary element that specifies the ``type'' of a given record. Finally, we associate each record with a \emph{delimiter} that iteratively pushes our ``left'' initial delimiter boundary towards the ``right'' one; everything to the left of the \emph{delimiter} can be ignored, since the assumptions we use to select our records guarantee that it cannot be crossed by an edge of the targeted solution.
		
	While we are forced to omit the majority of the details of the algorithm due to space constraints, below we at least provide a brief, intuitive summary of the $5$ types of records used (see also Figure~\ref{fig:records}):
		\begin{figure}
			\captionsetup{width=0.31\textwidth}
		\begin{subfigure}{.31\textwidth}
			\includegraphics[scale=0.8,page=2]{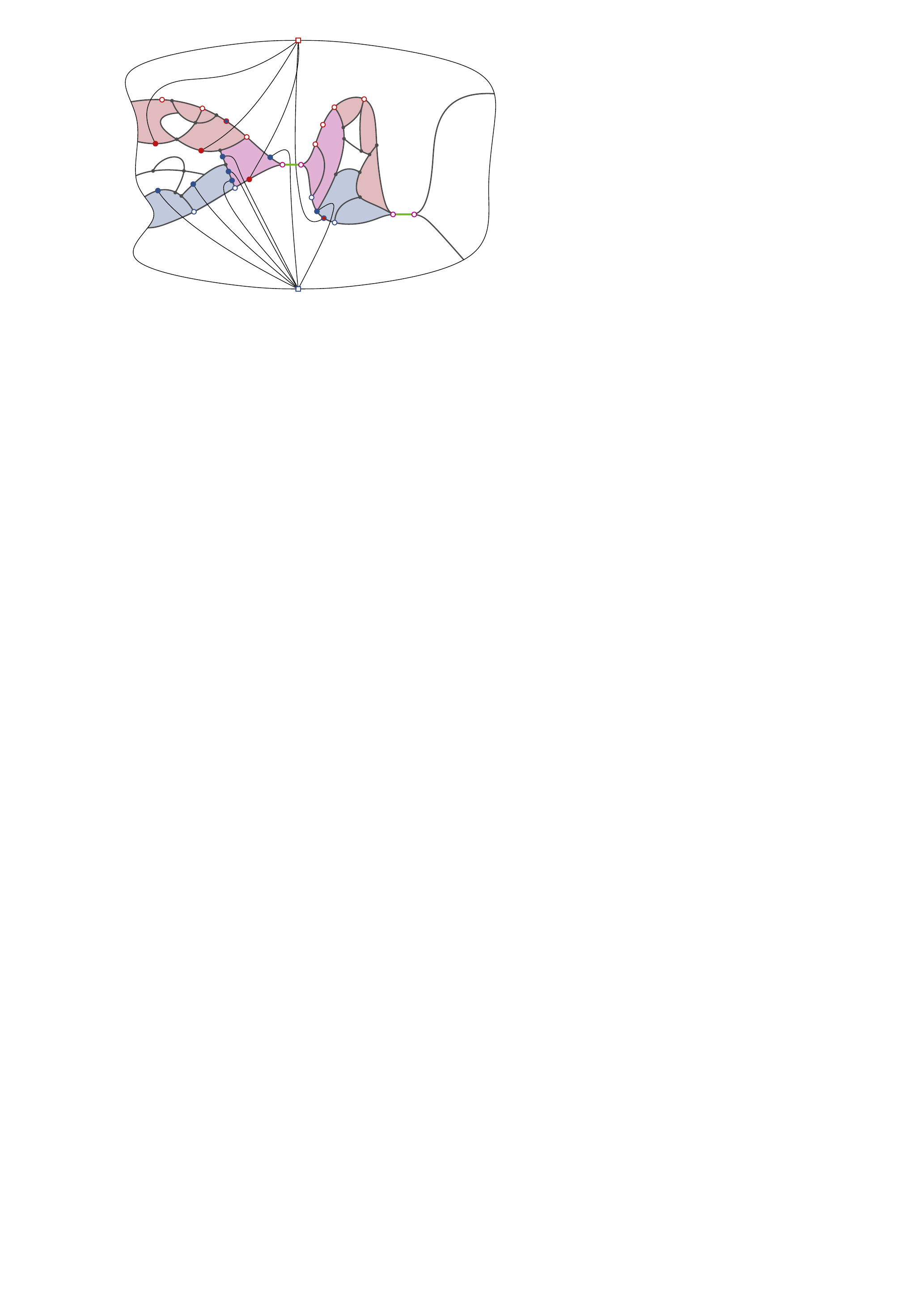}
			\caption{Green pointer.}
			\label{fig:records:case1}
		\end{subfigure}
		\hfill
		\begin{subfigure}{.31\textwidth}
			\includegraphics[scale=0.8,page=3]{figures/dynprog.pdf}
			\caption{Double incursion.}
			\label{fig:records:case2}
		\end{subfigure}	
		\hfill
		\begin{subfigure}{.31\textwidth}
			\includegraphics[scale=0.8,page=4]{figures/dynprog.pdf}
			\caption{Left incursion.}
			\label{fig:records:case3}
		\end{subfigure}
		
		\begin{minipage}[b]{0.67\textwidth}
		\begin{subfigure}{.5\textwidth}
			\includegraphics[scale=0.8,page=5]{figures/dynprog.pdf}
			\caption{Right incursion. \vspace{-0.5cm}}
			\label{fig:records:case4}
		\end{subfigure}	
		\begin{subfigure}{.5\textwidth}
			\includegraphics[scale=0.8,page=6]{figures/dynprog.pdf}
			\caption{Slice. \vspace{-0.5cm}}
			\label{fig:records:case5}
		\end{subfigure}
		\end{minipage}
		\begin{minipage}[c][][b]{0.3\textwidth}
		\vspace{1em}
		\caption{The five types of records for the dynamic program. The computed delimiters are the orange curves. $ r $ and $ b $ are the rectangular vertices, colored disks are in $ R $ and $ B $, white disks are $ r_i $ and $ b_i $ depending on their border-color. \vspace{-0.5cm}
		\label{fig:records}}
		\end{minipage}
	\end{figure}
	
	\smallskip
	\noindent \textbf{1. Green Pointer.} $\alpha_r=\alpha_b$ is a vertex incident to an edge on the boundary of $c_r$ and $c_b$.
	
	\smallskip
	\noindent \textbf{2. Double Incursion.} One edge ``covers'' a part of one face that is accessed by the other edge from the other side. 

\smallskip	
	\noindent \textbf{3. Left Incursion.} $\alpha_r$ (or $\alpha_b$) crosses into $c_b$ (or $c_a$) and heads ``left''.
	
	\smallskip
	\noindent \textbf{4. Right Incursion.} $\alpha_r$ (or $\alpha_b$) crosses into $c_b$ (or $c_a$) and heads ``right''.
	
	\smallskip
	\noindent \textbf{5. Slice.} One or both edges cross into a face other than $c_r$ and $c_b$.
	
Altogether, by using these records and carefully analyzing the cases that allow us to transition from one record to the other, we obtain a proof of:

	\xpkappaonep
}
	
\iflong{	
	From the previous step we have a 1-planar extension \(\cH'\) of \(\cH\) into which we want to insert \(E(G) \setminus E(H')\) such that the resulting drawing is 1-planar and drawings of inserted edges do not mutually intersect.
	Additionally we distinguish a cell \(c_r\) and a cell \(c_b\) of \(\cH'\) which we assume to contain the drawings of all edges incident to \(r\) and \(b\) respectively in planarized hypothetical solution.
	Moreover we are given an initial delimiter \(\omega\) which is a Jordan curve containing \(r\) and \(b\), which we are guaranteed will not be crossed by a drawing of an~edge inserted in any hypothetical solution.
	
	Our dynamic program will consider the planarization \(\cH'^\times\) of \(\cH'\).
	To distinguish between edges that we no longer or that we still can cross, we mark edges in \(H'^\times\) as \emph{uncrossable} or \emph{crossable} depending on whether they were subdivided in the process of planarization, i.e.\ they were crossed in \(\cH'\), or not.
	
	Formally the problem we aim to solve can be expressed in a self-contained way as:
	
	
	\begin{mdframed}\label{prob:tunnel}
		\textsc{Deliminated 2-Vertex Routing}\\
		{\itshape Instance:} \(\III = (\tilde{\cG}, r, b, F_r, F_b, R, B, \text{crossable}, \omega)\); where \(\tilde{\cG}\) is a plane graph; \(r, b \in V(\tilde{G})\); \(F_r\) is a face of \(\tilde{\cG}\) containing \(r\), and \(F_b\) is a face of \(\tilde{\cG}\) containing \(b\); \(R, B \subseteq V(\tilde{G}) \setminus \{r,b\}\); \(\text{crossable} : E(\tilde{G}) \to \{\top, \bot\}\) and \(\omega\) is a Jordan curve in \(\mathbb{R}^2\) with \(r, b \in \omega\).\\
		{\itshape Task:}  Find a 1-planar extension of $\tilde{\cG}$ which adds the edges from $r$ to each vertex in $R$ and from $b$ to each vertex in $B$ in which the drawings of these edges do not cross (1) each other, (2) $\omega$, and (3) \(\tilde{\cG}(e)\) for which \(\text{crossable}(e) = \bot\); or, correctly determine no such extension exists. 
	\end{mdframed}
	
	\smallskip
	\noindent \textbf{Instantiation.} \quad
	We will consider \textsc{Deliminated 2-Vertex Routing} for the instance \((\cH'^\times, r, b, F_r, F_b, N_{(V(G), E(G) \setminus E(H'))}(r), N_{(V(G), E(G) \setminus E(H'))}(b), \text{crossable}, \omega)\), where
	\(F_r\) and \(F_b\) are the faces of \(\cH'^\times\) corresponding to \(c_r\) and \(c_b\) respectively, and \(\text{crossable}(e) = \top\) if and only if \(e\) is crossable.
	\smallskip
	
%
	
	Note that since $r,b \in \omega$, \(\omega\) can be partitioned into two simple curves between $r$ and $b$; one of which (chosen arbitrarily) we call $\omega_\emph{start}$ and the other $\omega_\emph{end}$. Moreover, the boundary of $F_R$ (resp. $F_B$) intersect both $\omega_\emph{start}$ and $\omega_\emph{end}$ in a single simple curve with one endpoint $r$ (resp. $b$).   
	
	Let us now traverse the boundary of $F_R$ that is disjoint from $\omega$ from $\omega_\emph{start}$ to $\omega_\emph{end}$, and mark the vertices we visit on this walk via the increasing sequence $r_1,r_2,\dots r_\emph{end}$. (The vertices $b_1,b_2,\dots b_\emph{end}$ will be defined analogously, but for $b$.) Observe that it may happen that a vertex is marked as $r_i$, $r_j$ and $b_q$, for three possibly distinct integers $i,j,q$.
	
	\smallskip
	
	We note that $\III$ already contains some initial information about which edges may in fact enter which face. We say that a face of $\III$ is \emph{red} if it shares at least one crossable edge with $F_R$ but none with $F_B$, \emph{blue} if it shares at least one crossable edge with $F_B$ but none with $F_R$, and \emph{purple} if it shares at least one crossable edge with both $F_B$ and $F_R$. Moreover, we call edges that lie on the boundary of $F_B$ as well as $F_R$ \emph{green}. One property that we will need later is that, when traversing along the boundary of $F_R$ and $F_B$ from a certain ``point'' the occurrence of green edges and purple faces is ``synchronized''---we formalize this below.
	
	\begin{observation}
		Let $r_i$, $b_{i'}$ be either the same vertex, or be the highest-index vertices on the boundary of some purple face $f_1$. 
		
		For every vertex $r_{j}=b_{j'}$ (where $j>i$ and $j'>i'$), it holds that the unique $r_i$-$r_j$ walk along the boundary of $F_R$ does not contain any vertex on the boundary of a purple face if and only if the unique $b_{i'}$-$b_{j'}$  walk along the boundary of $F_B$ does not contain any vertex on the boundary of a purple face.
		
		Similarly, if $r_j$ and $b_{j'}$ are the lowest-index vertices on the boundary of some other purple face $f_2$ (where $j>i$ and $j'>i'$), it holds that the unique $r_i$-$r_j$ walk along the boundary of $F_R$ does not contain any vertex on the boundary of any other purple face if and only if the unique $b_{i'}$-$b_{j'}$  walk along the boundary of $F_B$ does not contain any vertex on the boundary of any other purple face.
	\end{observation}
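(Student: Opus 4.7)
The plan is to exploit the fact that the boundaries of $F_R$ and $F_B$, together with the delimiter $\omega$ and any shared green edges, decompose the ``strip'' between $F_R$ and $F_B$ into a linear sequence of subregions whose structure is visible simultaneously from both sides. Since $F_R$ and $F_B$ are distinct faces of the plane graph $\cH'^\times$ and both lie inside $\omega$, by planarity the part of the plane that sits between them along $\omega$ is a topological disk (or a union of disks, delimited by green edges shared between $F_R$ and $F_B$), and every purple face lives entirely inside this region and has crossable edges on both bounding walks by definition.

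First, I would make the ``between'' region precise: starting at the green segment (or shared vertex) that realizes the assumed $r_i = b_{i'}$ or the boundary of $f_1$, and ending at either $r_j = b_{j'}$ or at $f_2$, I concatenate the $r_i$-to-$r_j$ walk along $\partial F_R$ with the $b_{j'}$-to-$b_{i'}$ walk along $\partial F_B$ (traversed in the reverse direction), closing the loop at the two endpoints either through the shared vertex or along the boundary of the purple face $f_1$ (resp.\ $f_2$). By planarity and the fact that $F_R$ and $F_B$ are openly disjoint, this concatenation is a Jordan curve $\gamma$ enclosing an open region $D$.

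Second, I would show that a vertex on $\partial F_R$ lies on the boundary of some purple face $f^\star \neq f_1, f_2$ encountered strictly between $r_i$ and $r_j$ if and only if a corresponding vertex on $\partial F_B$ lies on the boundary of the same $f^\star$ strictly between $b_{i'}$ and $b_{j'}$. For the forward direction, $f^\star$ must lie in $D$ (its edge on $\partial F_R$ is on the arc bounding $D$ and $f^\star$ is not $F_R$ or $F_B$); since $f^\star$ is purple it also has a crossable edge on $\partial F_B$, and by planarity within $D$ that edge has to appear on the $b_{i'}$-to-$b_{j'}$ walk. The reverse direction is symmetric. Taking the contrapositive gives exactly the claimed equivalence for both the ``same vertex $r_j = b_{j'}$'' case and the ``bounding purple face $f_2$'' case.

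The main obstacle I anticipate is the bookkeeping at the endpoints: when $r_i = b_{i'}$ is a single vertex, the region $D$ is a genuine disk, but when $f_1$ (and/or $f_2$) is the endpoint, the closing piece of $\gamma$ passes through $f_1$ (resp.\ $f_2$) itself, and one must verify that no \emph{other} purple face can partially lie in $D$ without also appearing on both boundary walks. This is handled by observing that $r_i$, $b_{i'}$ were chosen as the \emph{highest-index} vertices on $\partial f_1$ and $r_j$, $b_{j'}$ as the \emph{lowest-index} vertices on $\partial f_2$, which guarantees that $f_1$ and $f_2$ contribute only their bounding arcs to $\gamma$ and do not carve away interior territory from $D$ along the walks under consideration.
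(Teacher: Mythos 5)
The paper states this observation without any explicit proof (it is offered as an immediate consequence of planarity), so there is no written argument to compare against; your corridor construction is precisely the intuition the authors appeal to, and its second half is essentially sound: once a purple face $f^\star \neq f_1,f_2$ is known to meet the region $D$, it cannot cross $\gamma$ (faces only cross $\gamma$ where $\gamma$ leaves the graph, i.e.\ on the closing arcs through $f_1$, $f_2$ or at a shared vertex), hence $f^\star$ lies wholly in $D$; its crossable edge shared with $F_B$ then lies on $\gamma$, and since the side opposite $f^\star$ along that edge is $F_B$ it cannot sit on the $F_R$-window arc, so it lies on the $b_{i'}$--$b_{j'}$ walk, producing the required vertex. (Minor caveat: the window walks need not be simple and green edges or shared vertices strictly inside the window pinch the corridor, so $\gamma$ is in general not a single Jordan curve; the ``union of disks'' refinement you mention at the outset, or routing the two long arcs slightly inside $F_R$ and $F_B$, fixes this and the argument goes through disk by disk.)

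The genuine gap is the step that places $f^\star$ in $D$ in the first place. You justify ``$f^\star$ must lie in $D$'' by saying ``its edge on $\partial F_R$ is on the arc bounding $D$'', but the hypothesis only supplies a \emph{vertex} of the $r_i$--$r_j$ walk lying on the boundary of $f^\star$; the crossable edge that makes $f^\star$ purple may be shared with $F_R$ at a position outside the window, before $r_i$ or after $r_j$. The paper explicitly allows boundary walks to revisit vertices (``it may happen that a vertex is marked as $r_i$, $r_j$ and $b_q$''), and at such a repeated (cut-vertex--like) vertex a purple face can be incident to a vertex of the window walk only through a wedge lying entirely outside $D$. In that situation your forward direction produces no purple-face vertex on the $b_{i'}$--$b_{j'}$ walk, so the implication is not established as written. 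To close the gap you must either argue at the level of the particular passage of the walk through that vertex (showing the relevant face wedges along the window portion of the walk lie in $\overline{D}$, or reading the statement as being about incidences along the walk rather than bare vertices), or explicitly rule out these degenerate repeated-vertex incidences; your endpoint bookkeeping via the highest-/lowest-index choices for $f_1$ and $f_2$ handles the ends of the corridor but not this interior degeneracy.
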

	
	
	\smallskip 
	\noindent \textbf{High-level Overview and Records.} \quad
	We can now proceed with a very high-level sketch of the algorithm $\mathbb{A}$ for \textsc{Deliminated 2-Vertex Routing}. The general idea is to perform a dynamic left-to-right sweep of $\tilde{\cG}$ (more precisely, a traversal that proceeds from $r_1$ and in parallel from $b_1$ towards $\omega_\emph{end}$). The ``steps'' we use to move from one position to the next will rely on a situational combination of exhaustive branching and the network-flow subroutine described in Lemma~\ref{lem:flow}. 
	
	The intuitive reason we need to combine both of these techniques is that in some parts of our sweep, we will encounter faces where edges from both $r$ and $b$ may enter---there the interactions between these edges are too complicated to be modeled as a simple flow problem, but (as we will show) we can identify separating curves that cut our instance into easier-to-deal parts for which we have a mapping $\lambda$ that can be applied in conjunction with Lemma~\ref{lem:flow}. 
	
	We now formalize the records used by $\mathbb{A}$: a \emph{record} is a tuple $(\alpha_r,\alpha_b,T)$,
	where $\alpha_r$ is either a vertex in $G$ or an edge $e$ between $r$ and $v\in R$ that is added to $\tilde{\cG}$ and crosses a crossable edge in $\tilde{\cG}$ at most once. $\alpha_b$ is defined symmetrically. Note that we can combinatorially represent $e$ by specifying its crossing point and endpoint, since two drawings of edges with the same crossing point and endpoint have the same topological properties and hence are equivalent for the purposes of this section. $T$ is then an auxiliary element that simplifies the description of the algorithm and will be assigned one out of $5$ values (this will be detailed in the next subsection). There are also two special records, \textsc{Start} and \textsc{End}. As a consequence, it follows that the total number of records is upper-bounded by $\bigoh(|\III|^4)$.
	
	$\mathbb{A}$ will use two data structures: the set $\textsc{Reach}$ of records of $\III$ that are ``reachable'', and the set $\textsc{Proc}$ of all records  of $\III$ that have already been exhaustively ``processed''. At the beginning, we set $\textsc{Proc}=\emptyset$ and $\textsc{Reach}=\{\textsc{Start}\}$. If, at any stage, $\mathbb{A}$ reaches the situation where $\textsc{Proc}=\textsc{Reach}$ then we let $\mathbb{A}$ output ``NO''. On the other hand, if $\mathbb{A}$ adds $\textsc{End}$ to \textsc{Reach}, then it ascertains that $\III$ is a YES-instance (and outputs a solution that can be computed via standard backtracking along the successful run of the dynamic algorithm).
	
	\smallskip 
	\noindent \textbf{Record Types and Delimiters.} \quad
	The type of a record (stored in $T$) intuitively represents which out of $5$ ``cases'' the record encodes. Here, we provide a formal description of each such case. We note that for each case, it will be easy to verify whether it is compatible with a certain choice of $\alpha_r$ and $\alpha_b$---if this is not the case, we will never consider such a ``malformed'' combination $(\alpha_r,\alpha_b,T)$ in our branching and for our records.
	
	In our description, we will also define a \emph{delimiter} $D$ for each record type---this is a simple curve from $r$ to $b$ that separates the instance into two subinstances and behaves similarly as the two $r$-$b$ curves that make up $\omega$. Intuitively, in a ``correct'' branch we will later be able to assume that 
	\begin{enumerate}
		\item the delimiter splits a hypothetical solution into edges that are drawn only between $\omega_\emph{start}$ and $D$ and edges that are drawn only between $D$ and $\omega_\emph{end}$, and
		\item we have an assignment $\lambda$ that maps all faces between $D$ and the ``previous'' delimiter $D$ (or $\omega_\emph{start}$, at the very start of the algorithm) to $\{R,B\}$.
	\end{enumerate}
	
	We now describe the five types of our records. Note that all descriptions are given from the ``perspective'' of $r$, and symmetric cases also exist for $b$.
	
	\begin{figure}
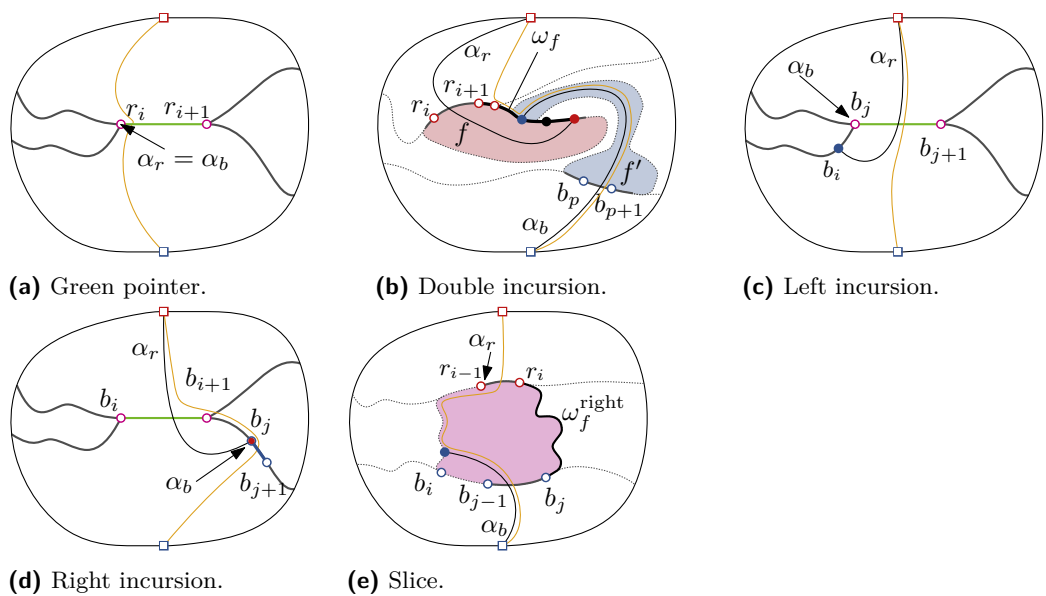

		\begin{subfigure}{.31\textwidth}
			\includegraphics[page=2]{figures/dynprog.pdf}
			\caption{Green pointer.}
			\label{fig:records:case1}
		\end{subfigure}
		\hfill
		\begin{subfigure}{.31\textwidth}
			\includegraphics[page=3]{figures/dynprog.pdf}
			\caption{Double incursion.}
			\label{fig:records:case2}
		\end{subfigure}	
		\hfill
		\begin{subfigure}{.31\textwidth}
			\includegraphics[page=4]{figures/dynprog.pdf}
			\caption{Left incursion.}
			\label{fig:records:case3}
		\end{subfigure}
		
		\begin{subfigure}{.31\textwidth}
			\includegraphics[page=5]{figures/dynprog.pdf}
			\caption{Right incursion.}
			\label{fig:records:case4}
		\end{subfigure}	
		\begin{subfigure}{.31\textwidth}
			\includegraphics[page=6]{figures/dynprog.pdf}
			\caption{Slice.}
			\label{fig:records:case5}
		\end{subfigure}
		
		\caption{The five types of records for the dynamic program. The orange curves are the delimiters computed in each case. $ r $ and $ b $ are the rectangular vertices, colored disks are the sets $ B $ and $ R $, while white disks are the vertices $ r_i $ and $ b_i $ depending on their border-color.}
		\label{fig:records}
	\end{figure}
	
	\smallskip
	\noindent \textbf{1. Green Pointer.} $\alpha_r=\alpha_b$ are the same vertex $r_i$, and that vertex is incident to a green edge $r_ir_{i+1}$.
	
	\noindent \emph{Delimiter:} The delimiter in this case is a simple curve that starts in $r$, crosses through $r_ir_{i+1}$, and then proceeds to $b$.
	
	\noindent \emph{Intuition:} This is the simplest type of our records, and is only used in a few boundary cases to signify that ``nothing noteworthy happened until here''. When using this record as well as any other record, we will also be following the assumption that no edge of the solution crosses the delimiter.
	
	\smallskip
	\noindent \textbf{2. Double Incursion.} $\alpha_r$ is an edge which crosses an edge $r_ir_{i+1}$ into some face $f$ (which may, but need not, be $F_B$); let $\omega_f$ be the walk around the boundary of $f$ in the direction $r_i\rightarrow r_{i+1}$ that ends at the endpoint of $\alpha_r$. $\alpha_b$ is an edge which ends on $\omega_f$ and crosses into a face $f'$ (which may, but need not, be $F_R$) through some edge $b_pb_{p+1}$. 
	
	\noindent \emph{Delimiter:} The delimiter in this case is a simple curve that starts in $b$, crosses through $b_pb_{p+1}$ (intuitively, this occurs ``behind'' the crossing point of $\alpha_b$), and then follows $\alpha_b$ (without crossing it) until reaching the endpoint of $\alpha_b$. At that point, it follows $\omega_f$ in reverse order (without crossing it) towards $r_i$. At some point, the delimiter must reach an edge that is on the boundary of $F_R$ (since $r_ir_{i+1}$ is such an edge; that being said, the delimiter might already be in $F_R$ when it reaches the endpoint of $\alpha_b$). When that happens, it diverges from the boundary and heads to $r$. Notice that even if the delimiter crosses through several purple faces, only at most one of these will remain purple in the area enveloped by the delimiter and $\omega_\emph{end}$.
	
	\noindent \emph{Intuition:} This is the most complicated record, since it covers a number of cases of non-trivial interactions between $\alpha_b$ and $\alpha_r$. When using it, we will be assuming that our edges have a maximality property which ensures that no edge from $b$ ends between $r_{i+1}$ and the endpoint of $\alpha_b$, and no edge from $r$ ends between the endpoint of $\alpha_r$ and $b_pb_{p+1}$. The edge $\alpha_b$ will split whatever face it crosses into one part that only be accessible to edge from $b$, and one part that will only be accessible to edges from $r$. $\alpha_r$ induces a similar splitting of the face it crosses into.

	%
	%
	
	\smallskip
	\noindent \textbf{3. Left Incursion.} $\alpha_r$ is an edge $e$ which ends in $b_i$ after crossing across a green edge $b_{j}b_{j+1}$, where $j>i$. $\alpha_b=b_{j}$.
	
	\noindent \emph{Delimiter:} The delimiter in this case is a simple curve that starts in $r$, crosses through $b_jb_{j+1}$ (intuitively, this occurs ``behind'' the crossing point of $e$), and then proceeds to $b$.
	
	\noindent \emph{Intuition:} This record represents the case where only one ``half'' of a \textbf{Double Incursion} is present. $e$ splits $F_B$ into a pseudo-red and a pseudo-blue face, while $F_R$ will remain only accessible to $r$ to the left of the delimiter.
	
	\smallskip
	\noindent \textbf{4. Right Incursion.} $\alpha_r$ is an edge $e$ which crosses an edge $b_ib_{i+1}$ and ends in $b_j$ (where $j>i$) (and the previous cases do not apply). $\alpha_b=b_j$.
	
	\noindent \emph{Delimiter:} The delimiter in this case is a simple curve that starts in $b$, crosses through $b_jb_{j+1}$ (intuitively, this occurs ``behind'' $b_j$), and then proceeds along the boundary of $F_B$ (without crossing it) towards $b_i$. Since $b_ib_{i+1}$ is a green edge, it must eventually cross into $F_R$ (it might, in fact, already be there); when that happens, it diverges from the boundary and heads to $r$.
	Notice that even if the delimiter crosses through several purple faces, only at most one of these will remain purple in the area enveloped by the delimiter and $\omega_\emph{end}$.
	
	\noindent \emph{Intuition:} This record represents the case where only one ``half'' of a \textbf{Double Incursion} is present, and can be viewed as a degenerate \textbf{Double Incursion} where $\alpha_b$ is collapsed into a single vertex.
	
	\smallskip
	\noindent \textbf{5. Slice.} $\alpha_r$ is either a vertex on the boundary of a purple face $f$, or an edge which crosses into $f$. $\alpha_b$ is either a vertex on the boundary of $f$ or an edge which crosses into $f$. Moreover, if $\omega^\emph{right}_f$ is the unique walk from the maximum-index $r_i$ (with a crossable edge to $F_R$) on the boundary of $f$ to the maximum-index $b_j$ (with a crossable edge to $F_B$) on the boundary of $f$ that avoids the lowest-index $b_i$ on the boundary of $f$, then neither $\alpha_b$ nor $\alpha_r$ end on $\omega^\emph{right}_f$.
	
	\noindent \emph{Delimiter:} Let $\omega_f$ be the extension of $\omega^\emph{right}_f$ to a closed walk around the whole boundary of $f$, starting from and ending at $r_i$ (note that the boundary of $f$ might include a ``previous delimiter'', which we then treat as part of the boundary). The delimiter starts in $b$. If $\alpha_b$ is a vertex $b_p$, it crosses through $b_{p}b_{p+1}$ and then continues along $\omega_f$. If $\alpha_b$ is an edge, let \texttt{b} be the first time an endpoint or crossing point of $\alpha_b$ is found when walking along $\omega_f$; the delimiter then crosses just to the right of \texttt{b}, continues by following $\alpha_b$ into $f$ (without crossing it), and once it returns to the boundary of $f$ it then continues along $\omega_f$.
	Once the delimiter reaches an endpoint or crossing point of $\alpha_r$ in case \(\alpha_r\) is an edge (whichever comes first); or \(\alpha_r\) itself otherwise, it proceeds analogously as for $\alpha_b$ to reach $r$.
	
%
	\noindent \emph{Intuition:} This record represents, in a unified way, a multitude of situations that may occur inside a purple face. It will be used in a context where $\alpha_r$ and $\alpha_b$ satisfy a certain maximality condition which ensures that, in the part up to the delimiter (i.e., between the previous and current delimiter), edges from $b$ only enter $f$ ``under'' $\alpha_b$, and similarly edges from $r$ only enter $f$ ``under'' $\alpha_r$.
	
	\smallskip
	\noindent \textbf{Branching Steps.}
	We can now complete the description of $\mathbb{A}$. The top-level procedure iterated by $\mathbb{A}$ is that it picks some record from $\textsc{Reach}\setminus\textsc{Proc}$ (at the beginning this is \textsc{Start}, but later it will be some $(\alpha'_r,\alpha'_b,T)$). It then computes a delimiter $D'$ for that record (in the case of \textsc{Start}, $D'$ is $\omega_\emph{start}$).
	
	Now, $ \mathbb{A} $ exhaustively branches over all possible choices of $\alpha_r$, $\alpha_b$, and $T$, and checks whether the record $(\alpha_r, \alpha_b,T)$ satisfies the conditions for the given type $T$ (i.e., whether it is not malformed). It then adds $\alpha_r$, $\alpha_b$ into the drawing, computes the delimiter $D$ for $(\alpha_r, \alpha_b,T)$, and checks that $D$ occurs ``to the right'' of $D'$
	(i.e., that $D'\setminus\{r, b\}$ is completely contained in the area delimited by $\omega_\emph{start}$ and $D$)---otherwise, it discards this choice. It then updates the colors of the faces that lie between $D'$ and $D$ (e.g., a purple face in $\III$ might become a blue face due to no longer having a crossable edge to $F_R$ between $D'$ and $D$).
	
	Finally, $\mathbb{A}$ checks that there is no ``untouched'' purple face between $D'$ and $D$: in particular, it checks that every purple face encountered on the walk along the boundary of the face that now contains $r$ intersects with either $D$ or $D'$; if this is not the case, it discards this branch.
	
	If all of the above checks were successful, $\mathbb{A}$ constructs a mapping $\lambda$ to verify whether the branch represents a valid step of the algorithm---notably, whether the edges that must lie between $D'$ and $D$ may be routed in line with the consistency criteria corresponding to the informal intuitions provided in the description of the $5$ cases. This is done as follows (assuming the cases are given w.r.t.\ $r$, as in the description of the cases; mirrored cases are treated analogously, but with swapped $F_B$ and $F_R$):
	
	\begin{enumerate}
		\item If $T=\textbf{Green Pointer}$, set $\lambda(F_R)=R$ and $\lambda(F_B)=B$;
		\item If $T=\textbf{Double Incursion}$, $\alpha_r$ splits the face $f$ it crosses into into two parts. Let $r_ir_{i+1}$ be the edge $\alpha_r$ crossed. Then the unique walk along the boundary of $f$ from $r_i$ to the endpoint of $\alpha_r$ that contains $r_{i+1}$ occurs in one part of $f$ that is separated from the other by $\alpha_r$; this part is mapped to $R$, while the other is mapped to $B$. $\alpha_b$ splits the face $f'$ it crosses into into two parts, then the part of $f'$ that lies inside the delimiter (and hence lies ``below'' $\alpha_b$) is mapped to $B$.
		\item If $T=\textbf{Left Incursion}$, $\alpha_r$ splits $F_B$ into two parts, where the part containing $b$ is mapped to $B$ and the other part to $R$. $F_R$ is mapped to $R$.
		\item If $T=\textbf{Right Incursion}$, $\alpha_r$ splits $F_B$ into two parts, where the part containing $b$ is mapped to $B$ and the other part to $R$. $F_R$ is mapped to $R$. While the boundary may pass through a purple face $f$, we will leave it unmapped (i.e., no edge can cross through it).
		\item If $T=\textbf{Slice}$, $\alpha_r$ splits the purple face $f$ it crosses into into two parts, and we map the (uniquely defined) part of $f$ that does not contain $\omega^\emph{right}_f$ to $R$. Similarly, we map the part of $f$ delimited by $\alpha_b$ that does not contain $\omega^\emph{right}_f$ to $B$. We leave the rest of $f$ unmapped, and map $F_B$ and $F_R$ to $B$ and $R$, respectively. In the case that $\alpha_r$ or $\alpha_b$ are vertices, we do not map anything to $R$ or $B$, respectively.
	\end{enumerate}
	
	In addition to the above, we naturally let $\lambda$ map red faces to $R$ and blue faces to $B$. At this point, the only purple face that still remains unresolved by $\lambda$ is the potential purple face crossed by $D'$, in the case where it is not crossed by $D$. If such a face $f'$ exists, we perform an additional branching subroutine: we loop over each potential edge $\beta_r$ from $r$ that cross into $f'$ and ends in $\omega^\emph{right}_{f'}$ (with a special case of setting $\beta_r=\emptyset$, in case no such edge exists), and the same for $\beta_b$ (from $b$). We then use these to split $f'$ into three parts, and have $\lambda$ map these as follows: the unique part delimited by $\beta_r$ that contains the maximum-index $r_i$ in $f'$ is mapped to $R$, the unique part delimited by $\beta_b$ that contains the maximum-index $b_j$ in $f'$ is mapped to $B$, and the rest of $f'$ is unmapped.
	
	Finally, with this mapping, we invoke Lemma~\ref{lem:flow} on the (sub-)instance delimited by $D$ and $D'$ after adding $\alpha_r,\alpha_b$ and potentially also $\beta_r,\beta_b$ to $\tilde{\cG}$. If the resulting network flow instance is successful, we add $(\alpha_r,\alpha_b,T)$ to \textsc{Reach}. Once all branches have been exhausted for a given record, we add it to \textsc{Proc}.
	
	We also include the special record \textsc{End} among the records we branch over, where the delimiter is simply $\omega_\emph{end}$. If $\mathbb{A}$ adds \textsc{End} to \textsc{Reach}, then \textsc{A} outputs ``YES'' (and backtracks to find a suitable drawing by inductively invoking Lemma~\ref{lem:flow} on the relevant (sub-)instances of $\III$), and if it ends up in a situation where \textsc{Proc}=\textsc{Reach} then $\mathbb{A}$ rejects $\III$. The total runtime of $\mathbb{A}$ is upper bounded by $\bigoh(|\III|^{12})$, since the number of records is at most $\bigoh(|\III|^4)$ and the branching subroutines take time at most $\bigoh(|\III|^8)$.
	
	\smallskip
	\noindent \textbf{Domination and Correctness.}
	All that remains now is to show that $\mathbb{A}$ is, in fact, correct. To do so, we will need the notion of \emph{undominated edges}, which formalizes the maximality assumption that we informally hinted at in the intuitive description of our record types.
	Given a solution $\Phi$ to $\III$, an edge $e$ starting in $r$ is \emph{dominated} if the following holds:
	\begin{itemize}
		\item $e$ crosses into $F_B$, and the face it encloses with the boundary of $F_B$ that does not contain $b$ is contained in the face enclosed by another edge from $r$ and the boundary of $F_B$ not containing $b$;
		\item $e$ crosses into a purple face $f$, ends on the part of $\omega_f$ between the maximum-index $r_i$ and the minimum-index $b_j$ on the boundary of $f$, and the face it encloses with the boundary of $f$ that does not contain $\omega_f^\emph{right}$ is contained in the analogously defined face of another edge with the same properties as $e$.
	\end{itemize}
	
	An edge is \emph{undominated} if it is not dominated. We extend the notion of domination towards vertices by considering a vertex to be a loop of arbitrarily small size (hence, every edge ``enclosing'' that vertex dominates it). The role of undominated edges (and vertices) is that they will help us identify which parts of a hypothetical solution $\Phi$ to focus on in order to find a relevant step of $\mathbb{A}$ for our correctness proof.

	
	\begin{lemma}
	\label{lem:correct}
		$\mathbb{A}$ is correct.
	\end{lemma}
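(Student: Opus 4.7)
The plan is to prove the two directions of correctness separately. For \textbf{soundness}, assume $\mathbb{A}$ reaches $\textsc{End}$. By construction there is a chain of successfully processed records $\textsc{Start}=R_0, R_1,\dots,R_\ell=\textsc{End}$ with associated delimiters $\omega_\emph{start}=D_0, D_1, \dots, D_\ell=\omega_\emph{end}$, and each transition $R_{i-1}\to R_i$ is witnessed by an application of Lemma~\ref{lem:flow} inside the strip $S_i$ between $D_{i-1}$ and $D_i$ under some mapping $\lambda_i$ (augmented, when needed, by the branched-on edges $\beta_r,\beta_b$). I would glue together the local drawings returned by Lemma~\ref{lem:flow} along their shared delimiters. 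Three things must be checked: (i) the drawings on adjacent strips agree on the separating delimiter, which holds because the edges $\alpha_r,\alpha_b,\beta_r,\beta_b$ of each record are inserted into $\tilde{\cG}$ \emph{before} the flow is invoked; (ii) no two added edges cross, since $\lambda_i$-consistency forbids an edge incident to $r$ and one incident to $b$ from meeting in the same face of each strip; and (iii) no added edge crosses $\omega$, since every added edge lies in a strip bounded by delimiters contained in $\omega$.

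For \textbf{completeness}, assume $\III$ admits a solution $\Phi$. I would inductively build a sequence of records $R_0=\textsc{Start},R_1,\dots$ that $\mathbb{A}$ adds to $\textsc{Reach}$, while maintaining a topologically deformed solution $\Phi_i$ that is $\lambda_j$-consistent in each strip $S_j$ for $j\le i$. Given the current delimiter $D_{i-1}$, I would identify the canonical next record by scanning along the boundary of the region to the right of $D_{i-1}$ for the ``first'' interaction in $\Phi_{i-1}$ involving an undominated edge. Among the still-unrouted edges of $\Phi_{i-1}$, the leftmost undominated edges $e^r,e^b$ incident to $r$ and $b$ fall into exactly one of five combinatorial configurations mirroring the five record types (Green Pointer, Double Incursion, Left Incursion, Right Incursion, Slice), thereby singling out the values of $\alpha_r,\alpha_b,T$ for $R_i$. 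Since $\mathbb{A}$ exhaustively branches over all candidate records, one of its branches proposes exactly this $R_i$.

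The \textbf{main obstacle} is showing that $\mathbb{A}$ accepts this branch, i.e.\ that the flow test returns a $\lambda_i$-consistent routing. My argument is that the restriction of $\Phi_{i-1}$ to $S_i$ is itself such a routing: by undomination, every edge of $\Phi_{i-1}$ incident to $r$ (respectively $b$) that is drawn inside $S_i$ enters only faces marked $R$ (respectively $B$) by $\lambda_i$, because a dominated edge would have to nest under $e^r$ or $e^b$ and is thus processed in an earlier strip or pushed past $D_i$ into a later strip by a topological shift analogous to the one used in the proof of Lemma~\ref{lem:shortcurve}. The existence of this concrete consistent routing implies, via the flow-to-drawing correspondence of Lemma~\ref{lem:flow}, that the networks $\theta_1,\theta_2$ on $S_i$ admit flows of the required values, so the branch is accepted and $R_i\in\textsc{Reach}$. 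The trickiest sub-case is Slice, where one additionally uses that the undominated choice of $e^r,e^b$ inside a purple face $f$ forces both of them to lie on the portion of the boundary of $f$ opposite to $\omega_f^\emph{right}$, which is exactly the invariant encoded by the Slice record and ensures that the computed delimiter $D_i$ strictly advances past $D_{i-1}$ while leaving no untouched purple face behind. Iterating until $D_i=\omega_\emph{end}$ yields $\textsc{End}\in\textsc{Reach}$, completing the induction.
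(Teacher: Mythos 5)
Your proposal is correct and follows essentially the same route as the paper: soundness by stitching together the $\lambda$-consistent routings produced by Lemma~\ref{lem:flow} along delimiters that no added edge crosses, and completeness by an induction that uses undominated edges to identify the next record (of one of the five types) whose delimiter is uncrossed by the solution $\Phi$ and for which the restriction of $\Phi$ witnesses $\lambda$-consistency, so the flow test accepts and the delimiter strictly advances toward $\omega_\emph{end}$. The only difference is one of detail: the paper carries out the full case distinction explicitly (including the purple face left open by the previous delimiter, resolved via the extra $\beta_r,\beta_b$ branching), which your sketch subsumes under the five configurations.
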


	\begin{proof}
		Assume $\mathbb{A}$ outputs ``YES''. Then one can follow the sequence of branching and records considered by $\mathbb{A}$ that leads to $\textsc{End}$ being added to \textsc{Reach}. During our branching we make sure to only consider records whose edges do not cross the delimiter of the previous record, and so an iterative application of Lemma~\ref{lem:flow} on each of our branching steps guarantees that if we follow this sequence we will end up routing all edges from $r$, $b$ to their required endpoints---i.e., with a solution to $\III$.
		
		On the other hand, assume $\III$ has a solution $\Phi$. To complete the proof, we will inductively build a sequence of records whose delimiters are not crossed by any edge in $\Phi$. This property is trivially true for the delimiter of our initial record \textsc{Start}, but in general we will simply assume we have found an arbitrary delimiter $D'$ that has this property and added it to \textsc{Reach}. It now suffices to show that $\mathbb{A}$ will be able to find (via branching) and add an additional record to its set \textsc{Reach} that pushes the delimiter further towards $\omega_\emph{end}$; since the total number of records is bounded, this would necessarily imply that $\mathbb{A}$ would terminate by adding \textsc{End} to \textsc{Reach}.
		
		We will now perform an exhaustive case distinction, where for each case we will identify a record that would be detected by branching and where the existence of $\Phi$ would guarantee that the network flow instance constructed by Lemma~\ref{lem:flow} using the corresponding $D$ and $\lambda$ (constructed as per the description of our branching steps) terminates successfully. 
		
		Let us begin by considering the case where $D'$ does not cross through any purple face. Let $r_i$ be the minimum-index vertex to the right of $D'$ which is:
		\begin{enumerate}[\text{Case} 1]
			\item the endpoint of an edge $e$ starting in $b$ that crosses into $F_R$, or
			\item the endpoint of an edge $e$ starting in $r$ that crosses into $F_B$, or
			\item incident to a green edge $e=r_{i-1}r_i$ crossed by an edge starting in $r$, or
			\item incident to a green edge $e=r_{i-1}r_i$ crossed by an edge starting in $b$, or
			\item a vertex on the boundary of a purple face $f$, or
			\item no such vertex exists.
		\end{enumerate}
		
		In Case 1 and 2, this gives rise to a \textbf{Left Incursion}. In both cases, it is easy to verify that $\Phi$ is a $\lambda$-consistent solution (and hence the branching would detect the corresponding record and add it to \textsc{Reach}, as required).
		
		In Case 3. and 4., this gives rise to a \textbf{Right Incursion} or \textbf{Double Incursion}, depending on whether there is an edge $e'$ from $b$ (or $r$, respectively) ending on the walk between the crossing point and the endpoint of $e$. If no such edge exists, we obtain a \textbf{Right Incursion}. Moreover, it is once again easy to verify that $\III$ is $\lambda$-consistent: the ``inside'' of $e$ within $F_B$ crosses into is not reachable from $b$ directly, and also not via the boundary with $F_R$ due to the non-existence of $e'$. Hence the new delimiter will not be crossed by $\Phi$ and $\Phi$ will be $\lambda$-consistent.
		
		On the other hand, if an edge $e'$ as described above exists, let us fix $e'$ to be the unique undominated edge with this property. Then the ``inside'' of $e'$ within the face it crosses into is only reachable from $b$, while the ``outside'' of $e'$ cannot be reached by $b$ anymore (due to $e'$ being undominated). The argument for $e$ is the same as above, and putting these two together we obtain that $\Phi$ is also $\lambda$-consistent and $D$ is not crossed.
		
		If Case 5. applies to $r_i$ as well as to the symmetrically constructed $b_i$, then the pair forms a (degenerate) \textbf{Slice} in $f$. Let $e$ be the unique undominated edge dominating $r_i$ (if none exists, we use $r_i$ instead), and the same for $e'$ and $b_i$. In this case, the record $(e,e',\textbf{Slice})$ would pass our branching test. Finally, in case 6. the existence of $\Phi$ guarantees that \textsc{End} would pass our branching test, and hence $\mathbb(A)$ would correctly identify a solution to $\III$.
		
		Now, consider the case where $D'$ does cross through a purple face $f$. Let $r_i$ be the minimum-index vertex to the right of $D'$ which is:
		\begin{enumerate}[\text{Case} A]
			\item the endpoint or crossing point of an edge that enters into $f$, o	r
			\item be incident to the green edge $r_ir_{i+1}$, or
			\item a vertex on the boundary of a different purple face $f'$, or
			\item no such vertex exists.
		\end{enumerate}
		
		Case C is handled analogously as Case 5 in the previous case distinction, and Cases B and D (which is itself analogous to Case 6. from before) are also simple. The by far most complicated remaining case is Case A, which has several subcases. 
		
		First, we consider what happens if both $r_ir_{i+1}$ and $b_ib_{i+1}$ are crossing points of edges $e,e'$ (respectively) which end in $\omega_f^\emph{right}$. Then, let us consider which of Case 2, Case 3, and Case 4 occurs \emph{after} $f$ (i.e., on the walks around $R_F$ and $R_B$ that proceed after leaving the last vertex on the boundary of $f$): we will use these to define our new record, say $(\alpha_r,\alpha_b,T)$. For this record, it is just as easy to argue that the delimiter is not crossed by $\Phi$ as before. As for $\lambda$, the basic mapping associated with $(\alpha_r,\alpha_b,T)$ would leave the face $f$ ``open''---that is where the additional branching rule via $\beta_r$ and $\beta_b$ comes into play. In the case where $\beta_r=e$ and $\beta_b=e'$, the obtained branch results in a mapping $\lambda$ such that $\Phi$ is $\lambda$-consistent.
		
		If the previous subcase of Case A does not occur, we distinguish the special case where an edge $e$ from $r$ crosses into $f$ via $r_ir_{i+1}$ and an edge $e'$ crosses into a face different from $f$ and ends on the walk from $r_{i+1}$ to the endpoint of $e$---in this case, the edges $e$ and $e'$ represent a \textbf{Double Incursion}. Crucially, since $e$ does not end on $\omega_f^\emph{right}$ (as that was handled by the previous subcase), $e'$ must cross into the face $F_R$. As before, we will assume that $e'$ is undominated. The resulting $\lambda$ then splits $F_R$ and $f$ in a way that is forced by $e$ (up to $D$, no red edge can exist in $f$ outside of the area delimited by $e$) and $e'$ (up to $D$, no blue edge can exist in $F_R$ outside of the area delimited by $e'$).
		
		Finally, if none of the previous subcases of Case A occur, then we are left with a situation that will boil down to a \textbf{Slice}---the last obstacle remaining is to identify the relevant edges. In particular, let $e$ be the endpoint of the edge that crosses near or ends at $r_i$, and let $e'$ be the analogously defined edge for the minimum-index $b_j$ found on $f$. If both edges have the same endpoint, then it is easy to see that they must be the same edge, and that will be the (single) edge used to define our \textbf{Slice}. If, w.l.o.g., $e$ started in $r$ and $e'$ in $b$, then we make the following check: does the part of $f$ split by $e$ not containing $\omega_f^\emph{right}$ contain the part of $f$ split by $e'$ not containing $\omega_f^\emph{right}$? If this is not the case in either direction, then the new \textbf{Slice} is defined by both edges; otherwise, it is defined by the single edge that delimits the smaller part of $f$ not containing $\omega_f^\emph{right}$. In all of these cases, it is easy to verify that $D$ indeed is not crossed by $\Phi$ and that $\Phi$ is a $\lambda$-consistent solution.
	\end{proof}
	
As a direct consequence of Lemma~\ref{lem:correct} and Lemma~\ref{lem:Qs}.
	\xpkappaonep
}

\section{Concluding Remarks}


In this paper, we initiated the study of the problem of extending partial 1-planar and IC-planar drawings by providing several parameterized algorithms that target cases where only a few edges and/or vertices are missing from the graph. Our results follow up on previous seminal work on extending planar drawings, but the techniques introduced and used here are fundamentally different~\cite{adfjkp-tppeg-15}.
The by far most prominent question left open in our work concerns the (not only parameterized, but also classical) complexity of \textsc{1-Planar Extension} w.r.t.\ $\kappa$. In particular, can one show that the problem is, at least, polynomial-time tractable for fixed values of $\kappa$? While the results presented in Section~\ref{sec:2vtcs} are a promising start in this direction, it seems that new ideas are needed to push beyond the two-vertex case.



\iflong{Another interesting direction for future research would be to improve the running times of the algorithms and establish tight lower bounds tied to the Exponential Time Hypothesis.} Follow-up work may also focus on extending\iflong{ partial $t$-planar drawings for $t\ge 2$ or} other types of beyond planar drawings~\cite{dlm-sgdbp-19}.

\bibliographystyle{plainurl}
\bibliography{references}
	
\end{document}